\documentclass[lettersize,journal]{IEEEtran}
\usepackage{cite}
\usepackage{amsmath,amssymb,amsfonts,amsthm}
\usepackage{algorithmic}
\usepackage{graphicx}
\usepackage{textcomp}
\usepackage{xcolor}
\usepackage{amsmath,amsfonts}
\usepackage{algorithmic}
\usepackage{algorithm}
\usepackage{array}
\usepackage[caption=false,font=normalsize,labelfont=sf,textfont=sf]{subfig}
\usepackage{textcomp}
\usepackage{stfloats}
\usepackage{url}
\usepackage{verbatim}
\usepackage[justification=centering]{caption}
\usepackage{bm} 
\usepackage{xcolor}
\usepackage{booktabs}
\usepackage{enumitem}
\usepackage{hyperref}
\usepackage{siunitx}
\hypersetup{colorlinks=false}
\usepackage[capitalize,noabbrev]{cleveref}
\usepackage{tikz, pgfplots}
\usetikzlibrary{plotmarks,spy,angles,quotes}
\pgfplotsset{compat=newest}
\usepackage{caption}
\captionsetup{justification=raggedright,singlelinecheck=false}
\usepackage{cuted}
\usepackage{nccmath}
\usetikzlibrary{calc}

\DeclareMathOperator{\tr}{tr}
\DeclareMathOperator{\diag}{diag}

\DeclareMathOperator{\supp}{supp}

\DeclareMathOperator{\sign}{sign}
\DeclareMathOperator{\sinc}{sinc}
\newcommand{\T}{{\sf T}}
\renewcommand{\H}{{\sf H}}

\newcommand{\RR}{{\mathbb{R}}}
\newcommand{\CC}{{\mathbb{C}}}
\newcommand{\EE}{{\mathbb{E}}}
\newcommand{\NN}{{\mathcal{N}}}
\newcommand{\A}{{\mathbf{A}}}
\newcommand{\B}{{\mathbf{B}}}
\newcommand{\C}{{\mathbf{C}}}
\newcommand{\J}{{\mathbf{J}}}
\newcommand{\M}{{\mathbf{M}}}
\renewcommand{\S}{{\mathbf{S}}}
\renewcommand{\P}{{\mathbf{P}}}
\newcommand{\U}{{\mathbf{U}}}
\newcommand{\Q}{{\mathbf{Q}}}
\newcommand{\X}{{\mathbf{X}}}
\newcommand{\Z}{{\mathbf{Z}}}
\newcommand{\bPhi}{{\boldsymbol{\Phi}}}
\newcommand{\ee}{\mathbf{e}}
\newcommand{\uu}{\mathbf{u}}
\newcommand{\vv}{\mathbf{v}}

\newcommand{\s}{\mathbf{s}}
\newcommand{\x}{\mathbf{x}}

\newcommand{\z}{\mathbf{z}}
\newcommand{\zo}{\mathbf{0}}
\newcommand{\I}{\mathbf{I}}
\newcommand{\V}{\mathbf{V}}

\newcommand{\bLambda}{\boldsymbol{\Lambda}}
\newcommand{\bTheta}{\boldsymbol{\Theta}}

\definecolor{RED}{rgb}{0.7,0,0}
\definecolor{BLUE}{rgb}{0,0,0.69}
\definecolor{GREEN}{rgb}{0,0.6,0}
\definecolor{PURPLE}{rgb}{0.69,0,0.8}

\newcommand{\RED}{\color[rgb]{0.70,0,0}}
\newcommand{\BLUE}{\color[rgb]{0,0,0.69}}

\newtheorem{Definition}{Definition}
\newtheorem{Assumption}{Assumption}
\newtheorem{Theorem}{Theorem}
\newtheorem{Corollary}{Corollary}
\newtheorem{Proposition}{Proposition}
\newtheorem{Lemma}{Lemma}
\newtheorem{Remark}{Remark}
\newtheorem{Example}{Example}

\let\texdisplaystyle\displaystyle
\def\displaytotextstyle{\textstyle\let\displaystyle\texdisplaystyle}

\newenvironment{talign*}
 {\let\displaystyle\displaytotextstyle\csname align*\endcsname}
 {\endalign}

\definecolor{blue1}{RGB}{0,114,189}
\definecolor{orange1}{RGB}{217,83,25}
\definecolor{yellow1}{RGB}{237,177,32}
\definecolor{purple1}{RGB}{126,47,142}
\definecolor{green1}{RGB}{119,172,48}
\definecolor{red1}{RGB}{162,20,47}

\hyphenation{op-tical net-works semi-conduc-tor IEEE-Xplore}
\def\BibTeX{{\rm B\kern-.05em{\sc i\kern-.025em b}\kern-.08em
    T\kern-.1667em\lower.7ex\hbox{E}\kern-.125emX}}
\usepackage{balance}
\begin{document}
\title{A Large-Dimensional Analysis of ESPRIT DoA Estimation: Inconsistency and a Correction via RMT}

\author{Zhengyu~Wang,~\IEEEmembership{Student Member,~IEEE}, 
Wei~Yang,
Xiaoyi~Mai, 
Zenan~Ling,~\IEEEmembership{Member,~IEEE},\\
Zhenyu~Liao,~\IEEEmembership{Senior Member,~IEEE},
Robert C.~Qiu,~\IEEEmembership{Fellow,~IEEE}
\thanks{Z.~Wang, W.~Yang, Z.~Ling, Z~Liao, and R.~C.~Qiu are with the School of Electronic Information and Communications (EIC), Huazhong University of Science and Technology (HUST), Wuhan 430074, China. 
X.~Mai is with the Institut de Mathématiques de Toulouse (IMT), University of Toulouse-Jean Jaurès, Toulouse, France.
\textit{Corresponding author: Zhenyu Liao} (email: \href{mailto:zhenyu_liao@hust.edu.cn}{zhenyu\_liao@hust.edu.cn}).}%
\thanks{Part of this work was presented at the IEEE 32nd European Signal Processing Conference (EUSIPCO 2024), Lyon, France.}%
\thanks{Z.~Liao is supported by the National Natural Science Foundation of China (via NSFC-12571561 and NSFC-62206101) and the Fundamental Research Support Program of HUST (2025BRSXB0004).
R.\@ C.\@ Qiu is supported by the National Natural Science Foundation of China (via fund NSFC-12141107) and the Interdisciplinary Research Program of HUST (2023JCYJ012).
Z.~Ling is supported by the National Natural Science Foundation of China (via NSFC-62406119) and the Natural Science Foundation of Hubei Province (2024AFB074).
}
}

\maketitle

\begin{abstract}
In this paper, we perform asymptotic analyses of the widely used ESPRIT direction-of-arrival (DoA) estimator for large arrays, where the array size $N$ and the number of snapshots $T$ grow to infinity at the same pace.
In this large-dimensional regime, the sample covariance matrix (SCM) is known to be a poor eigenspectral estimator of the population covariance.
We show that the classical ESPRIT algorithm, that relies on the SCM, and as a consequence of the large-dimensional inconsistency of the SCM, produces \emph{inconsistent} DoA estimates as $N,T \to \infty$ with $N/T \to c \in (0,\infty)$, for both widely-~and~closely-spaced DoAs.
Leveraging tools from random matrix theory (RMT), we propose an improved G-ESPRIT method and prove its \emph{consistency} in the same large-dimensional setting. 
From a technical perspective, we derive a novel bound on the eigenvalue differences between two potentially non-Hermitian matrices, which may be of independent interest.
Numerical simulations are provided to corroborate our theoretical findings.
\end{abstract}

\begin{IEEEkeywords}
Array signal processing, DoA estimation, ESPRIT, high-dimensional statistics, random matrix theory, sample covariance matrix, subspace method.
\end{IEEEkeywords}

\section{Introduction}
\label{sec:intro}

Direction-of-arrival (DoA) estimation is a fundamental task in array signal processing with various applications, ranging from radar, sonar, and wireless communications, to medical imaging. 
Accurate DoA estimation enables systems to localize sources, optimize resource allocation, and enhance signal quality in complex environments. 
Among the numerous methods developed for DoA estimation, subspace-based methods such as MUSIC and its variants, as well as the ESPRIT approach, have gained significant popularity~\cite{schmidt1986multiple,rao1989performance,1457851}. 
Their effectiveness and computational efficiency (compared to, e.g., maximum likelihood estimators) have made them mainstays in modern array signal processing.

Subspace-based methods such as MUSIC and ESPRIT exploit the inherent orthogonality between the signal and noise subspace of the \emph{population covariance} to estimate DoAs from noisy observations. 
Since the population covariance is practically inaccessible, eigenspectral decomposition is performed on the sample covariance matrix (SCM) of the received signals to extract DoA information.
When the number of snapshots $T$ is \emph{much larger} than the array size $N$, the SCM provides an accurate estimate of the population covariance, and the statistical performance of, e.g., MUSIC~\cite{stoica1989music}, total least squares ESPRIT~\cite{ottersten1991performance}, and least squares ESPRIT~\cite{viberg1991sensor} has been well-studied in prior work.

However, these subspace-based methods are known to suffer from the so-called ``threshold effect,'' where their performance degrades drastically when either the signal-to-noise ratio (SNR) or the number of snapshots falls below a certain threshold~\cite{stoica1995resolution}. 
This limitation has sparked renewed interest in analyzing these methods in the regime of \emph{large arrays and limited snapshots}, where both $N$ and $T$ grow to infinity at the same pace, i.e.,
\begin{align*}
    N,T \to \infty, \quad N/T\to c \in (0,\infty),
\end{align*}
a setting that models scenarios where large sensing arrays acquire data within relatively short sampling durations. 

With the progress of random matrix theory (RMT) over the past decade, many methods in statistics, signal processing, and machine learning have been revisited in the large-dimensional regime, leading to novel insights and improved algorithms tailored for large-dimensional data~\cite{paul2014Random,couillet2022RMT4ML}.
A key takeaway from these developments is that when $T$ is \emph{not much larger} than $N$, the SCM becomes a poor eigenspectral estimator of the population covariance (see~\cite{couillet2022RMT4ML} and \Cref{subsec:review_rmt} for a brief review). 
In such cases, the sample eigenvectors/eigenspaces are \emph{biased} from their population counterparts.
Consequently, subspace methods \emph{cannot} be expected to provide consistent estimates of the true DoAs in scenarios where $N,T$ are both large and comparable.
Notably, the empirically observed ``threshold effect'' of these methods is a direct manifestation of the \emph{phase transition} behavior inherent in the large-dimensional SCM eigenspace.

In the case of MUSIC, it has been shown in~\cite{Vallet2015Performance} that despite the eigenspectral \emph{inconsistency} of the SCM in the large $N,T$ regime, MUSIC still provides consistent DoA estimates in widely-spaced DoA scenarios (see \Cref{ass:widely_spaced} for a precise definition), when above the phase transition threshold.
However, for closely-spaced sources (see \Cref{ass:closely_spaced}), where the separation between angles is of the order $O(N^{-1})$, the classical MUSIC approach \emph{fails to deliver accurate estimates}.
To address this limitation, a modified MUSIC algorithm, G-MUSIC, was introduced by ``correcting'' the sample signal subspace using RMT.
G-MUSIC is guaranteed to provide $N$-consistent\footnote{An estimator $\hat \theta$ is called $N$-consistent for $\theta$ if $N(\hat \theta-\theta) \to 0$ almost surely as $N,T \to \infty$, see \Cref{sec:inconsist} for a detailed discussion.} DoA estimates in the large $N,T$ regime~\cite{mestre2008modified,vallet2012improved,hachem2012large,Vallet2015Performance}, making it particularly valuable in closely-spaced DoA scenarios.
Beyond these first-order analyses of MUSIC and G-MUSIC, further research has explored the asymptotic properties of their Mean Square Errors~(MSEs), by establishing the second-order behavior of these estimators through Central Limit Theorem~(CLT), see for example~\cite{Vallet2012ACF,Mestre2011}.

\medskip
In this paper, we analyze the equally popular subspace-based DoA method ESPRIT~\cite{paulraj1986subspace} (reviewed in \Cref{subsec:models}) in this \emph{large array and limited snapshot} regime, where ESPRIT is also observed to suffer from the ``threshold effect.'' 
While ESPRIT, like MUSIC, also relies on the SCM, it exploits the rotational invariance property of the signal subspaces between different subarrays using a (more intricate) least squares approach.
This distinction makes prior analyses, such as those in \cite{mestre2008modified,vallet2012improved,hachem2012large,Vallet2015Performance} for MUSIC through eigenvector projections, not directly applicable.
Although it has been empirically observed that ESPRIT outperforms MUSIC in certain cases~\cite{roy1987Comparative} but not in others~\cite{lavate2010Performance}, its theoretical analysis remains an \emph{open problem}~\cite{vallet2017Performance} due to its mathematically involved nature compared to, e.g., MUSIC and G-MUSIC.

Our contribution is summarized as follows.
\begin{enumerate}
	\item We prove in \Cref{theo:main} that classical ESPRIT algorithm yields \emph{inconsistent} DoA estimates in the large-dimensional regime as $N,T\to \infty$ with $N/T \to c \in (0,\infty)$, except in the special case of uncorrelated and widely-spaced sources (\Cref{rem:wide_uncorrelated}). 
    \item We propose a novel G-ESPRIT method (\Cref{alg:G-ESPRIT}) and demonstrate in \Cref{prop:gesprit} that it provides consistent DoA estimates in the same regime, for both widely-~and~closely-spaced DoAs.
    \item As part of our analysis, we establish a novel bound on the eigenvalue differences between two non-Hermitian matrices in \Cref{theo:consiseigen} of \Cref{subsec:proof_framework}, which is of independent technical interest.
\end{enumerate}

\textit{Organization of the paper:}
The remainder of this paper is organized as follows.
In \Cref{sec:models}, we present the system model and review the classical ESPRIT algorithm. 
Additionally, we summarize some useful results from RMT on the eigenspectral behavior of the SCM in the large $N, T$ regime.
In \Cref{sec:inconsist}, we demonstrate the \emph{inconsistency} of classical ESPRIT in large-dimensional regime and provide related discussions.
In \Cref{sec:G-ESPRIT}, we introduce the G-ESPRIT method, which addresses the large-dimensional inconsistency of classical ESPRIT, and is shown to provide consistent estimates of both widely-~and~closely-spaced DoAs.
Simulation results supporting our theoretical findings are presented in \Cref{sec:simu}.
Finally, \Cref{sec:conclusion} concludes the paper.

\medskip

\textit{Notions:} Upper-case and lower-case boldface letters denote matrices and column vectors, respectively.
The operators $(\cdot)^\T$, $(\cdot)^{*}$, and $(\cdot)^\H$ denote the transpose, conjugate, and Hermitian transpose, respectively.
We denote $\RR$ the set of real numbers and $\CC$ the set of imaginary numbers, respectively.
For a matrix $\A$, we denote by $\tr(\A)$ and $\det(\A)$ its trace and determinant.
We use $\| \cdot \|$ to denote the Euclidean norm for vectors and spectral/operator norm for matrices.
$ \diag(\textbf{a}) $ returns a diagonal matrix with the elements in $\textbf{a}$ as its main diagonal entries.
Additionally, $\textbf{I}_n$ denotes $n \times n$ identity matrix and $\odot$ represents the Hadamard matrix product.
For a complex number $z$, we use $\Re[z]$, $\Im[z]$, and $\arg(z)$ to represent the real and imaginary parts, as well as the argument of $z$ respectively.
For a random variable $x$, $\EE[x]$ denotes its expectation.
We use $\NN(m,\sigma^2)$ for the real Gaussian distribution with mean $m$ and variance $\sigma^2$, and say $z$ follows a complex circular Gaussian distribution and denote $z \sim \mathcal{CN}(m,\sigma^2)$ if $z = x + \imath y$ with independent $x,y$ such that $x \sim \NN(\Re[m], \sigma^2/2)$ and $y \sim \NN(\Im[m], \sigma^2/2)$.
We use $O(\cdot)$ and $o(\cdot)$ notations as in standard asymptotic statistics~\cite{van2000asymptotic}.

\section{System Model and Preliminaries}
\label{sec:models}
In this section, we present the system model under study in \Cref{subsec:models} and revisit the ESPRIT algorithm in \Cref{subsec:ESPRIT}.
In \Cref{subsec:review_rmt}, we review some results on the eigenspectra of large sample covariance matrices as $N,T \to \infty$ at the same pace, to be used in the large-dimensional analysis of ESPRIT.
\subsection{System Model}
\label{subsec:models}

\begin{figure}[t]
    \centering
    \begin{tikzpicture}[scale=1.0]
    \def\numElements{2}
    \def\elementSpacing{2*\columnwidth/16}
    \def\AnteHeight{1*\columnwidth/16}
    \def\AnteBias{6*\columnwidth/16}
    \draw[line width=1pt] (0,0) -- (14*\columnwidth/16,0);
    \foreach \i in {1,...,\numElements} {
        \draw[line width=1pt] (\i*\elementSpacing,0) -- (\i*\elementSpacing,\AnteHeight);
        \draw[line width=1pt] 
        (\i*\elementSpacing,\AnteHeight) -- (\i*\elementSpacing+\AnteHeight/4,
        \AnteHeight+\AnteHeight/2 *0.866) -- 
        (\i*\elementSpacing-\AnteHeight/4,\AnteHeight+\AnteHeight/2*0.866)
         -- cycle ;
    }
    \foreach \i in {1,...,6}{
    \fill(2.3*\elementSpacing+\i*0.2*\elementSpacing,\AnteHeight*0.75) circle(1pt);
    }
    \foreach \i in {1,...,3} {
        \draw[line width=1pt] (\i*\elementSpacing+\AnteBias,0) -- (\i*\elementSpacing+\AnteBias,\AnteHeight);
        \draw[line width=1pt] 
        (\i*\elementSpacing+\AnteBias,\AnteHeight) -- (\i*\elementSpacing+\AnteHeight/4+\AnteBias,
        \AnteHeight+\AnteHeight/2 *0.866) -- 
        (\i*\elementSpacing-\AnteHeight/4+\AnteBias
        ,\AnteHeight+\AnteHeight/2*0.866)
         -- cycle ;
    }
    \foreach \i in {1,...,\numElements} {
        \node[below] at (\i*\elementSpacing,0) {\scriptsize$x_{\i}(t)$};
    }
    \node[below] at (1*\elementSpacing+\AnteBias,0) {\scriptsize$x_{N-2}(t)$};
    \node[below] at (2.2*\elementSpacing+\AnteBias,0) {\scriptsize$x_{N-1}(t)$};
    \node[below] at (3.2*\elementSpacing+\AnteBias,0) {\scriptsize$x_{N}(t)$};
\coordinate (A) at (\elementSpacing+0.4*\elementSpacing,\AnteHeight+\AnteHeight/2*0.866+1.5*\AnteHeight);
\coordinate (B) at(\elementSpacing,\AnteHeight+\AnteHeight/2*0.866);
\coordinate (C) at 
(\elementSpacing,\AnteHeight+\AnteHeight/2*0.866+1.5*\AnteHeight);

\draw[dashed,line width = 1pt] (B) -- (C);
\draw[line width = 1pt,->]  (A) --(B);

\pic[draw,angle radius=0.3cm,"$\theta$",angle eccentricity = 2]{angle = A--B--C};

    \draw[<->,line width = 1pt] (1.1*\elementSpacing,\AnteHeight*0.75)
    --  (1.9*\elementSpacing,\AnteHeight*0.75) node[midway,above]{$d$};
    \end{tikzpicture}
\caption{System diagram for DoA estimation. A far-field signal with incident angle $\theta$ impinges on a ULA of $N$ sensors spaced $d$ apart.}
\label{fig:ULA}
\end{figure}

In this paper, we consider a uniform linear array (ULA) of $N$ sensors that receives $K$ narrow-band and far-field source signals with DoA $\theta_1, \ldots, \theta_K$ as shown in Fig.~\ref{fig:ULA}.
The received signal at time $t= 1,\ldots, T$ is given by
\begin{equation}
\label{eq:def_X}
	\x(t) = \textstyle \sum_{k = 1}^K \mathbf{a}(\theta_k) s_k(t) + \mathbf{n}(t) \in \CC^N,
\end{equation}
with complex signal $s_k(t) \in \CC$, and complex circular Gaussian white noise $\mathbf{n}(t) \in \CC^N$ having i.i.d.\@ $\mathcal{CN}(0,\sigma^2)$ entries.
$\mathbf{a}(\theta_k) \in \CC^N$ represents the steering vector of source $k \in \{ 1, \ldots, K \}$ at DoA $\theta_k$, given by\footnote{The normalization by $\sqrt N$ is made so that $\mathbf{a}(\theta_k)$ is of unit norm. 
Here, we use $\theta_k$ for the DoA in the Fourier space as in \cite{Vallet2015Performance}, which is related to the ``physical'' angle $\phi_k$ of the source wave via $\theta_k = \frac{2\pi d}{\lambda_0} \sin(\phi_k)$.
}
\begin{equation}\label{eq:def_ULA}
	\mathbf{a}(\theta_k) = [1,~e^{\imath \theta_k}, \ldots, e^{\imath (N-1) \theta_k}]^\T/\sqrt N \in \CC^N.
\end{equation}

This model can be rewritten in matrix form as
\begin{equation}\label{eq:model_matrix_form}
    \X = \A \S + \mathbf{N}, \quad \A = [\mathbf{a}(\theta_1), \ldots, \mathbf{a}(\theta_K)] \in \CC^{N \times K},
\end{equation}
with $\X = [\x(1), \ldots, \x(T)] \in \CC^{N \times T}$ the matrix of received signals, $\A \in \CC^{N \times K}$ the matrix of steering vectors, $\S = [\s(1), \ldots, \s(T)] \in \CC^{K \times T}$ the matrix containing source signals, and random noise $\mathbf{N} = [\mathbf{n}(1), \ldots, \mathbf{n}(T)] \in \CC^{N \times T}$ modeled as spatially white with covariance $\sigma^2 \mathbf{I}_N$.
The source signals $\mathbf{s}(t)$ are assumed to be independent with the noise $\mathbf{n}(t)$.
Each source vector $\mathbf{s}(t)$ is modeled as a zero-mean random vector with covariance $\mathbf{P}=\mathbb{E}[\mathbf{s}(t)\mathbf{s}(t)^{\H}]$.
Then, the \emph{population} covariance of the received signal is given by
\begin{align}
    \vspace{-2pt}
    \label{eq:def_C}
    \C \equiv \EE[\X \X^\H]/T &= \EE [\A \mathbf{S} \mathbf{S}^\H \A^\H]/T + \EE[\mathbf{N} \mathbf{N}^\H]/T \notag\\
    &= \mathbf{A}\mathbf{P}\mathbf{A}^{\H} + \sigma^2\mathbf{I}_N.
    \vspace{-2pt}
\end{align}
Note from \eqref{eq:def_C} that the population covariance $\C$ is highly structured, in the sense that its top subspace relates to the subspace spanned by the steering vectors $\mathbf{a}(\theta_k)$, and thus provides information of the desired DoAs $\theta_k$.
The eigenspace associated with the $K$ largest eigenvalues of $\C$ is referred to in the literature as the ``signal subspace'' $\U_K$.
Since $\C$ is not available in practical situations, subspace methods are performed on the SCM constructed from $T$ observations as
\begin{align}
    \vspace{-2pt}
\label{eq:def_hat_C}
    \hat{\C} = \X \X^\H/T.
    \vspace{-2pt}
\end{align}

\subsection{The ESPRIT DoA Estimator}
\label{subsec:ESPRIT}

The ESPRIT method~\cite{32276} relies on the following structure of rotational invariance: For steering matrix $\A \in \CC^{N \times K}$ defined in \eqref{eq:model_matrix_form} and $\J_1$, $\J_2 \in \RR^{n \times N}$ two selection matrices that select $n$ out of $N$ rows of $\A$ with distance $\Delta \geq 1$, that is
\begin{equation}
\label{eq:def_J}
\vspace{-1pt}
    \J_1^\T = [\ee_\ell,\ldots,\ee_{n+\ell-1}],~\J_2^\T = [\ee_{\ell+\Delta},\ldots,\ee_{n+\ell+\Delta-1}],
    \vspace{-1pt}
\end{equation}
for $\ee_i$ the canonical vector of $\RR^{N}$ such that $[\ee_{i}]_{j} = \delta_{ij}$. 
Note that $\A$ is a Vandermonde matrix and satisfies
\begin{equation}\label{eq:ESPRIT_1}
    \vspace{-1pt}
    \J_1 \A \diag \{ e^{\imath \Delta \theta_k} \}_{k=1}^K = \J_2 \A .
    \vspace{-1pt}
\end{equation}
While $\A$ is unknown, it follows from \eqref{eq:def_C} that the top-$K$ subspace $\U_K \in \CC^{N \times K}$ of $\C$ is the same as the subspace spanned by the columns of $\A \P^{-1/2}$, so that
\begin{equation}\label{eq:ESPRIT_2}
    \vspace{-1pt}
    \U_K = \A \P^{-1/2} \M,
    \vspace{-1pt}
\end{equation}
for some invertible $\M \in \CC^{K \times K}$.
ESPRIT algorithm then exploits the rotational invariance property of signal subspaces spanned by the two subarrays selected by $\J_1$ and $\J_2$.
Combing \eqref{eq:ESPRIT_1}~with~\eqref{eq:ESPRIT_2}, the DoAs $\theta_k$ can be written as the angles of the $k$th complex eigenvalues of 
\begin{equation}\label{eq:def_Phi}
    \vspace{-1pt}
    \bPhi = (\U_K^\H \J_1^\H \J_1 \U_K)^{-1} \U_K^\H \J_1^\H \J_2 \U_K \equiv \bPhi_1^{-1} \bPhi_2, 
    \vspace{-1pt}
\end{equation}
assuming invertible $\bPhi_1 \equiv \U_K^\H \J_1^\H \J_1 \U_K \in \CC^{K \times K}$.
While the population signal subspace $\U_K$ is not practically available, ESPRIT proposes to estimate the DoAs by replacing $\U_K$ in \eqref{eq:def_Phi} with the empirical estimate $\hat \U_K$ obtained from the SCM $\hat \C$, assuming that $\hat \C$ is ``close'' to the population covariance $\C$ in some sense.
This leads to the ESPRIT DoA estimation procedure summarized in \Cref{alg:TraESPRIT}.

\begin{algorithm}[t!]
	\renewcommand{\algorithmicrequire}{\textbf{Input:}}
	\renewcommand{\algorithmicensure}{\textbf{Output:}}
	\caption{ESPRIT DoA estimation.}
	\label{alg:TraESPRIT}
	\begin{algorithmic}[1]
		\REQUIRE Received signal $\X \in \CC^{N \times T}$, number of sources $K$.
		\ENSURE Estimated DoA angles $\hat \theta_k, k \in \{ 1,\ldots, K\}$.
        \STATE Compute the SCM $\hat \C = \X \X^\H/T$ as in \eqref{eq:def_hat_C} to retrieve $\hat \U_K = [\hat \uu_1, \ldots, \hat \uu_K] \in \CC^{N \times K}$ the estimated signal subspace composed of the top-$K$ eigenvectors $\hat \uu_1, \ldots, \hat \uu_K \in \CC^N$ associated to the largest $K$ eigenvalues of $\hat \C$;

		\STATE Define two selection matrices $\J_1, \J_2 \in \RR^{n \times N}$ as in \eqref{eq:def_J} that both select $n$ among $N$ rows with a distance $\Delta \geq 1$;

        \STATE Compute $\hat{\mathbf{\Phi}} = (\hat \U_K^\H \J_1^\H \J_1 \hat \U_K)^{-1} \hat \U_K^\H \J_1^\H \J_2 \hat \U_K \in \CC^{K \times K}$, for invertible $\hat \U_K^\H \J_1^\H \J_1 \hat \U_K$, and then the \emph{angles} of $\lambda_k(\hat{\mathbf{\Phi}})$, the $k$th (complex) eigenvalue of $\hat{\mathbf{\Phi}}$;

		\STATE \textbf{return} $\hat \theta_k = \arg(\lambda_k(\hat{\mathbf{\Phi}}))/\Delta, k \in \{ 1,\ldots, K\}$.
	\end{algorithmic}  
    \vspace{-2pt}
\end{algorithm}

\subsection{Eigenspectral Inconsistency for Large-dimensional SCM}
\label{subsec:review_rmt} 

ESPRIT relies on the assumption that the signal subspace $\U_K$ can be accurately estimated by $\hat \U_K $.
This is typically valid when the number of observations $T$ is much larger than the array size $N$, making the sample covariance $\hat \C$ a good ``proxy'' of population covariance $\C$ in the sense that $\| \hat \C - \C \| \to 0$ as $T \to \infty$ for fixed $N$, by the law of large numbers.
However, in the case of large arrays and/or limited snapshots, where $N$ and $T$ are of the same order of magnitude, $\hat \C$ is \emph{not} a consistent estimator of $\C$ in a spectral norm sense. 
Consequently, we should \emph{not} expect that the top subspace $\hat \U_K$ used in ESPRIT is a good estimate of the true signal subspace $\U_K$.

In the following, we recall a few results from large-dimensional RMT that provide precise eigenspectral characterizations of SCM in the large $N,T$ regime.
We note that throughout this paper, we consider a regime in which the number of sources $K$ is fixed as $N,T \to \infty$. 
    This assumption is standard in large-dimensional analyses of subspace-based methods and spiked covariance models \cite{hachem2013subspace,baik2005phase}. 
    Allowing $K$ to scale with $N$ would destroy the separability between the signal and noise subspaces and fall outside the scope of the present work. Accordingly, we adopt the following assumption.

\begin{Assumption}[Large arrays and limited snapshots]\label{ass:large_array}
As $T \to \infty$, $N/T \to c \in (0,\infty)$, $n/N \to \tau \in (0,1)$ and $K$ fixed.
\end{Assumption}

\begin{Assumption}[Subspace separation]\label{ass:subspace}
Consider the eigen-decomposition of $\A \P \A^\H \in \mathbb{C}^{N \times N}$ in \eqref{eq:def_C} as
\begin{align}
\label{eq:def_ULU}
    \textstyle \A \P \A^\H = \sum_{k=1}^K \lambda_k(\A \P \A^\H) \cdot \uu_k \uu_k^\H.
\end{align}
Define the signal-to-noise ratio (SNR) of the $k$-th source as $\ell_k = \lambda_k(\A\mathbf{P}\A^{\H})/\sigma^{2}$.
Then, as $N,T\to\infty$, the top eigenvalues satisfy the separation condition
\begin{align}
    \ell_1 > \cdots > \ell_K > \sqrt{c}, 
\end{align}
where $c = \lim N/T$ in \Cref{ass:large_array}.
\end{Assumption}

For the eigenspectral characterization of large (random or deterministic) matrices, we define the empirical spectral measure and its Stieltjes transform as follows.
\begin{Definition}[Empirical spectral measure]\label{def:mu}
For a Hermitian matrix $\X \in \mathbb{C}^{N \times N}$, its empirical spectral measure is defined as the normalized counting measure of the eigenvalues $\lambda_1(\X), \ldots, \lambda_N(\X)$ of $\X$,
\begin{equation}
    \mu_{\X} = \frac{1}{N}\sum_{i=1}^N \delta_{\lambda_i(
    \X)},
\end{equation}
where $\delta_x$ represents the Dirac measure at $x$.
\end{Definition}

\begin{Definition}[Stieltjes transform]\label{def:ST}
For a probability measure $\mu$ (e.g., an empirical spectral measure in \Cref{def:mu}), its Stieltjes transform $ m_{\mu}(z)$ is defined, for $z \in \mathbb{C}\backslash \supp(\mu)$, as
\begin{equation}
    m_{\mu}(z) = \int \frac{\mu(dt)}{t-z} = \frac{1}{N} \tr \mathbf{Q}_\X(z),
\end{equation}
with $\mathbf{Q}_\X(z) = (\X- z \I_N)^{-1} \in \CC^{N \times N}$ the \emph{resolvent} of $\X$.
\end{Definition}
The resolvent and Stieltjes transform provide convenient access to the eigenspectral behavior of large random matrices.
We refer the interested readers to \cite[Section~2]{couillet2022RMT4ML} as well as \cite{hachem2007deterministic,rozanov1967stationary} for more discussions.

Under Assumptions~\ref{ass:large_array}~and~\ref{ass:subspace}, we have the following results, due to a sequence of previous efforts~\cite{marvcenko1967distribution,silverstein1995empirical,baik2006eigenvalues,benaych2011eigenvalues}.

\begin{Theorem}[Eigenspectral characterization of large-dimensional SCM~\cite{marvcenko1967distribution,silverstein1995empirical,baik2006eigenvalues,benaych2011eigenvalues}] \label{theo:spikes}
Under the settings and notations of \Cref{ass:large_array}, we have,  for $\X \in \CC^{N \times T}$ defined in \eqref{eq:model_matrix_form} and as $N,T \to \infty$ with $N/T \to c \in (0, \infty)$ that, with probability one, the empirical spectral measure in \Cref{def:mu} of the SCM $\hat \C = \X \X^\H/T$ converges weakly to the Mar{\u c}enko-Pastur law
\begin{equation*}
	\mu(dx) = (1+c^{-1})^+ \delta_0(x) + \frac{\sqrt{\big(x-E_-\big)^+ \big(E_+ - x\big)^+}}{2\pi c \sigma^2 x}\,dx,
\end{equation*}
with $E_\pm = \sigma^2 (1 \pm \sqrt c)^2$ and $(x)^+ = \max(x,0)$.
Its Stieltjes transform (see \Cref{def:ST}) converges to $m(z)$, the unique Stieltjes transform solution to the Mar\u{c}enko-Pastur equation~\cite{marvcenko1967distribution}
\begin{equation}
\label{eq:m(z)}
    z c \sigma^2 m^2(z) + \big(z + \sigma^2 (c-1)\big) m(z)  + 1 = 0.
\end{equation}
Moreover, let \Cref{ass:subspace} hold and denote $\hat \lambda_1 > \ldots > \hat \lambda_N$ the ordered eigenvalues of $\hat \C$ with corresponding eigenvectors $\hat{\mathbf{u}}_{1},\ldots,\hat{\mathbf{u}}_N$, we have 
\begin{equation}\label{eq:eigenvalue_bias}
    \hat \lambda_i \to
    \begin{cases}
        \bar \lambda_i
        = \sigma^2\!\left( 1 + \ell_i
        + c \dfrac{1+ \ell_i}{\ell_i} \right)
        >  E_+, & i \leq K,\\[4pt]
        E_+ =  \sigma^2 (1+\sqrt c)^2, & i > K
    \end{cases}; 
\end{equation} 
almost surely as $N,T \to \infty$.
Also, for all deterministic sequences of unit norm vectors $\mathbf{a},\mathbf{b} \in \mathbb{C}^{N}$, we have, for $\ell_k$ defined in \Cref{ass:subspace},
\begin{equation}\label{eq:eigenvectors_bias}
    \mathbf{a}^\H \hat{\mathbf{u}}_k
    \hat{\mathbf{u}}_k^\H
    \mathbf{b} - \frac{1-c\ell_k^{-2}}{1 + c\ell_k^{-1}}
    \mathbf{a}^\H \mathbf{u}_k
    \mathbf{u}_k^\H
    \mathbf{b} \to 0,~ k \in \{1,\ldots, K\},
\end{equation}
almost surely as $N,T \to \infty$, with $\uu_k\equiv \uu_k(\A \P \A^{\H})$ in \eqref{eq:def_ULU}.
\end{Theorem}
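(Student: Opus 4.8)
The plan is to recognize the four conclusions of \Cref{theo:spikes}---the bulk law, the fixed-point equation \eqref{eq:m(z)}, the outlier locations \eqref{eq:eigenvalue_bias}, and the eigenvector alignment \eqref{eq:eigenvectors_bias}---as instances of the now-classical theory of finite-rank perturbations of large random matrices, and to assemble them from the results of \cite{marvcenko1967distribution,silverstein1995empirical,baik2006eigenvalues,benaych2011eigenvalues}. The starting point is the structural observation that $\hat\C = \X\X^\H/T$ with $\X = \A\S + \N$ differs from the pure-noise sample covariance $\N\N^\H/T$ by an additive perturbation of rank $O(K)$, since $\A\S$ has rank at most $K$. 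By \Cref{ass:subspace} the signal part $\A\P\A^\H = \A\S\S^\H\A^\H/T$ has $K$ nonzero eigenvalues converging to the SNRs $\ell_1 > \cdots > \ell_K$, which play the role of the spikes. When the sources are circular Gaussian, the columns of $\X$ are i.i.d.\@ $\mathcal{CN}(\zo,\C)$ and $\hat\C$ is equivalently a \emph{spiked sample covariance matrix} $\C^{1/2}(\Z\Z^\H/T)\C^{1/2}$ whose population covariance $\C = \A\P\A^\H + \I_N$ carries spikes $1+\ell_k$; either viewpoint places us in the deformed-model framework of \cite{baik2006eigenvalues,benaych2011eigenvalues}.

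For the bulk and its Stieltjes transform, I would first argue that the finite-rank signal is invisible to the limiting spectral measure: since $\rank(\A\S)$ is bounded, the rank inequality for empirical distribution functions bounds the Lévy distance between $\mu_{\hat\C}$ and the empirical measure of $\N\N^\H/T$ by $O(K/N)\to 0$, and the latter converges to the Mar\u{c}enko--Pastur law by \cite{marvcenko1967distribution,silverstein1995empirical}. The self-consistent equation \eqref{eq:m(z)} for $m(z)$ then follows from the standard resolvent analysis of $\frac1N\tr(\hat\C-z\I_N)^{-1}$: a Schur-complement (leave-one-column-out) expansion together with concentration of quadratic forms replaces random quadratic forms by their deterministic means, yielding the quadratic fixed point whose unique Stieltjes-transform solution is $m(z)$.

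The outlier eigenvalues \eqref{eq:eigenvalue_bias} are the content of the Baik--Ben~Arous--P\'ech\'e / Baik--Silverstein phase transition \cite{baik2006eigenvalues}. The route is to note that any $\bar\lambda\notin\supp(\mu)$ is an eigenvalue of $\hat\C$ iff a $K\times K$ secular determinant vanishes: using the low-rank structure and the resolvent identity, $\det(\hat\C-\bar\lambda\I_N)=0$ reduces to a determinant equation that decouples across the $K$ spikes, and replacing the random bilinear forms $\uu_k^\H(\N\N^\H/T-\bar\lambda\I_N)^{-1}\uu_k$ by their deterministic limit (a function of $m(\bar\lambda)$) yields a scalar equation whose relevant root is $\bar\lambda_i = 1+\ell_i + c(1+\ell_i)/\ell_i$, existing as a genuine outlier precisely when $\ell_i>\sqrt c$. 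For $i>K$ (and for spikes below threshold), the candidate root falls inside $[E_-,E_+]$ and no eigenvalue separates, so $\hat\lambda_i\to E_+$ by the no-eigenvalue-outside-the-support and exact-separation results.

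The eigenvector alignment \eqref{eq:eigenvectors_bias} is where the main work---and the main obstacle---lies. I would represent the rank-one spectral projector by a contour integral, $\hat\uu_k\hat\uu_k^\H = \frac{-1}{2\pi\imath}\oint_{\Gamma_k}\mathbf{Q}(z)\,dz$ with $\mathbf{Q}(z)=(\hat\C-z\I_N)^{-1}$ and $\Gamma_k$ a small circle enclosing only the isolated outlier $\bar\lambda_k$, so that $\mathbf{a}^\H\hat\uu_k\hat\uu_k^\H\mathbf{b} = \frac{-1}{2\pi\imath}\oint_{\Gamma_k}\mathbf{a}^\H\mathbf{Q}(z)\mathbf{b}\,dz$. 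The crux is then a \emph{deterministic equivalent} for the \emph{bilinear} resolvent $\mathbf{a}^\H\mathbf{Q}(z)\mathbf{b}$, holding uniformly for $z$ on $\Gamma_k$, rather than merely for its normalized trace; this is the delicate step, requiring concentration of bilinear forms (Hanson--Wright-type bounds) and control of $\mathbf{Q}(z)$ near the spike. Once this equivalent is available, evaluating the residue at $\bar\lambda_k$ extracts the scalar factor $\frac{1-c\ell_k^{-2}}{1+c\ell_k^{-1}}$ multiplying $\mathbf{a}^\H\uu_k\uu_k^\H\mathbf{b}$, as in \cite{benaych2011eigenvalues}. Establishing spike isolation (so $\Gamma_k$ is almost surely well-defined and encircles a single eigenvalue) and the uniform convergence of the bilinear resolvent on the contour is what I expect to be the hardest part.
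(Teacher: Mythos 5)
Your proposal is correct and follows essentially the same route the paper relies on: the paper does not reprove Theorem~\ref{theo:spikes} but defers it to \cite{marvcenko1967distribution,silverstein1995empirical,baik2006eigenvalues,benaych2011eigenvalues} and to the ``Deterministic Equivalent for resolvent'' framework of its appendix, which is precisely the low-rank-perturbation $+$ contour-integral $+$ bilinear deterministic-equivalent machinery you describe (and which the paper itself deploys, via the Woodbury decomposition $\hat\C = \Z\Z^\H/T + \V\bLambda\V^\H$, in its proof of the related Theorem~\ref{theo:CDT}). Your identification of the uniform bilinear resolvent equivalent on the contour $\Gamma_k$ as the delicate step, and the residue computation yielding the factor $(1-c\ell_k^{-2})/(1+c\ell_k^{-1})$, matches the paper's own treatment of the analogous quantities.
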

\noindent
\Cref{theo:spikes} states that for $N,T$ both large and comparable with ratio $c = \lim N/T$, the eigenvalues of the SCM $\hat \C$, instead of being close to those of its population counterpart $\C = \A \mathbf{P} \A^\H + \sigma^2 \I_N$ in \eqref{eq:def_C}, spread out on the interval $[E_-, E_+]$ of length $4 \sigma^2 \sqrt c \gg 0$.
Moreover, under the additional \Cref{ass:subspace}, it is known that the largest eigenvalues of $\hat \C$ (that are due to the ``signal'' $\A \P \A^\H$) are guaranteed to ``separate'' from those due to the random white noise.
However, even in this case, the empirical eigenvalues $\hat \lambda_i$ of $\hat \C$ are larger than the population ones (i.e., $\sigma^2 (1 + \ell_i)$), by a quantity that is proportional to $c = \lim N/T$, and eigenvectors $\hat{\mathbf{u}}_k$ are \emph{biased} estimate of the population eigenvectors $\uu_k(\A \P \A^{\H})$, in the sense that for arbitrary deterministic $\mathbf{a}, \mathbf{b} \in \CC^N$, the eigenspace $\hat{\mathbf{u}}_k
\hat{\mathbf{u}}_k^\H$ is \emph{biased} by a factor of $(1-c\ell_k^{-2} )/( 1 + c\ell_k^{-1})$ as in \eqref{eq:eigenvectors_bias}.
\Cref{theo:spikes} can be proven using the so-called ``Deterministic Equivalent for resolvent'' analysis framework.
This approach will be constantly exploited in the proof of our technical results in this paper.
We refer the interested readers to \Cref{sec:DE} of the appendix as well as \cite[Chapter~2]{couillet2022RMT4ML} for a detailed discussion of this approach.

\section{Inconsistency of ESPRIT for Large Arrays}
\label{sec:inconsist}
In this section, we present our main result in \Cref{theo:main} on the large-dimensional inconsistency of classical ESPRIT in \Cref{subsec:analysis_ESPRIT}.
The proof of \Cref{theo:main} relies on a novel bound on eigenvalue difference between two matrices derived in \Cref{subsec:proof_framework}, and is given in detail in \Cref{subsec:proof_of_main}.

Built upon recent advances in RMT, we perform in this section an in-depth analysis of the classical ESPRIT method in \Cref{alg:TraESPRIT} in the large array and limited snapshot setting of \Cref{ass:large_array}.
We show that, in general, classical ESPRIT provides \emph{inconsistent} estimates of the DoAs in the following two scenarios: \emph{widely-spaced} DoAs and \emph{closely-spaced} DoAs, defined respectively as follows.
\begin{Assumption}[Widely-spaced DoAs]\label{ass:widely_spaced}
    The DoAs $\theta_1, \ldots, \theta_K$ are fixed as $N \to \infty$, and they have angular separation much larger than a beam-width $2\pi/N$.
\end{Assumption}
\begin{Assumption}[Closely-spaced DoAs]\label{ass:closely_spaced}
    The DoAs $\theta_1,\ldots,\theta_K$ are spaced with a ``distance'' of order $O(N^{-1})$, that is
    \begin{equation}
        | \theta_k - \theta | = O(N^{-1}), \quad k \in \{1, \ldots, K \},
    \end{equation}
    for some $\theta > 0$ as $N \to \infty$.
\end{Assumption}

In the case of closely-spaced DoAs in \Cref{ass:closely_spaced}, the DoAs are within a ``distance'' of order $O(N^{-1})$.
As such, for an estimate $\hat \theta_k$ of the true DoA $\theta_k$ to be distinguished from other DoAs, one must have an estimation error of order $o(N^{-1})$.
We say, in this case, that the estimate $\hat \theta_k$ is $N$-consistent if $N(\hat{\theta}_k-\theta_k) \to 0$ as $N,T \to \infty$, see also~\cite{Vallet2015Performance}. 
Note that this differs from the widely-spaced DoA scenario in \Cref{ass:widely_spaced}, in which case we only need $\hat{\theta}_k$ to be a consistent estimation of $\theta_k$, that is $\hat{\theta}_k-\theta_k \to 0$ as $N,T \to \infty$.

The fundamental distinction between widely-~and closely-spaced DoAs can be further understood from the structure of the array response.
Under \Cref{ass:large_array} and in the case of widely-spaced DoAs in \Cref{ass:widely_spaced}, we have, as $N,n,T \to \infty$ at the same pace that $\| \A^\H \A - \I_K \| = O(N^{-1})$, so that the steering matrix $\A$ is (approximately for $N$ large) the same as the signal subspace $\U_K$, and that both $\A^\H  \J_1^\H \J_1 \A$ and $\A^\H  \J_1^\H \J_2 \A$ are asymptotically diagonal.
On the contrary, in the case of closely-spaced DoAs in \Cref{ass:closely_spaced}, $\A^\H \A$ is no longer asymptotically identity, and $\A^\H  \J_1^\H \J_1 \A$ and $\A^\H  \J_1^\H \J_2 \A$ are no longer asymptotically diagonal.
See \Cref{lem:AJJA} in \Cref{sec:techLemmas} of the appendix for a detailed characterization of these quantities.

\subsection{Large-dimensional behavior of ESPRIT}
\label{subsec:analysis_ESPRIT}

According to \Cref{alg:TraESPRIT}, the ESPRIT DoA estimates depend on the angles of the eigenvalues of 
\begin{equation}\label{eq:def_hat_Phi}
	\hat {\mathbf{\Phi}} = (\hat \U_K^\H \J_1^\H \J_1 \hat \U_K)^{-1} \hat \U_K^\H \J_1^\H \J_2 \hat \U_K \equiv \hat{\mathbf{\Phi}}_1^{-1} \hat{\mathbf{\Phi}}_2,
\end{equation}
where $\hat{\bPhi}, \hat{\bPhi}_1, \hat{\bPhi}_2$ are empirical estimates of their population counterparts in \eqref{eq:def_Phi}. 

In the following result, by studying the behavior of $\hat {\mathbf{\Phi}}$, we provide a precise large-dimensional characterization of the classical ESPRIT method in the large array and limited snapshot regime.
The key idea and technical challenges to prove \Cref{theo:main} will be discussed in \Cref{subsec:proof_of_main}.
\begin{Theorem}[Large-dimensional behavior of ESPRIT]\label{theo:main}
    Let Assumptions~\ref{ass:large_array} and~\ref{ass:subspace} hold.
    In addition, suppose that either \Cref{ass:widely_spaced} or \Cref{ass:closely_spaced} is satisfied.
    Define the deterministic matrix $\bar{ \bPhi} = \bar{ \bPhi}_1^{-1} \bar{ \bPhi}_2$ with
    \begin{equation}\label{eq:def_bar_Phi}
    \begin{split}
    \bar{ \bPhi}_1 &= \diag(\sqrt{\mathbf{g}}) \mathbf{\Phi}_1 \diag(\sqrt{\mathbf{g}}) + \tau \left(\I_K - \diag(\mathbf{g}) \right), \\ 
    \bar{ \bPhi}_2 & = \diag(\sqrt{\mathbf{g}}) \mathbf{\Phi}_2 \diag(\sqrt{\mathbf{g}}), 
    \end{split}
    \end{equation}
    where $\sqrt{\mathbf{g}} =[\sqrt{g_1}, \ldots, \sqrt{g_K}]^\T \in \RR^K$ and 
    \begin{equation}\label{eq:def_g}
    g_k \equiv \frac{1-c\ell_k^{-2}}{1 + c\ell_k^{-1}} \in (0,1).
    \end{equation}
    Further assume that $\bar{\bPhi}$ has distinct eigenvalues, and denote by
$\lambda_k(\bar{\bPhi})$ its $k$-th largest eigenvalue (in amplitude). 
Let $\hat{\theta}_k$ denote the DoA estimate obtained from the classical ESPRIT method in \Cref{alg:TraESPRIT}, and let $\bar{\theta}_k \equiv \arg(\lambda_k(\bar{\bPhi}))/\Delta$.
Then, for each $k\in\{1,\ldots,K\}$, we have 
\begin{equation}
    \Delta(\hat \theta_k - \bar{\theta}_k) \to 0,
\end{equation}
almost surely as $N,T \to \infty$.
\end{Theorem}
\noindent
\Cref{theo:main} tells us that in the large $N,T$ regime, the estimates $\hat \bPhi_1, \hat \bPhi_2$, due to the large-dimensional inconsistency of $\hat \C$ discussed in \Cref{subsec:review_rmt}, are \emph{biased} from their population counterparts $\bPhi_1, \bPhi_2$ defined in \eqref{eq:def_Phi}, and connect instead to $\bar \bPhi_1, \bar \bPhi_2$ in \eqref{eq:def_bar_Phi}.
As a direct consequence of \Cref{theo:main}, we have, in the case of large arrays, that ESPRIT method diverges from its original design discussed in \Cref{subsec:models} and should in general \emph{not} be able to provide consistent DoA estimates in neither widely-~nor closely-spaced DoAs scenarios.

In the following, we discuss special cases where the aforementioned large-dimensional inconsistency of classical ESPRIT holds or, by chance, fails.
We first note that, in the limit of infinite snapshots or sufficiently high SNR, the large-dimensional corrections characterized in \Cref{theo:spikes} disappear.
Specifically, as $c=\lim N/T \to 0$ \emph{or} as the SNR $\ell_k \to \infty$, one has $g_k \to 1$, which implies $\bar \bPhi = \bPhi_1^{-1} \bPhi_2 = \bPhi$.
In this regime, the sample-based ESPRIT estimator converges to its population counterpart and yields consistent DoA estimates.
One might also conjecture that a similar consistency could be recovered in the \emph{small-subarray} limit
$\tau=\lim n/N \to 0$, since the large-dimensional bias term $\diag(\sqrt{\mathbf{g}})$
in both $\bar{\bPhi}_1$ and $\bar{\bPhi}_2$ appears to vanish.
However, this intuition is misleading, since under both Assumptions~\ref{ass:widely_spaced}~and~\ref{ass:closely_spaced}, one has $\bPhi_1 \to 0$ as $\tau \to 0$, so that $\bar \bPhi = \bar\bPhi_1^{-1} \bar \bPhi_2$ is no longer well defined.

Beyond the limiting case discussed above, classical ESPRIT also holds consistency in the uncorrelated and widely-spaced DoAs scenario, as in the following remark.
\vspace{-1pt}
\begin{Remark}[Special case: widely-spaced DoAs with uncorrelated sources]\normalfont
\label{rem:wide_uncorrelated}
For widely-spaced DoAs in \Cref{ass:widely_spaced}, assume in addition that the sources are \emph{uncorrelated} so that $\mathbf{P}=\mathbb{E}[\mathbf{s}(t)\mathbf{s}(t)^{\H}]$ is diagonal.
In this case, the top-$K$ subspace $\U_K$ is approximately the same as that spanned by the steering vectors, and that $\bPhi_1 = \tau \I_K + O_{\| \cdot \|}(N^{-1})$, $\bPhi_2 = \tau \diag\{ e^{\imath \Delta \theta_k} \}_{k=1}^K + O_{\| \cdot \|}(N^{-1})$, so that $\bar \bPhi$ defined in \eqref{eq:def_bar_Phi} writes
    \vspace{-2pt}
\begin{equation*}
    \bar{ \mathbf{{\Phi}}} = \diag(\sqrt{\mathbf{g}}) \diag\{e^{\imath \Delta \theta_k}\}_{k=1}^K \diag(\sqrt{\mathbf{g}}) + O_{\| \cdot \|}(N^{-1}),
    \vspace{-1pt}
\end{equation*}
for $\mathbf{g}$ a real vector.
As such, $\bar{\bPhi}$ has the same eigenvalue \emph{angles} as $\bPhi$, and that $\hat \theta_k - \theta_k \to 0$ almost surely. 
Classical ESPRIT thus provides consistent DoA estimation in this setting. 
\end{Remark}
Beyond the \emph{small-subarray} limiting case and the special scenarios discussed in Remark~\ref{rem:wide_uncorrelated}, where the classical ESPRIT estimates $\hat \theta_k$ obtained from \Cref{alg:TraESPRIT} are ``lucky'' enough to be consistent, it can be shown that $\hat \theta_k$ in general deviates from the true DoA $\theta_k$ in the large $N,T$ regime.
We provide in the following two examples: widely-spaced DoAs with correlated sources and closely-spaced DoAs with equal power sources.

\begin{Remark}[Special case: widely-spaced DoAs with correlated sources]\normalfont
\label{rem:wide_correlated}
In the case of widely-spaced DoAs in \Cref{ass:widely_spaced}, consider a general scenario where the sources are \emph{correlated} with generic $\mathbf{P}=\mathbb{E}[\mathbf{s}(t)\mathbf{s}(t)^{\H}]$. 
This is in contrast to the uncorrelated source setting in \Cref{rem:wide_uncorrelated}.
Denote $\P = \U_{\P} \mathbf{L} \U_{\P}^\H$ the eigen-decomposition of $\P$, it then follows from \Cref{lem:AJJA} in \Cref{sec:techLemmas} of the appendix that $\bPhi_1 = \U_{\P}^\H \A^\H \J_1^\H \J_1 \A \U_{\P} + O_{\lVert \cdot \rVert}(N^{-1}) = \tau \I_K + O_{\lVert \cdot \rVert}(N^{-1}) $ and similarly that $\bPhi_2 = \tau \U_{\P}^\H \diag\{ e^{\imath \Delta \theta_i} \}_{i=1}^K \U_{\P} + O_{\lVert \cdot \rVert}(N^{-1})$, so that
\begin{equation*}
	\bar{ \mathbf{{\Phi}}} = \diag(\mathbf{g})\U_{\P}^\H  \diag\{e^{\imath \Delta \theta_i}\}_{i=1}^{k} \U_{\P} \diag(\mathbf{g}) + O_{\lVert \cdot \rVert}(N^{-1}).
\end{equation*}
As such, $\bar{ \mathbf{{\Phi}}}$ has, in general, its eigenvalues different from those of $\bPhi$.
This, by \Cref{theo:main}, leads to \emph{inconsistent} ESPRIT estimates such that $\arg(\lambda_k(\hat{\bPhi}))/\Delta - \theta_k \nrightarrow 0$ as $N,T \to \infty$.

It can be checked, in the case of $K=2$ sources with different DoAs $\Delta \theta_1 \neq \Delta \theta_2 + m \pi$ for positive integer $m$, that the classical ESPRIT estimates \emph{cannot} be consistent \emph{unless} $\U_\P = \I_2$, that is, when the two sources are \emph{uncorrelated}.
See \Cref{subsec:proof_of_rem_wide_correlated} in the appendix for a detailed exposition of this counterexample.
\end{Remark}

\begin{Remark}[Special case: closely-spaced DoAs with equal power sources]\normalfont
\label{rem:closely_equal}
In the case of closely-spaced DoAs in \Cref{ass:closely_spaced}, we consider $K=2$ sources with $\theta_2=\theta_1+\alpha/N$ for some $\alpha >0$, and assume uncorrelated signals with equal powers, that is, $\P=\I_2$.
It can be checked that classical ESPRIT is \emph{not} $N$-consistent in this case, that is $N(\arg(\lambda_k(\hat{\bPhi}))/\Delta - \theta_k) \nrightarrow 0$ as $N,T \to \infty$, see \Cref{subsec:proof_rem_closely_equal} in the appendix for a detailed proof.
\end{Remark}

\subsection{A novel bound on eigenvalue difference}
\label{subsec:proof_framework}
In this section, we first establish sufficient conditions for eigenvalue approximation to prove Theorem~2, which are summarized in \Cref{theo:main}. 
In particular, these conditions are shown to hold for $\hat\bPhi$ and $\bar\bPhi$ in \Cref{theo:consiseigen} of \Cref{subsec:proof_of_main}, and may also be of independent interest.
We then provide the proof of \Cref{theo:main}.
The major technical challenge in characterizing the large-dimensional behavior of ESPRIT in \Cref{alg:TraESPRIT} lies in the fact that the corresponding DoA estimates, which are the angles of the complex eigenvalues of the $K$-by-$K$ random matrix $\hat \bPhi$ defined in \eqref{eq:def_hat_Phi}, depend on the entries of two \emph{strongly dependent} random matrices $\hat \bPhi_1$ and $\hat \bPhi_2$ in a non-trivial manner.
In addition, the off-diagonal complex entry of $\hat \bPhi_2$ in \eqref{eq:def_hat_Phi}, for $i\neq j$, is given by 
\begin{equation}
    [\hat \bPhi_2]_{ij} = \hat \uu_i^\H \J_1^\H \J_2 \hat \uu_j = \textstyle\sum_{m=\ell}^{n+\ell-1} \ee_{m+\Delta}^\H \hat \uu_i \hat \uu_j^\H \ee_{m},
\end{equation}
and cannot be handled using standard RMT techniques.
The same holds true for $[\hat \bPhi_1]_{ij}$, the off-diagonal entries of $\hat \bPhi_1$.
Indeed, standard RMT and contour integration techniques provide direct access to the following bilinear forms in the large $N,T \to \infty$ limit,
\begin{equation}
    \mathbf{a}^\H \hat \uu_i \hat \uu_i^\H \mathbf{b},
\end{equation}
for $\mathbf{a}, \mathbf{b} \in \CC^N$ of bounded norm, see for example \cite[Section~2.5]{couillet2022RMT4ML}. 
This thus provides access to (limits of) the diagonal entries of $\hat \bPhi_1$ and $\hat \bPhi_2$, but \emph{not} their off-diagonal entries.

To resolve this technical challenge, we introduce the following bound on the eigenvalue difference between two (possibly non-Hermitian) matrices, using the products of their entries with indices forming a circle. 
\begin{Theorem}[Eigenvalue approximation between two matrices] \label{theo:consiseigen}
    Let $\A, \B \in \CC^{K \times K}$ be two matrices, and assume that $\A$ has only simple eigenvalues. Further assume that for every $m$-cycle of indices 
    $1 \le i_1 < \cdots < i_m \le K$, their entries satisfy, for some $\varepsilon \in(0,1)$,
    \begin{equation}\label{eq:condi}
    |A_{i_1 i_2} A_{i_{2} i_{3}} \ldots A_{i_{m} i_{1}} - B_{i_1 i_2} B_{i_{2} i_{3}} \ldots B_{i_{m} i_{1}}| \leq \varepsilon.
    \end{equation}
    Then there exists a permutation $\Pi$ on $\{1,\ldots,K\}$ and a constant $C>0$, depending only on $K$ and 
    $\max\{ \max_{i,j}|A_{ij}|,\; \max_{i,j}|B_{ij}|\}$, such that
    \begin{equation}
    \bigl|\lambda_k(\A)-\lambda_{\Pi(k)}(\B)\bigr|
    \;\le\; C\,\varepsilon^{1/K},
    \qquad k=1,\ldots,K . 
    \end{equation}
\end{Theorem}

\begin{proof}[Proof of \Cref{theo:consiseigen}]
To prove \Cref{theo:consiseigen}, we work on the characteristic polynomials of $\A$ and $\B$.
It is known, e.g., from \cite{horn2012matrix} that the characteristic polynomial of $\A \in \CC^{K \times K}$ writes
\begin{equation}\label{eq:det_principle_minor}
\!\!\det(\lambda \I_K -\A ) = \lambda^K -S_1(\A ) \lambda^{K-1} \ldots + (-1)^K {S}_K(\A ), 
\end{equation}
for $S_k(\A )$ the sum of all $k$-by-$k$ principal minors of $\A $, with $S_1(\A ) = \tr(\A )$ and $S_K(\A ) = \det(\A ) $.
Note that this is a polynomial (of $\lambda$) of degree $K$, and it suffices to evaluate its coefficients (i.e., the sum of principal minors).

Consider one of the $k$-by-$k$ principal minors of two matrices $\A$ and $\B$, denoted by $\A[\mathcal{I}_k]$ and $\B [\mathcal{I}_k]$, respectively, with ordered indices $\mathcal{I}_k: 1\leq i_1 < \ldots < i_k \leq K$, we have, by definition of principal minor, that 
\begin{align} 
    &| \A [\mathcal{I}_k] - \B [\mathcal{I}_k] | \notag \\
    &= \Big| \sum_{\sigma} \sign(\sigma) \Big( \prod_{j \in \mathcal I_k} A_{j \sigma(j)} - \prod_{j \in \mathcal I_k} B_{j \sigma(j)} \Big) \Big| \notag \\
    &\leq \sum_{\sigma} \Big| \prod_{j \in \mathcal I_k} A_{j \sigma(j)} - \prod_{j \in \mathcal I_k} B_{j \sigma(j)} \Big|, \label{eq:hat-bar}
\end{align}
where we denote $\sigma \colon \{ i_1, \ldots, i_k \} \to \{ i_1, \ldots, i_k \}$ the permutations of the index set $\mathcal I_k$ as in \cite[Section 0.3.2]{horn2012matrix} (that are $k!$ distinct permutations).
It is known that every permutation $\sigma$ of a finite set can be uniquely decomposed into a product of disjoint cycles (and the uniqueness is up to the order of the cycles), see for example \cite{scott2012group}.
See \Cref{example:circle_decom} in \Cref{sec:techLemmas} of the appendix for an example of such decomposition.
Then, it follows from \eqref{eq:hat-bar} that
\begin{align*} 
    &| \A [\mathcal{I}_k] - \B [\mathcal{I}_k] | \\
    &\leq \sum_{\sigma} \Big| \prod_{j \in \mathcal I_k} A_{j \sigma(j)} - \prod_{j \in \mathcal I_k} B_{j \sigma(j)}  \Big| \\ 
    &= \sum_{\sigma} \Big| \prod_{i=1}^p \underbrace{A_{j_1 j_2} \ldots A_{j_{m_i} j_1}}_{\text{indices form an $m_i$-node circle}}  - \prod_{i=1}^p  B_{j_1 j_2} \ldots B_{j_{m_i} j_1} \Big| \\
    &=  \sum_{\sigma} \Big| \prod_{i=1}^p \alpha_i  - \prod_{i=1}^p \beta_i \Big|,
\end{align*}
where, in the second line, we use the fact that each permutation $\sigma$ can be decomposed into $1 \le p \le k$ disjoint cycles of lengths $m_1,\ldots,m_p$, and we denote
\begin{align*}
  \alpha_i = A_{j_1 j_2} \ldots A_{j_{m_i} j_1}, \; \beta_i =  B_{j_1 j_2} \ldots B_{j_{m_i} j_1}.
\end{align*}
Since we assume that \eqref{eq:condi} holds for any $m$-node circle, we have $|\alpha_i - \beta_i| \le \varepsilon, i = 1,\ldots,p$. Meanwhile, let $\rho \equiv \max\{ \max_{i,j} |A_{ij}|, \max_{i,j} |B_{ij}| \}$, then 
\begin{align*}
  |\alpha_i| \le \rho^{m_i}, |\beta_i| \le \rho^{m_i}.
\end{align*}
Using the standard telescoping expansion for product differences, we obtain
\begin{align*}
  &\Big| \prod_{i=1}^p \alpha_i  - \prod_{i=1}^p \beta_i \Big| = \sum_{s=1}^p (\alpha_s-\beta_s) \Big( \prod_{i<s} \alpha_i \Big) \Big( \prod_{i>s} \beta_i \Big) \\
  & \le \sum_{s=1}^p |\alpha_s-\beta_s| \Big| \prod_{i<s} \alpha_i \Big| \Big| \prod_{i>s} \beta_i \Big| \le  \sum_{s=1}^p \varepsilon \rho^{k-m_s} \le p \rho^{k-1} \varepsilon.
\end{align*}
Summing over all permutations yields
\begin{align}
  | \A [\mathcal{I}_k] - \B [\mathcal{I}_k] | \le \textstyle \sum_{\sigma} p \rho^{k-1} \varepsilon \le k! p \rho^{k-1} \varepsilon.
\end{align}

As such, for the $k$-th order coefficient of the characteristic polynomial of $\A$ and $\B$ as in \eqref{eq:det_principle_minor}, we have
\begin{align*}
    &|S_k(\A) - S_k(\B)| = \left|\sum\nolimits_{|\mathcal{I}_k|=k}^{ \binom{K}{k} } \left| \A [\mathcal{I}_k] - \B [\mathcal{I}_k] \right| \right| \\
    &\leq  \binom{K}{k} \max_{|\mathcal{I}_k|=k} \left(\left| \A [\mathcal{I}_k] - \B [\mathcal{I}_k] \right| \right) \leq C_K \varepsilon,
\end{align*}
for some constant $C_K$ that only depends on $\rho$ and $K$.
When $K$ is fixed under Assumption~1, $C_K$ is an absolute constant independent of $N$ and $T$, and one has $|S_k(\A) - S_k(\B)| = O(\varepsilon)$.
To convert this coefficient bound into a bound on the eigenvalues, we invoke the following quantitative root-continuity theorem.

\begin{Theorem}[Continuity of roots of a polynomial,~{\cite[Theorem~5]{nathanson2024Continuity}}]\label{theo:continuity}
Let $f(z)$ be a polynomial of degree $K$ with only simple roots, whose coefficients are denoted by $\{a_i\}_{i=0}^K$. 
Then, there exist constants $C'>0$ and $\delta_0>0$, depending only on the coefficients $\{a_i\}_{i=0}^K$, such that for every $0<\delta\le\delta_0$, 
whenever a polynomial $g(z)$ with coefficients $\{b_i\}_{i=0}^K$ satisfies $|a_i-b_i|\le\delta$ for $i$, the polynomial $g(z)$ also has only simple roots and, moreover, each root $r_f$ of $f(z)$ corresponds to a root $r_g$ of $g(z)$ with $|r_f-r_g|\le C'\,\delta^{1/K}$.
\end{Theorem} 

Hence, by combining the bound in (29) with Theorem~4, we obtain that each root of the characteristic polynomial of $\A$ admits a corresponding root (up to a permutation $\Pi(\cdot)$) of that of $\B$ satisfying
\begin{align*}
  \bigl|\lambda_k(\A)-\lambda_{\Pi(k)}(\B)\bigr| \;\le\; C'(C_K \varepsilon)^{1/K}  \;\le\; C\,\varepsilon^{1/K}, 
\end{align*}
for $k=1,\ldots,K,$ and some constant $C$ that depends only on $K$ and $\rho$.
This completes the proof of Theorem 3.
\end{proof}

\subsection{\texorpdfstring{Proof of \Cref{theo:main}}{Proof of Theorem 2}}
\label{subsec:proof_of_main}

Here, we present the detailed proof of \Cref{theo:main}, following the same line of arguments as that of \Cref{theo:consiseigen}.
The major difference is that, to prove \Cref{theo:main}, we need to evaluate the eigenvalues of the \emph{product} of two matrices $\hat \bPhi = \hat \bPhi_1^{-1} \hat \bPhi_2 $ as in \eqref{eq:def_hat_Phi}, as opposed to the single matrix ($\A$) in \Cref{theo:consiseigen}. 

Following the idea of \Cref{theo:consiseigen}, we provide, in the following result, characterization of the diagonal entries of $\hat \bPhi_1, \hat \bPhi_2$ \emph{and} the product of the off-diagonal entries, \emph{when their indices form a circle}.
\begin{Theorem}[Large-dimensional characterization of $\hat\bPhi_1$ and $\hat\bPhi_2$]\label{theo:CDT}
Under the same settings and notations of \Cref{theo:main}, we have, for $\hat\bPhi_1, \hat\bPhi_2$ defined in \eqref{eq:def_hat_Phi}, that 
\begin{enumerate}
    \item their diagonal entries satisfy, for $k \in \{ 1, \ldots, K \}$,
    \begin{equation}
        [\hat\bPhi_1]_{kk} - [\bar\bPhi_1]_{kk} \to 0, \quad [\hat\bPhi_2]_{kk} - [\bar\bPhi_2]_{kk} \to 0,
    \end{equation}
    almost surely as $N,T \to \infty$, for $\bar\bPhi_1, \bar\bPhi_2$ defined in \eqref{eq:def_bar_Phi};
    \item for indices $1\leq k_1 < \ldots < k_m \leq K$ that form an $m$-node cycle, denote $\mathbf{M}_{k_j} = \J_1^\H\J_2~\mbox{or}~\J_1^\H\J_1, j \in \{ 1, \ldots, m \}$, so that the off-diagonal entries $[\hat{\bPhi}_1]_{k  _i k_j}$ or $[\hat{\bPhi}_2]_{k_i k_j}$ of $\hat\bPhi_1, \hat\bPhi_2$ can be uniformly written as $\hat \uu_{k_i}^\H \mathbf{M}_{k_j} \hat \uu_{k_j}$, we have 
    \begin{equation}
        \hat{\psi}_{k_1, \ldots, k_m} - \bar{\psi}_{k_1, \ldots, k_m} \to 0,
    \end{equation}
    with $\hat{\psi}_{k_1, \ldots, k_m} \equiv \hat \uu_{k_m}^\H \mathbf{M}_{k_1} \hat \uu_{k_1} (\prod_{j=1}^{m-1} \hat \uu_{k_j}^\H \mathbf{M}_{k_{j+1}} \hat \uu_{k_{j+1}} )$, $\bar{\psi}_{k_1, \ldots, k_m} \equiv g_1 \uu_{k_m}^\H \mathbf{M}_{k_1} \uu_{k_1} (\prod_{j=1}^{m-1} g_{j+1} \uu_{k_j}^\H \mathbf{M}_{k_{j+1}}  \uu_{k_{j+1}} )$ almost surely as $N,T \to \infty$.\footnote{Note that $\bar{\psi}_{k_1, \ldots, k_m}$ is nothing but the product of off-diagonal entries of $\bar\bPhi_1, \bar\bPhi_2$ defined in \eqref{eq:def_bar_Phi}.}
\end{enumerate}
\end{Theorem}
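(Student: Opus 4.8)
The plan is to deduce both parts from a single \emph{generalized eigenvector equivalent}: for every deterministic $\M\in\CC^{N\times N}$ with $\|\M\|=O(1)$ and each spike index $k\in\{1,\ldots,K\}$,
\begin{equation}\label{eq:plan-KL}
\hat\uu_k^\H\M\hat\uu_k - g_k\,\uu_k^\H\M\uu_k - \tfrac{1-g_k}{N}\tr(\M)\ \to\ 0
\end{equation}
almost surely. To obtain \eqref{eq:plan-KL} I decompose $\hat\uu_k=\boldsymbol{\Pi}\hat\uu_k+(\I_N-\boldsymbol{\Pi})\hat\uu_k$ along the signal subspace $\boldsymbol{\Pi}=\sum_{j=1}^{K}\uu_j\uu_j^\H$ spanned by the steering vectors (recall from \eqref{eq:ESPRIT_2} that $\mathrm{span}(\A)=\mathrm{span}(\uu_1,\ldots,\uu_K)$). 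By \eqref{eq:eigenvectors_bias}, $|\uu_k^\H\hat\uu_k|^2\to g_k$ and the cross-overlaps $\uu_j^\H\hat\uu_k\to0$ for $j\neq k$, so $\boldsymbol{\Pi}\hat\uu_k\to\alpha_k\uu_k$ with $\alpha_k\equiv\uu_k^\H\hat\uu_k$, $|\alpha_k|^2\to g_k$, while $\|(\I_N-\boldsymbol{\Pi})\hat\uu_k\|^2\to1-g_k$. The key probabilistic input is that, since the noise $\mathbf{N}$ in \eqref{eq:model_matrix_form} is isotropic Gaussian and any unitary $\mathbf{O}$ fixing $\mathrm{span}(\A)$ pointwise leaves $\A\S+\mathbf{N}$ equal in law, the component $(\I_N-\boldsymbol{\Pi})\hat\uu_k$ is \emph{conditionally uniform} in the $(N-K)$-dimensional noise subspace; hence $((\I_N-\boldsymbol{\Pi})\hat\uu_k)^\H\M(\I_N-\boldsymbol{\Pi})\hat\uu_k\to\tfrac{1-g_k}{N}\tr(\M)$ and the signal/noise cross term vanishes by concentration, which gives \eqref{eq:plan-KL}. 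Two sanity checks: $\M=\I_N$ returns $g_k+(1-g_k)=1=\|\hat\uu_k\|^2$, and $\M=\uu_k\uu_k^\H$ returns $g_k=|\uu_k^\H\hat\uu_k|^2$, recovering \eqref{eq:eigenvectors_bias}.

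Part~(1) is then immediate from \eqref{eq:plan-KL}. The matrix $\J_1^\H\J_1=\sum_m\ee_m\ee_m^\H$ is diagonal with exactly $n$ ones, so $\tr(\J_1^\H\J_1)=n$ and $\tfrac1N\tr(\J_1^\H\J_1)\to\tau$; taking $\M=\J_1^\H\J_1$ gives $[\hat\bPhi_1]_{kk}\to g_k\,\uu_k^\H\J_1^\H\J_1\uu_k+\tau(1-g_k)=[\bar\bPhi_1]_{kk}$. In contrast $\J_1^\H\J_2=\sum_m\ee_m\ee_{m+\Delta}^\H$ has zero diagonal for $\Delta\geq1$, so $\tr(\J_1^\H\J_2)=0$ and the isotropic term drops, giving $[\hat\bPhi_2]_{kk}\to g_k\,\uu_k^\H\J_1^\H\J_2\uu_k=[\bar\bPhi_2]_{kk}$. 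The $\tr(\M)$ term in \eqref{eq:plan-KL} is exactly what produces the structural asymmetry between $\bar\bPhi_1$ and $\bar\bPhi_2$ in \eqref{eq:def_bar_Phi}.

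For Part~(2) I deliberately do \emph{not} pass to a limit factor by factor: an individual off-diagonal entry $\hat\uu_{k_i}^\H\M\hat\uu_{k_j}\approx\alpha_i^{*}\alpha_j\,\uu_{k_i}^\H\M\uu_{k_j}$ carries the uncontrolled random phases $\arg(\alpha_i),\arg(\alpha_j)$ and has no deterministic limit, which is exactly the obstruction noted before the statement. Instead I keep the full cyclic product and write $\hat\psi_{k_1,\ldots,k_m}=\prod_{t=1}^{m}\hat\uu_{k_{t-1}}^\H\M_{k_t}\hat\uu_{k_t}$ (indices modulo $m$, $k_0=k_m$; every factor is off-diagonal since the $k_t$ are distinct). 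Decomposing each factor through $\boldsymbol{\Pi}$ as above splits it into a signal--signal part $\alpha_{t-1}^{*}\alpha_t\,\uu_{k_{t-1}}^\H\M_{k_t}\uu_{k_t}$ plus signal--noise, noise--signal and noise--noise parts; each of the latter is $o(1)$, the signal--noise ones by concentration of the isotropic component against the fixed covector $\M_{k_t}\uu$, and the noise--noise ones because the cross-overlap $((\I_N-\boldsymbol{\Pi})\hat\uu_{k_{t-1}})^\H(\I_N-\boldsymbol{\Pi})\hat\uu_{k_t}\to0$ for distinct spikes. Expanding the product, every term with at least one noise factor is a product of one $o(1)$ factor and $O(1)$ factors, hence $o(1)$, and there are only finitely many such terms; the all--signal term is $(\prod_{t}\alpha_{t-1}^{*}\alpha_t)(\prod_{t}\uu_{k_{t-1}}^\H\M_{k_t}\uu_{k_t})$, and going once around the cycle each $\alpha_t$ meets its own conjugate, so the phases cancel and $\prod_t\alpha_{t-1}^{*}\alpha_t=\prod_t|\alpha_t|^2\to\prod_t g_{k_t}$. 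This yields exactly $\hat\psi_{k_1,\ldots,k_m}\to\bar\psi_{k_1,\ldots,k_m}$, and the cyclic cancellation of phases is precisely what makes the product accessible although its factors are not.

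The hard part is the uniform, rigorous control of the isotropic (noise-subspace) components inside the product. Three ingredients must be made precise: the conditional uniformity of $(\I_N-\boldsymbol{\Pi})\hat\uu_k$ from the unitary symmetry of the model; the concentration of the induced bilinear forms against the $N$-dependent, data-dependent covectors $\M_{k_t}\uu$ and against the other eigenvectors, so that all signal--noise and noise--noise factors are genuinely $o(1)$; and the joint vanishing of cross-eigenvector overlaps for the distinct, well-separated spikes of \Cref{ass:subspace}. This isotropy-based control is what substitutes for the unavailable direct RMT access to off-diagonal entries. A purely resolvent-based alternative writes $\hat\psi=\big(\tfrac{-1}{2\pi\imath}\big)^m\oint_{\Gamma_{k_1}}\!\!\cdots\oint_{\Gamma_{k_m}}\tr\big(\prod_{t=1}^{m}\M_{k_t}\hat\Q(z_t)\big)\prod_t dz_t$ with $\hat\Q(z)=(\hat\C-z\I_N)^{-1}$ and each $\Gamma_{k_t}$ enclosing only $\bar\lambda_{k_t}$, and invokes a multi-resolvent deterministic equivalent in the sense of \Cref{sec:DE}; there the difficulty shifts to showing that the genuinely multi-resolvent corrections---those absent from a mere product of single-resolvent equivalents---either integrate to zero by analyticity inside the separated contours or carry an extra factor $N^{-1}$, which is the technically heaviest step of that route.
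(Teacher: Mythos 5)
Your proposal is essentially correct in strategy but follows a genuinely different route from the paper. The paper never decomposes $\hat \uu_k$ at all: it writes $\hat \uu_{k}\hat\uu_{k}^\H$ as a Cauchy contour integral of the resolvent of $\hat\C$, isolates the rank-$2K$ perturbation via the Woodbury identity, and evaluates the resulting single- and multi-contour integrals using the deterministic equivalents of \Cref{lem:DE} and \Cref{lem:E_Omega} followed by residue calculus; the cyclic products in Part~(2) become $m$-fold contour integrals of products of $2\times 2$ block matrices, and the bulk of the work (\Cref{lem:matrix_prod}) is showing that all but one block chain integrates to zero --- precisely the ``technically heaviest step'' you identify in your closing paragraph. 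Your route instead exploits the exact unitary invariance of the noise in the orthocomplement of $\mathrm{span}(\A)$ to get a signal-plus-isotropic decomposition of each $\hat\uu_k$, reduces Part~(1) to the single equivalent \eqref{eq:plan-KL} (whose $\tfrac{1-g_k}{N}\tr(\M)$ term cleanly explains why $\bar\bPhi_1$ picks up the $\tau(\I_K-\diag(\mathbf g))$ correction while $\bar\bPhi_2$ does not, since $\tr(\J_1^\H\J_2)=0$), and obtains Part~(2) from the cancellation of the eigenvector phases $\alpha_k$ around the cycle. What your approach buys is conceptual transparency and a much lighter computation; what the paper's approach buys is that it never needs the Gaussianity/isotropy of the noise beyond what enters the deterministic equivalents, and it produces the explicit rates used later (e.g.\ in \Cref{rem:N-con_for_two}).

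One local justification in your plan is insufficient as written: you dismiss the noise--noise factor $\vv_{k_i}^\H\M\vv_{k_j}$ ($i\neq j$) ``because the cross-overlap $\vv_{k_i}^\H\vv_{k_j}\to 0$.'' For a general non-normal $\M$ such as $\J_1^\H\J_2$, two asymptotically orthogonal unit vectors can still have $\mathbf w_i^\H\M\mathbf w_j$ of order one, so the vanishing overlap alone proves nothing. The repair is available inside your own framework: the tuple of noise components is jointly equivariant under unitaries of $\mathrm{span}(\A)^\perp$, so conditionally on its Gram matrix (and on all signal components, which the group fixes) it is uniform on its orbit; writing $\vv_{k_j}$ as its projection onto $\vv_{k_i}$ plus a conditionally uniform orthogonal remainder kills the first piece by the vanishing overlap and the second by concentration of a linear form against the fixed covector $\M^\H\vv_{k_i}$. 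With that (and Borel--Cantelli over the exponential concentration on the sphere to upgrade to almost-sure convergence), your argument closes.
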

\begin{proof}[Proof of \Cref{theo:CDT}]
See \Cref{subsec:proof_CDT} of the appendix.
\end{proof}
\noindent
\Cref{theo:CDT} provides, in the large-dimensional regime of \Cref{ass:large_array}, characterizations of the diagonal entries of $\hat \bPhi_1,\hat \bPhi_2$ and \emph{any} product of their off-diagonal entries when their indices form a circle.
Using \Cref{theo:CDT}, we have the following result for the product of the off-diagonal entries of $\hat\bPhi = \hat \bPhi_1^{-1} \hat \bPhi_2$, again when their indices form a circle.

\begin{Lemma}[Large-dimensional characterization of $\hat\bPhi$]\label{lem:entrywise_bphi}
For $\hat\bPhi$, $\bar\bPhi$ defined in \Cref{theo:main}, and \emph{any} $m$-node cycle of indices $1\leq i_1 < \ldots < i_m \leq K$, the entries of $\hat\bPhi$, $\bar\bPhi$ satisfy
\begin{align*}
    [\hat \bPhi]_{i_1 i_2}[\hat \bPhi]_{i_2 i_3}\ldots[\hat \bPhi]_{i_m i_1} - [\bar \bPhi]_{i_1 i_2}[\bar \bPhi]_{i_2 i_3}\ldots[\bar \bPhi]_{i_m i_1} \to 0,
\end{align*}
almost surely as $N,T \to \infty$.
\end{Lemma}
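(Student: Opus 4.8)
The plan is to reduce the cyclic-product convergence for $\hat\bPhi$ to quantities already controlled by \Cref{theo:CDT}, by clearing the matrix inverse through the adjugate (classical adjoint) representation $\operatorname{adj}(\cdot)$. Writing $\hat\bPhi = \hat\bPhi_1^{-1}\hat\bPhi_2 = \frac{1}{\det\hat\bPhi_1}\,\operatorname{adj}(\hat\bPhi_1)\,\hat\bPhi_2$ and setting $\hat{\boldsymbol{\Psi}} \equiv \operatorname{adj}(\hat\bPhi_1)\hat\bPhi_2$ (and analogously $\bar{\boldsymbol{\Psi}} \equiv \operatorname{adj}(\bar\bPhi_1)\bar\bPhi_2$), the cyclic product factors as
\[
\prod_{p=1}^m [\hat\bPhi]_{i_p i_{p+1}} = \frac{\prod_{p=1}^m [\hat{\boldsymbol{\Psi}}]_{i_p i_{p+1}}}{(\det\hat\bPhi_1)^m}, \qquad i_{m+1}\equiv i_1 .
\]
It then suffices to prove (i) $\det\hat\bPhi_1 \to \det\bar\bPhi_1$ with $\det\bar\bPhi_1\neq 0$, and (ii) $\prod_{p=1}^m[\hat{\boldsymbol{\Psi}}]_{i_p i_{p+1}} \to \prod_{p=1}^m[\bar{\boldsymbol{\Psi}}]_{i_p i_{p+1}}$, both almost surely; the claim follows by taking the ratio.

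For (i), I would expand $\det\hat\bPhi_1$ via the Leibniz/permutation formula and decompose each permutation into disjoint cycles exactly as in the proof of \Cref{theo:consiseigen}. Each resulting factor is either a diagonal entry of $\hat\bPhi_1$ or a cyclic product $[\hat\bPhi_1]_{k_1k_2}\cdots[\hat\bPhi_1]_{k_mk_1}$ of off-diagonal entries over distinct indices, which converge to their $\bar\bPhi_1$-counterparts by parts 1 and 2 of \Cref{theo:CDT}, respectively. Hence $\det\hat\bPhi_1\to\det\bar\bPhi_1$ almost surely. Nonsingularity of the limit is immediate: since $g_k\in(0,1)$ and $\tau>0$, the matrix $\tau(\I_K-\diag(\mathbf{g}))$ is positive definite, while $\diag(\sqrt{\mathbf{g}})\bPhi_1\diag(\sqrt{\mathbf{g}})$ is positive semidefinite ($\bPhi_1=\U_K^\H\J_1^\H\J_1\U_K$ being a Gram matrix), so $\bar\bPhi_1\succ 0$ and $\det\bar\bPhi_1>0$.

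The crux is (ii), where the difficulty is the phase indeterminacy of the sample eigenvectors: $\hat\uu_k$ is only defined up to a unit-modulus factor, so individual off-diagonal entries of $\hat\bPhi_1,\hat\bPhi_2$—and hence of $\hat{\boldsymbol{\Psi}}$—do \emph{not} converge, and only phase-invariant combinations are accessible through \Cref{theo:CDT}. I would first verify that the target quantity is phase-invariant: under $\hat\uu_k\mapsto e^{\imath\phi_k}\hat\uu_k$, i.e. $\hat\bPhi_s\mapsto \mathbf{D}^\H\hat\bPhi_s\mathbf{D}$ with $\mathbf{D}=\diag(e^{\imath\phi_1},\ldots,e^{\imath\phi_K})$ for $s\in\{1,2\}$, one has $\operatorname{adj}(\mathbf{D}^\H\hat\bPhi_1\mathbf{D})=\mathbf{D}^\H\operatorname{adj}(\hat\bPhi_1)\mathbf{D}$ and $\det(\mathbf{D}^\H\hat\bPhi_1\mathbf{D})=\det\hat\bPhi_1$, whence $\hat{\boldsymbol{\Psi}}\mapsto\mathbf{D}^\H\hat{\boldsymbol{\Psi}}\mathbf{D}$ and the cyclic product $\prod_p[\hat{\boldsymbol{\Psi}}]_{i_pi_{p+1}}$ is unchanged. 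Expanding the adjugate entries as signed minors and multiplying out, $\prod_p[\hat{\boldsymbol{\Psi}}]_{i_pi_{p+1}}$ becomes a finite sum of monomials in the entries of $\hat\bPhi_1$ and $\hat\bPhi_2$. Viewing each entry $[\hat\bPhi_s]_{ab}$ as a directed edge $a\to b$, phase-invariance of the total forces every surviving monomial to be \emph{balanced} (each index occurs as often as a row/bra index as it does as a column/ket index); a balanced directed multigraph decomposes into edge-disjoint simple cycles, so each monomial factors into cyclic entry-products over distinct indices together with diagonal entries. Applying \Cref{theo:CDT} factor by factor yields termwise convergence to the corresponding monomials of $\prod_p[\bar{\boldsymbol{\Psi}}]_{i_pi_{p+1}}$, establishing (ii).

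Finally, combining (i) and (ii) and recalling $\bar\bPhi=\bar\bPhi_1^{-1}\bar\bPhi_2=\frac{1}{\det\bar\bPhi_1}\bar{\boldsymbol{\Psi}}$ gives $\prod_{p=1}^m[\hat\bPhi]_{i_pi_{p+1}}\to\prod_{p=1}^m[\bar\bPhi]_{i_pi_{p+1}}$ almost surely. I expect the main obstacle to be step (ii): converting the adjugate expansion into a sum of phase-invariant cyclic products, since the phase ambiguity of the eigenvectors rules out any argument passing through convergence of individual off-diagonal entries, and care is needed to argue that the unbalanced monomials genuinely drop out (e.g. by grouping monomials according to the character of the underlying torus action, so that only the trivial-character, balanced terms survive).
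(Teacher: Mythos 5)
Your proposal is correct and follows essentially the same route as the paper: both clear the inverse via the adjugate/cofactor expansion of $\hat\bPhi_1^{-1}$, decompose the resulting permutations into disjoint cycles plus an open path, observe that the open paths concatenate around $i_1\to i_2\to\cdots\to i_m\to i_1$ into closed cycles, and then invoke \Cref{theo:CDT} on each cycle factor. The only substantive additions on your side are welcome refinements the paper leaves implicit — the explicit convergence of $\det\hat\bPhi_1$ (itself a sum of cycle products) and the check that $\det\bar\bPhi_1>0$ since $\bar\bPhi_1$ is a positive-semidefinite Gram term plus the positive-definite $\tau(\I_K-\diag(\mathbf{g}))$ — while your phase-invariance/character argument for discarding unbalanced monomials is a valid alternative to the paper's direct combinatorial bookkeeping.
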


\begin{proof}[Proof of \Cref{lem:entrywise_bphi}]
See \Cref{subsec:proof_of_lem:entrywise_bphi} of the appendix.
\end{proof}

With \Cref{lem:entrywise_bphi} at hand, we then have, for any $k$-by-$k$ principal minors of $\hat \bPhi$ and $\bar \bPhi$, denoted respectively as $\hat\bPhi [\mathcal{I}_k], \bar\bPhi [\mathcal{I}_k]$ that
\begin{align*}
    &| \hat\bPhi [\mathcal{I}_k] - \bar\bPhi [\mathcal{I}_k] | \\
    &= \Big| \sum_{\sigma} \sign(\sigma) \Big( \prod_{j \in \mathcal I_k} [\hat \bPhi]_{j \sigma(j)} - \prod_{j \in \mathcal I_k} [\bar \bPhi]_{j \sigma(j)} \Big) \Big| \\
    &= \sum_{\sigma} \Big| \prod_{i=1}^p \underbrace{[\hat \bPhi]_{j_1 j_2} \ldots [\hat \bPhi]_{j_{m_i} j_1}}_{\text{indices form an $m_i$-node circle}}  - \prod_{i=1}^p  [\bar \bPhi]_{j_1 j_2} \ldots [\bar \bPhi]_{j_{m_i} j_1} \Big| \\
    & \to 0,
\end{align*}
almost surely as $N,T \to \infty$.
Furthermore, the $k$-th order coefficient of the characteristic polynomial of $\hat \bPhi$ and $\bar \bPhi$ satisfies $|S_k(\hat \bPhi) - S_k(\bar \bPhi)|\to 0$.
Since the DoAs are pairwise distinct and $\bar{\bPhi}$ is assumed in \Cref{theo:main} to be nonsingular with distinct eigenvalues, Theorem 4 ensures that the eigenspectral discrepancy $\lambda_k(\hat{\bm{\Phi}})-\lambda_k(\bar{\bm{\Phi}})$ converges to zero as $N,T \to \infty$.
Hence, $\frac{\lambda_k(\hat{\bm{\Phi}})}{\lambda_k(\bar{\bm{\Phi}})} \to 1$.
Therefore, this ratio eventually remains in a neighborhood of 1 on which the argument is continuous, and thus $\Delta(\hat{\theta}_k-\bar{\theta}_k) = \arg(\frac{\lambda_k(\hat{\bm{\Phi}})}{\lambda_k(\bar{\bm{\Phi}})}) \to 0$.
Consequently, the sorted eigenvalue phases admit a one-to-one correspondence, which completes the proof of \Cref{theo:main}.

\section{Consistent DoA Estimation with G-ESPRIT}
\label{sec:G-ESPRIT}

We have seen in \Cref{theo:main} and the discussions thereafter that classical ESPRIT in \Cref{alg:TraESPRIT} is, in general, \emph{incapable} of providing consistent DoA estimates in the large array and limited snapshot regime.
In this section, we present an improved approach: the generalized ESPRIT (G-ESPRIT) method that fixes the large-dimensional inconsistency of classical ESPRIT in this regime.

The G-ESPRIT method is as simple as classical ESPRIT, but with the large-dimensional ``bias'' terms of latter consistently estimated and removed.
Precisely, it follows from \Cref{theo:main} that the top subspace $\hat \U_K$ of SCM contains additional large-dimensional bias terms (in $\bar \bPhi$) of the form $g_k = \frac{ 1- c \ell_k^{-2} }{1 + c \ell_k^{-1}}$ defined in \eqref{eq:def_g}.
These quantities, for known dimension ratio $N/T$, can be consistently estimated from the SCM per the following result.

\begin{Lemma}[Consistent estimates of $g_k$]\label{lem:estimate_g_h}
Under Assumptions~\ref{ass:large_array}~and~\ref{ass:subspace}, let $\hat \lambda_k$ be the $k$th largest eigenvalue of SCM $\hat \C$, $g_k$ be defined in \eqref{eq:def_g}, $k \in \{1, \ldots, K \}$,
Define
\begin{equation}\label{eq:def_hat_g_h}
    \hat g_k 
    = \frac{1 - \frac{N}{T}\hat\ell_k^{-2}}
           {1 + \frac{N}{T}\hat\ell_k^{-1}},
\end{equation}
where the estimated SNR $\hat\ell_k$ is defined as
\begin{equation*}
\hat\ell_k = \frac{1}{2}\Big(\frac{\hat\lambda_k}{\hat{\sigma}^2} - 1 - \frac{N}{T} + \sqrt{ \left(\frac{\hat\lambda_k}{\hat{\sigma}^2} - 1 - \frac{N}{T}\right)^2- \frac{4N}{T} } \Big),
\end{equation*}
with $\hat{\sigma}^2$ denoting the noise variance estimate $\hat\sigma^2 = \frac{1}{N-K} \sum_{i=K+1}^N \hat\lambda_i$.
Then, $\hat\ell_k - \ell_k \to 0$ and $\hat g_k - g_k \to 0$ almost surely as $N,T\to\infty$.
\end{Lemma}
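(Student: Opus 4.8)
The plan is to recognize $\hat\ell_k$ and $\hat g_k$ as continuous functions of the empirical eigenvalue $\hat\lambda_k$ and the finite-sample ratio $N/T$, and then to invoke the almost sure convergence $\hat\lambda_k \to \bar\lambda_k$ furnished by \eqref{eq:eigenvalue_bias} of \Cref{theo:spikes} together with $N/T \to c$. The key observation is that the ``forward'' spike map $\ell \mapsto \bar\lambda = 1 + \ell + c(1+\ell)/\ell$ appearing in \eqref{eq:eigenvalue_bias} is invertible on the regime $\ell > \sqrt c$ of \Cref{ass:subspace}, and that $\hat\ell_k$ in the statement is precisely this inverse evaluated at the empirical pair $(\hat\lambda_k, N/T)$.

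First I would invert the map. Rearranging $\bar\lambda = 1 + \ell + c(1+\ell)/\ell$ yields the quadratic $\ell^2 - (\bar\lambda - 1 - c)\ell + c = 0$, whose two roots have product $c$ and sum $\bar\lambda - 1 - c = \ell + c/\ell$. A direct computation shows that the discriminant factors as
\begin{equation*}
(\bar\lambda - 1 - c)^2 - 4c = (\ell + c/\ell)^2 - 4c = (\ell - c/\ell)^2 \geq 0,
\end{equation*}
with strict positivity precisely when $\ell \neq \sqrt c$. Under \Cref{ass:subspace} we have $\ell_k > \sqrt c$, hence $\ell_k - c/\ell_k > 0$, and the larger (``$+$'') root recovers $\ell_k$ exactly:
\begin{equation*}
\tfrac12\Big( (\bar\lambda_k - 1 - c) + \sqrt{(\bar\lambda_k - 1 - c)^2 - 4c} \Big) = \tfrac12\Big( (\ell_k + c/\ell_k) + (\ell_k - c/\ell_k) \Big) = \ell_k.
\end{equation*}
This identifies the correct branch and confirms that $\hat\ell_k$ is obtained by substituting $(\hat\lambda_k, N/T)$ into this same inverse map.

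It then remains to pass to the limit by continuity. Since $\hat\lambda_k \to \bar\lambda_k$ almost surely and $N/T \to c$, the argument of the square root converges almost surely,
\begin{equation*}
\Big(\hat\lambda_k - 1 - \tfrac{N}{T}\Big)^2 - \tfrac{4N}{T} \to (\bar\lambda_k - 1 - c)^2 - 4c = (\ell_k - c/\ell_k)^2 > 0,
\end{equation*}
so for all $N,T$ large enough the square root is well-defined and $\hat\ell_k$ is a genuine real quantity. The inverse map $(\lambda, c) \mapsto \tfrac12((\lambda - 1 - c) + \sqrt{(\lambda-1-c)^2 - 4c})$ is continuous on the open region where the discriminant is positive, so $\hat\ell_k \to \ell_k$ almost surely. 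Finally, $\hat g_k$ in \eqref{eq:def_hat_g_h} is the image of $(\hat\ell_k, N/T)$ under $(\ell, c) \mapsto (1 - c\ell^{-2})/(1 + c\ell^{-1})$, which is continuous near $(\ell_k, c)$ because $\ell_k > \sqrt c > 0$ keeps the denominator bounded away from zero; a further application of continuity gives $\hat g_k \to g_k$ almost surely.

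The main obstacle is the well-definedness of $\hat\ell_k$: one must ensure the square root has nonnegative argument for all large $N,T$, so that $\hat\ell_k$ is a real-valued estimate rather than a spurious complex number. This is exactly where the strict separation $\ell_k > \sqrt c$ of \Cref{ass:subspace}—equivalently, operating strictly above the phase-transition threshold—enters, through the factorization of the discriminant as $(\ell_k - c/\ell_k)^2$, which is bounded away from zero in the limit and hence stays positive eventually.
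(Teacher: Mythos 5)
Your proof is correct and follows essentially the same route as the paper's: invert the spike map $\ell \mapsto 1+\ell+c(1+\ell)/\ell$ from \eqref{eq:eigenvalue_bias} and conclude by the continuous mapping theorem. The paper states this in one line, whereas you additionally verify the branch selection and the positivity of the discriminant via the factorization $(\bar\lambda-1-c)^2-4c=(\ell-c/\ell)^2$ under $\ell_k>\sqrt c$ — a useful elaboration, but not a different argument.
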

\begin{proof}[Proof of \Cref{lem:estimate_g_h}]
Under Assumptions~\ref{ass:large_array} and \ref{ass:subspace}, it follows from \Cref{theo:spikes} that $\hat\lambda_k$ satisfies $\hat\lambda_k \to \sigma^2\!\big( 1 + \ell_k + c\frac{1+\ell_k}{\ell_k}\big)$ almost surely as $N,T\to\infty$.
In practice, the true noise variance $\sigma^2$ is not known and is replaced by the consistent estimator $\hat\sigma^2 = \frac{1}{N-K} \sum_{i=K+1}^N \hat\lambda_i$, which satisfies $\hat\sigma^2\!\to\!\sigma^2$ almost surely.
Inverting the expression and using continuous mapping theorem, we have that $\hat\ell_k - \ell_k \to 0$.
Finally, by the definition of $g_k$ and continuous mapping theorem, we obtain $\hat g_k - g_k \to 0$, which concludes the proof.
\end{proof}

\noindent
\Cref{lem:estimate_g_h} provides consistent estimates of the bias terms in classical ESPRIT.
These estimates, combining with \Cref{theo:main}, lead to the G-ESPRIT DoA estimation procedure summarized in \Cref{alg:G-ESPRIT}, with the following guarantee.

\begin{Proposition}[Consistent DoA estimation with G-ESPRIT]\label{prop:gesprit}
Assume that Assumptions~\ref{ass:large_array}~and~\ref{ass:subspace} hold.
In addition, suppose that either \Cref{ass:widely_spaced}~or~\Cref{ass:closely_spaced} is satisfied.
Denote $\theta_k$ the true DoAs, we have, for $k \in \{ 1, \ldots, K \}$ and $ \hat{\theta}_k^G \equiv \arg(\lambda_k(\hat{\bPhi}^G))/\Delta$ that
    \begin{equation}
        \Delta( \hat{\theta}_k^G  - \theta_k) \to 0,
    \end{equation}
    almost surely as $N,T \to \infty$, with $\lambda_k(\hat{\bPhi}^G)$ the $k$th largest eigenvalue of $\hat{{\bPhi}}^G = (\hat{{\bPhi}}^G_1)^{-1}\hat{\bPhi}_2^G$ with
    \begin{equation}\label{eq:def_Tilde_Phi}
    \begin{split}
        \hat{ \bPhi}_1^G &= \diag(\hat{\mathbf{g}}^{-1/2}) \left( \hat{\bPhi}_1 - \tau \I_K \right) \diag(\hat{\mathbf{g}}^{-1/2})+\tau \I_K , \\
        \hat{ \bPhi}_2^G & = \diag(\hat{\mathbf{g}}^{-1/2}) \hat{\bPhi}_2 \diag(\hat{\mathbf{g}}^{-1/2}),
    \end{split}
    \end{equation}
    for $\hat{\mathbf{g}}^{-1/2} =[1/\sqrt{\hat g_1}, \ldots, 1/\sqrt{\hat g_K}]^\T$ and $\hat g_k$ as defined in \eqref{eq:def_hat_g_h}.
\end{Proposition}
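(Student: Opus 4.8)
The plan is to mirror the proof of \Cref{theo:main}, but to show that the corrected matrix $\hat{\bPhi}^G = (\hat{\bPhi}_1^G)^{-1}\hat{\bPhi}_2^G$ is asymptotically equivalent---in the sense of cycle products of its entries---to the \emph{population} matrix $\bPhi = \bPhi_1^{-1}\bPhi_2$ of \eqref{eq:def_Phi}, rather than to the biased limit $\bar{\bPhi}$ of \Cref{theo:main}. The correction in \eqref{eq:def_Tilde_Phi} is designed precisely so that the limiting $\diag(\sqrt{\mathbf{g}})$ bias isolated in \Cref{theo:main} is undone: writing out \eqref{eq:def_bar_Phi}, one checks the deterministic identities $\diag(\mathbf{g}^{-1/2})(\bar{\bPhi}_1 - \tau\I_K)\diag(\mathbf{g}^{-1/2}) + \tau\I_K = \bPhi_1$ and $\diag(\mathbf{g}^{-1/2})\bar{\bPhi}_2\diag(\mathbf{g}^{-1/2}) = \bPhi_2$, so that the target of the corrected procedure is exactly $\bPhi$. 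It then suffices to show that the random matrices $\hat{\bPhi}_1^G,\hat{\bPhi}_2^G$ converge, entrywise on the diagonal and through cyclic products off the diagonal, to $\bPhi_1,\bPhi_2$, after which \Cref{lem:entrywise_bphi}, \Cref{theo:consiseigen}, and \Cref{theo:continuity} can be reused essentially verbatim.

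First I would invoke \Cref{lem:estimate_g_h} to replace $\hat{\mathbf{g}}$ by $\mathbf{g}$ asymptotically, since $\hat g_k \to g_k$ almost surely. For the diagonal entries, I combine the diagonal statement of \Cref{theo:CDT} with the explicit form of $\bar{\bPhi}_1,\bar{\bPhi}_2$ in \eqref{eq:def_bar_Phi}: since $[\bar{\bPhi}_2]_{kk} = g_k[\bPhi_2]_{kk}$ and $[\bar{\bPhi}_1]_{kk} = g_k[\bPhi_1]_{kk} + \tau(1-g_k)$, the definitions in \eqref{eq:def_Tilde_Phi} give $[\hat{\bPhi}_2^G]_{kk} = \hat g_k^{-1}[\hat{\bPhi}_2]_{kk} \to g_k^{-1}[\bar{\bPhi}_2]_{kk} = [\bPhi_2]_{kk}$ and $[\hat{\bPhi}_1^G]_{kk} = \hat g_k^{-1}([\hat{\bPhi}_1]_{kk} - \tau) + \tau \to [\bPhi_1]_{kk}$, the $\tau$-shift being absorbed cleanly.

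The heart of the argument is the off-diagonal cycle products. Off the diagonal the $\pm\tau\I_K$ terms in \eqref{eq:def_Tilde_Phi} play no role, so $[\hat{\bPhi}_\ell^G]_{ij} = \hat g_i^{-1/2}\hat g_j^{-1/2}[\hat{\bPhi}_\ell]_{ij}$ for $i\neq j$ and $\ell\in\{1,2\}$. Around any $m$-node cycle $i_1 \to \cdots \to i_m \to i_1$ each index occurs exactly once as a row and once as a column, so the correction factors collapse to $\prod_{k}\hat g_{i_k}^{-1}$; the cycle-product statement of \Cref{theo:CDT} shows that the product of $[\hat{\bPhi}_\ell]$ entries converges to $\big(\prod_k g_{i_k}\big)$ times the population cycle product of $[\bPhi_\ell]$ entries, and with $\hat g_k \to g_k$ the two $g$-products cancel \emph{exactly}, leaving the population cycle product of $\bPhi_\ell$. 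Verifying this exact cancellation---and checking that it remains compatible with the $\tau$-shift that acts only on the diagonal of $\hat{\bPhi}_1^G$---is the main obstacle, since the individual off-diagonal entries are \emph{not} controllable and only their cyclic products are accessible through \Cref{theo:CDT}.

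With diagonal convergence and cycle-product convergence of $\hat{\bPhi}_1^G,\hat{\bPhi}_2^G$ to $\bPhi_1,\bPhi_2$ in hand, I would repeat the argument of \Cref{lem:entrywise_bphi} (expanding $(\hat{\bPhi}_1^G)^{-1}$ through its adjugate and determinant, with invertibility for large $N,T$ following from that of $\bPhi_1$) to obtain that every cycle product of $\hat{\bPhi}^G$ converges almost surely to the corresponding cycle product of $\bPhi$. Feeding this into \Cref{theo:consiseigen} and \Cref{theo:continuity} yields $\lambda_k(\hat{\bPhi}^G) - \lambda_k(\bPhi) \to 0$. It remains only to identify $\lambda_k(\bPhi)$: combining the rotational-invariance relation \eqref{eq:ESPRIT_1} with the subspace identity \eqref{eq:ESPRIT_2} and telescoping the common factors $\mathbf{G}\equiv\A^\H\J_1^\H\J_1\A$ and $\P^{1/2}$ gives $\bPhi = \M^{-1}\P^{1/2}\diag\{e^{\imath\Delta\theta_k}\}_{k=1}^{K}\P^{-1/2}\M$, which is similar to $\diag\{e^{\imath\Delta\theta_k}\}$ and hence has eigenvalues $e^{\imath\Delta\theta_k}$ regardless of the source correlation $\P$. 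Therefore $\arg(\lambda_k(\bPhi))/\Delta = \theta_k$, and $\Delta(\hat{\theta}_k^G - \theta_k) \to 0$ almost surely, as claimed.
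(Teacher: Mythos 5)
Your proposal is correct and follows exactly the route the paper indicates (and omits): combine the consistency of $\hat g_k$ from \Cref{lem:estimate_g_h} with the diagonal and cycle-product characterizations of \Cref{theo:CDT}, observe that the correction in \eqref{eq:def_Tilde_Phi} exactly inverts the bias map of \eqref{eq:def_bar_Phi} (your deterministic identities and the cancellation of $\prod_k \hat g_{i_k}^{-1}$ against $\prod_k g_{i_k}$ around each cycle both check out), and then rerun \Cref{lem:entrywise_bphi}, \Cref{theo:consiseigen}, and \Cref{theo:continuity} with $\bPhi$ as the target. Your closing identification $\bPhi = \M^{-1}\P^{1/2}\diag\{e^{\imath\Delta\theta_k}\}\P^{-1/2}\M$ is also consistent with the paper's derivation of \eqref{eq:def_Phi}, so nothing is missing.
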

\noindent
The proof of \Cref{prop:gesprit} follows the same line of argument as that of \Cref{theo:main}.
In particular, the almost sure convergence of $\hat \ell_k$ and $\hat g_k$ in \Cref{lem:estimate_g_h}, together with the arguments in Sections~\ref{subsec:proof_framework}~and~\ref{subsec:proof_of_main}, implies $\lambda_k(\hat{\bm{\Phi}}^G) \to \lambda_k(\bm{\Phi}) $almost surely. Since $\lambda_k(\bPhi)\neq 0$, we have $\lambda_k(\hat{\bm{\Phi}}^G)/\lambda_k(\bm{\Phi}) \to 1$, and hence $\Delta( \hat{\theta}_k^G  - \theta_k)=\arg(\lambda_k(\hat{\bm{\Phi}}^G)/\lambda_k(\bm{\Phi}) ) \to 0$ almost surely. 
In practical scenarios where the number of sources $K$ is not known a priori, the proposed G-ESPRIT estimator can be seamlessly combined with standard source enumeration techniques, such AIC, MDL~\cite{1164557}, PDL~\cite{valaee2004information}, and RMT-based estimation methods~\cite{baik2006eigenvalues}, as a pre-processing step.

A few remarks and discussions on \Cref{prop:gesprit} are in order.
\begin{Remark}[Special case: G-ESPRIT for widely-spaced DoAs]\normalfont
For widely-spaced DoAs in \Cref{ass:widely_spaced}, it follows from \Cref{rem:wide_correlated} and \Cref{lem:AJJA} in \Cref{sec:techLemmas} of the appendix that $\hat \bPhi_1$ is approximately diagonal with \emph{real} diagonal entries, and it thus suffices to ``de-bias'' $\hat \bPhi_2$ as
\begin{equation}
    \hat \bPhi_2^G = \diag(\hat{\mathbf{g}}^{-1/2}) \hat{\bPhi}_2 \diag(\hat{\mathbf{g}}^{-1/2}),
\end{equation}
and that both $\lambda_k( \hat {\bPhi}_1^{-1} \hat{\bPhi}_2^G)/\Delta$ and $\lambda_k(\hat{\bPhi}_2^G)/\Delta$ alone in fact yield consistent DoA estimation.
\end{Remark}

\begin{algorithm}[t!]
    \renewcommand{\algorithmicrequire}{\textbf{Input:}}
    \renewcommand{\algorithmicensure}{\textbf{Output:}}
    \caption{The proposed G-ESPRIT DoA estimation.}
    \label{alg:G-ESPRIT}    
    \begin{algorithmic}[1]
        \REQUIRE Received signal $\X \in \CC^{N \times T}$, number of sources $K$.
        \ENSURE Estimated DoA angles $\hat\theta_k^G, k \in \{ 1,\ldots, K\}$.
        \STATE Compute the SCM $\hat \C = \X \X^\H/T$ to retrieve $\hat \U_K = [\hat \uu_1, \ldots, \hat \uu_K] \in \CC^{N \times K}$ the estimated signal subspace composed of the top-$K$ eigenvectors $\hat \uu_1, \ldots, \hat \uu_K \in \CC^N$ associated to the largest $K$ eigenvalues of $\hat \C$;

    \STATE Define two selection matrices $\J_1, \J_2 \in \RR^{n \times N}$ as in \eqref{eq:def_J} that both select $n$ among $N$ rows with distance $\Delta \geq 1$;

    \STATE Compute $\hat{\mathbf{\Phi}}_1, \hat{\mathbf{\Phi}}_2$ using $\hat \U_K$ and $\J_1,\J_2$ as in \eqref{eq:def_hat_Phi};

    \STATE Compute $\hat{\bPhi}^G$ as in \Cref{prop:gesprit} and then the \emph{angles} of $\lambda_k(\hat{\mathbf{\Phi}}^G)$, the $k$th complex eigenvalue of $\hat{\mathbf{\Phi}}^G$;

    \STATE \textbf{return} $\hat\theta_k^G = \arg(\lambda_k(\hat{\mathbf{\Phi}}^G))/\Delta, k \in \{ 1,\ldots, K\}$.
\end{algorithmic}  
\end{algorithm}

\Cref{prop:gesprit} tells us that the G-ESPRIT approach in \Cref{alg:G-ESPRIT} provides consistent DoA estimates in the large array and limited snapshot regime while the classical ESPRIT in \Cref{alg:TraESPRIT} cannot.
However, \Cref{prop:gesprit} alone provides theoretical guarantees for G-ESPRIT in the widely-spaced DoA (\Cref{ass:widely_spaced}) but \emph{not} in the closely-spaced DoA (\Cref{ass:closely_spaced}) setting.
In the latter case, one needs $N$-consistent estimator to separate closely-spaced DoAs, see our discussion in \Cref{sec:inconsist} above.
This $N$-consistency can be shown under an additional assumption on the subarray ``distance'' $\Delta$ as follow.

\begin{Corollary}[$N$-consistency of G-ESPRIT]\label{coro:N_consistency}
Under the notations and settings of \Cref{prop:gesprit}, assume in addition that the subarray distance $\Delta$ satisfies that $\Delta/N $ converges in $(0,1)$, then, the G-ESPRIT estimates $\hat{\theta}_k^G$ are $N$-consistent, that is
\begin{equation}
    N (\hat{\theta}_k^G - \theta_k ) \to 0, \quad k \in \{1, \ldots , K \},
\end{equation}
almost surely as $N,T \to \infty$.
\end{Corollary}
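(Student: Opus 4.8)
The plan is to leverage the consistency already available from \Cref{prop:gesprit} together with the extra scaling $\Delta/N \to \delta \in (0,1)$ (write $\delta \equiv \lim \Delta/N$) to upgrade plain consistency to $N$-consistency. Writing $\hat{\theta}_k^G = \arg(\lambda_k(\hat{\bPhi}^G))/\Delta$ and recalling that the de-biasing in \eqref{eq:def_Tilde_Phi} drives $\hat{\bPhi}^G$ to the \emph{clean} matrix $\bPhi = \bPhi_1^{-1}\bPhi_2$ whose eigenvalues are exactly $e^{\imath\Delta\theta_k}$, the identity
\begin{equation*}
    N(\hat{\theta}_k^G - \theta_k) = \frac{N}{\Delta}\cdot \Delta(\hat{\theta}_k^G - \theta_k) = \frac{N}{\Delta}\bigl(\arg(\lambda_k(\hat{\bPhi}^G)) - \Delta\theta_k\bigr)
\end{equation*}
reduces the claim to two ingredients: (i) the almost-sure vanishing of $\Delta(\hat{\theta}_k^G - \theta_k)$, and (ii) the boundedness of the prefactor $N/\Delta \to 1/\delta < \infty$, which holds precisely because $\delta > 0$.

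The delicate point is (i) in the closely-spaced regime of \Cref{ass:closely_spaced}, and this is where the assumption $\Delta/N \to \delta$ does its real work. Writing $\theta_k = \theta + \alpha_k/N$, the arguments of the target eigenvalues satisfy $\Delta\theta_k - \Delta\theta_j = (\Delta/N)(\alpha_k - \alpha_j) \to \delta(\alpha_k - \alpha_j)$, so the eigenvalues $e^{\imath\Delta\theta_k}$ of $\bPhi$ stay separated by an $O(1)$ gap, bounded away from zero uniformly in $N$ (this is also why $\delta<1$ is imposed: the array geometry forces $\Delta \lesssim N(1-\tau)$, keeping these phase differences in a fixed range and the argument extraction unambiguous). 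With a fixed $\Delta$ these same phases would collapse at rate $O(N^{-1})$, the target roots would merge, and no argument could deliver the $o(N^{-1})$ resolution demanded by $N$-consistency (cf.\ \Cref{rem:N-consistent}).

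With the uniform gap in hand, I would re-run the perturbation machinery of \Cref{subsec:proof_of_main} verbatim: \Cref{lem:entrywise_bphi} gives, for every cycle of indices, the almost-sure convergence of the cyclic products of off-diagonal entries of $\hat{\bPhi}^G$ to those of $\bPhi$; the principal-minor expansion of \Cref{theo:consiseigen} then forces every coefficient of the characteristic polynomial of $\hat{\bPhi}^G$ to converge to that of $\bPhi$; and \Cref{theo:continuity} converts this into $\lambda_k(\hat{\bPhi}^G) \to e^{\imath\Delta\theta_k}$ almost surely, with a root-perturbation constant $C'$ that stays bounded precisely because the limiting roots are $O(1)$-separated. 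Taking arguments yields $\Delta(\hat{\theta}_k^G - \theta_k) = \arg(\lambda_k(\hat{\bPhi}^G)) - \Delta\theta_k \to 0$ almost surely, and multiplying by the bounded factor $N/\Delta \to 1/\delta$ gives $N(\hat{\theta}_k^G - \theta_k) \to 0$. The main obstacle is thus step (i): guaranteeing that the continuity-of-roots constant does not blow up as $N \to \infty$, which hinges entirely on the $O(1)$ eigenvalue separation secured by the linear scaling $\Delta/N \to \delta$.
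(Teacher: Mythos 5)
Your proposal is correct and follows essentially the same route as the paper, which simply observes that the corollary is a direct consequence of \Cref{prop:gesprit} combined with $\Delta$ being of order $N$: writing $N(\hat{\theta}_k^G - \theta_k) = (N/\Delta)\cdot\Delta(\hat{\theta}_k^G - \theta_k)$ and noting the prefactor converges to $1/\delta < \infty$. Your additional discussion of the $O(1)$ eigenvalue separation of $e^{\imath\Delta\theta_k}$ in the closely-spaced regime is a reasonable elaboration of why the linear scaling of $\Delta$ matters, but it is not a departure from the paper's argument.
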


\noindent
\Cref{coro:N_consistency} is a direct consequence of \Cref{prop:gesprit} and the assumption that $\Delta$ is of order $N$.\footnote{As we shall see below in \Cref{sec:simu}, taking a large subarray distance $\Delta$ of order $N$ leads to small MSEs that are empirically close to the theoretically optimal Cramér--Rao Bound. 
The theoretical investigation of this observation, however, needs second-order analysis (of the fluctuation) of ESPRIT and/or G-ESPRIT estimators and is beyond the scope of this paper.}
It shows that G-ESPRIT in fact provides $N$-consistent DoA estimations in both widely-~and closely-spaced DoA scenarios, with estimation errors of order $o(N^{-1})$.
Typically, the estimation error is of order $O(N^{-3/2})$, as per the following remark.

\begin{Remark}[Precise estimation error for two DoAs]\normalfont \label{rem:N-con_for_two}
In the case of $K = 2$ DoAs, under the same notations and settings of \Cref{coro:N_consistency}, one has that
\begin{equation}
    \hat{\theta}_k^G - \theta_k = O(N^{-3/2}), \quad k \in \{ 1, 2 \}.
\end{equation}
See \Cref{subsec:proof_of_rem:N-con_for_two} for a proof of this result.
For $K>2$, it follows from \Cref{theo:continuity} that one has instead $\lambda(\hat{\bPhi}^G) - \lambda(\bPhi) = O(N^{-1/(2K)})$ and $\hat{\theta}_k^G - \theta_k = O(N^{-1 - 1/(2K)})$.
\end{Remark}

\section{Numerical Simulations}
\label{sec:simu}

In this section, we provide numerical simulations to validate our theoretical analysis of classical ESPRIT in \Cref{sec:inconsist} and the proposed G-ESPRIT method in \Cref{sec:G-ESPRIT}.
Precisely, in \Cref{subsec:validation} we provide simulations on not-so-large $N,T$, to validate our asymptotic analysis (as $N,T \to \infty$ together) in Sections~\ref{sec:inconsist}~and~\ref{sec:G-ESPRIT} for finite dimensional signals. 
Then, in \Cref{subsec:comparision}, we compare the proposed G-ESPRIT approach to other DoA estimation methods such as MUSIC and G-MUSIC~\cite{Vallet2012ACF}, as well as to the (theoretically optimal) Cramér--Rao Bound (CRB).
Code to reproduce the results in this section is available at \url{https://github.com/zhengyuwang0/GESPRIT}.

\subsection{On the choice of subarray size $n$ and distance $\Delta$}
The performance of ESPRIT-type algorithms depends critically on the subarray distance~$\Delta$ and the subarray size $n$.
To ensure unambiguous estimation, $\Delta$ must satisfy the \emph{angular unambiguity condition} $-\pi \leq \Delta \theta \leq \pi$, so that the phase difference between the two subarrays remains within the 
principal interval. 
Accordingly, in the simulations performed in this section, we choose,
\begin{enumerate}
 \item in the case of widely-spaced DoAs and without prior information about the DoAs, $\Delta=1$ to avoid this phase ambiguity, and $n = N-1 $; and
 \item in the case of closely-spaced DoAs with a narrow field-of-view $\theta \in [\theta_{\min},\theta_{\max}]$, $\Delta$ is chosen close to, but strictly smaller than $\lfloor \frac{\pi}{\max_{\theta\in[\theta_{\min},\theta_{\max}]} |\theta|} \rfloor $, so as to avoid phase ambiguity while maintaining high angular resolution. 
\end{enumerate} 
When widely-~and closely-spaced DoAs coexist, no single displacement $\Delta$ is optimal for all sources. 
A practical solution is to use multiple shift-invariant subarray pairs with different spacings and to fuse the resulting estimates, see also~\cite{765149}.

\begin{figure}
    \hspace{-10pt}
    \begin{minipage}[t]{0.545\columnwidth}
    \centering
    \begin{tikzpicture}
      \renewcommand{\axisdefaulttryminticks}{4} 
      \pgfplotsset{every major grid/.style={densely dashed}}       
      \tikzstyle{every axis y label}+=[yshift=-10pt] 
      \tikzstyle{every axis x label}+=[yshift=5pt]
      \pgfplotsset{
        every axis legend/.append style={
          cells={anchor=west},
          fill=white,
          at={(1,0.75)},
          anchor=north east,
          font=\footnotesize
        }
      }
      \begin{axis}[
        width=\columnwidth,
        height=0.9\columnwidth,
        ymajorgrids=true,
        grid=both,
        xmin = 4,
        xmax = 198,
        ymax = 130,
        ymin = 2e-2,
        xtick={20,60,100,140,180},
        xlabel = {$n$},
        ylabel = {MSE (in deg$^2$)},
        ymode=log,
        ylabel style={yshift=-5pt},
      ]
        \addplot[smooth, BLUE!60!white, line width=1pt , mark=o]
          table[x=n, y=ges] {./results/fig1_n.txt};
        \addlegendentry{G-ESPRIT};
        \addplot[smooth, RED!60!white, line width=1pt , mark=o]
          table[x=n, y=es] {./results/fig1_n.txt};
        \addlegendentry{ESPRIT};
      \end{axis}
      \node at (current bounding box.south) [yshift=-5pt,xshift=17pt] {\small (a)};
    \end{tikzpicture}
\end{minipage}
\hspace{-20pt}
\begin{minipage}[t]{0.545\columnwidth}
\centering
\begin{tikzpicture}
      \renewcommand{\axisdefaulttryminticks}{4} 
      \pgfplotsset{every major grid/.style={densely dashed}}       
      \tikzstyle{every axis y label}+=[yshift=-10pt] 
      \tikzstyle{every axis x label}+=[yshift=5pt]
      \pgfplotsset{
        every axis legend/.append style={
          cells={anchor=west},
          fill=white,
          at={(1,0.75)},
          anchor=north east,
          font=\tiny
        }
      }
      \begin{axis}[
        width=\columnwidth,
        height=0.9\columnwidth,
        ymajorgrids=true,
        grid=both,
        xmin = 1,
        xmax = 200,
        ymax = 0.35,
        ymin = 5e-4,
        xtick={20,60,100,140,180},
        xlabel = {$\Delta$},
        ylabel = \empty,
        ymode=log,
      ]
        \addplot[smooth, BLUE!60!white, line width=1pt , mark=o]
          table[x=Delta, y=ges] {./results/fig1_tradeoff.txt};
        \addlegendentry{G-ESPRIT};
        \addplot[smooth, RED!60!white, line width=1pt , mark=o]
          table[x=Delta, y=es] {./results/fig1_tradeoff.txt};
        \addlegendentry{ESPRIT};
        \addplot[dashed, color=gray, line width=1.5pt]
          coordinates {(62,1e-8) (62,1)};
        \addlegendentry{\!$\Delta \!=\! \lfloor \!\frac{\pi}{|\theta_3|}\! \rfloor $};
      \end{axis}
      \node at (current bounding box.south) [yshift=-5pt,xshift=16pt] {\small (b)};
    \end{tikzpicture}
\end{minipage} 
\vspace{-16pt}
\caption{
\textbf{(a):} MSE of G-ESPRIT as a function of $n$ with $\Delta=1$. The three widely-spaced sources are at DoA $\theta_1 = 0$, $\theta_2 = \frac{\pi}{4}$,  $\theta_3 = \frac{2\pi}{3}$, and the power matrix is $\mathbf{P} = 2\I_3$. 
\textbf{(b):} MSE of G-ESPRIT as a function of distance $\Delta$ with $n = N-\Delta$. 
The three closely-spaced sources are at DoA $\theta_1 = 0$, $\theta_2 = 0.8\times 2\pi/N$, $\theta_3 = 1.6\times 2\pi/N$, and the power matrix is $\mathbf{P} = 2\I_3$. Results are obtained by
averaging over 500 independent trials.} 
 \label{fig:delta-n}
 \vspace{-6pt}
\end{figure}

\begin{figure}[!tb]
\centering
\begin{minipage}[t]{\columnwidth}
\begin{minipage}[t]{0.495\linewidth}
\centering
\begin{tikzpicture}
    \renewcommand{\axisdefaulttryminticks}{4}
    \pgfplotsset{every major grid/.style={densely dashed}}
    \tikzstyle{every axis y label}+=[yshift=-10pt]
    \tikzstyle{every axis x label}+=[yshift=5pt]
    \pgfplotsset{every axis legend/.append style={cells={anchor=east}, fill=white, at={(1,1)}, anchor=north east, font=\footnotesize}}

    \begin{axis}[
        width=1.1\columnwidth,
        height=1\columnwidth,
        ymajorgrids=true,
        xlabel = { $N$},
        ylabel = {Spectral norm errors},
        ylabel style={yshift=5pt},
        scaled ticks=true,
    ]
      \addplot[only marks,mark=o,color=BLUE!60!white, 
        error bars/.cd, y dir=both, y explicit, error bar style={color=gray}] table[x index=0, y index=1, y error index=2] {./results/spnorm.txt};
      \addlegendentry{$\|\hat{\bPhi} - \bar{\bPhi}\|$};

      \addplot [RED!60!white, line width=1.5pt, domain=50:400]{0.467*x^(-0.61)};
      \addlegendentry{$0.47 N^{-0.61}$};
    \end{axis}
\end{tikzpicture}
\end{minipage}
\begin{minipage}[t]{0.495\linewidth}
\hspace{3pt}
\centering
\begin{tikzpicture}
    \renewcommand{\axisdefaulttryminticks}{4}
    \pgfplotsset{every major grid/.style={densely dashed}}
    \tikzstyle{every axis y label}+=[yshift=-10pt]
    \tikzstyle{every axis x label}+=[yshift=5pt]
    \pgfplotsset{every axis legend/.append style={cells={anchor=east}, fill=white, at={(1,1)}, anchor=north east, font=\footnotesize}}

    \begin{axis}[
        width=1.1\columnwidth,
        height=1\columnwidth,
        xlabel={ $N$},
        ymajorgrids=true,
        scaled ticks=true,
    ]
        \addplot[only marks,mark=o,color=BLUE!60!white, 
        error bars/.cd, y dir=both, y explicit, error bar style={color=gray}] table[x index=0, y index=3, y error index=4] {./results/spnorm.txt};
        \addlegendentry{$\|\hat{\bPhi}^G - \bPhi\|$}

        \addplot[RED!60!white, line width=1.5pt, domain=50:400]{0.593*x^(-0.636)};
        \addlegendentry{$0.59 N^{-0.63}$}
    \end{axis}
\end{tikzpicture}
\vspace{1mm}
\end{minipage}
\vspace{3mm}
\begin{minipage}{\textwidth}
    (a) widely-spaced DoAs with correlated sources at DoA $\theta_1 = 0$ and $\theta_2 = \pi/4$, power matrix $\P = (\begin{smallmatrix}
    3 & 1.2 \\
    1.2 & 3
\end{smallmatrix})$, subarray size $n=N-1$, and distance $\Delta=1$.
\end{minipage}
\end{minipage}
\vspace{1mm}
\begin{minipage}[t]{\columnwidth}
\centering
\begin{minipage}[t]{0.495\linewidth}
\centering
\begin{tikzpicture}
    \renewcommand{\axisdefaulttryminticks}{4}
    \pgfplotsset{every major grid/.style={densely dashed}}
    \tikzstyle{every axis y label}+=[yshift=-10pt]
    \tikzstyle{every axis x label}+=[yshift=5pt]
    \pgfplotsset{every axis legend/.append style={cells={anchor=east}, fill=white, at={(1,1)}, anchor=north east, font=\footnotesize}}

    \begin{axis}[
        width=1.1\columnwidth,
        height=1\columnwidth,
        ymajorgrids=true,
        xlabel = {$N$},
        ylabel = {Spectral norm errors},
        scaled ticks=true,
    ]
      \addplot[only marks,mark=o,color=BLUE!60!white, 
        error bars/.cd, y dir=both, y explicit, error bar style={color=gray}] table[x index=0, y index=5, y error index=6] {./results/spnorm.txt};
      \addlegendentry{$\|\hat{\bPhi} - \bar{\bPhi}\|$};

      \addplot [RED!60!white, line width=1.5pt, domain=50:400]{2.26*x^(-0.65)};
      \addlegendentry{$2.26 N^{-0.65}$};
    \end{axis}
\end{tikzpicture}
\end{minipage}
\hspace{-5pt}
\begin{minipage}[t]{0.495\linewidth}
\centering
\begin{tikzpicture}
    \renewcommand{\axisdefaulttryminticks}{4}
    \pgfplotsset{every major grid/.style={densely dashed}}
    \tikzstyle{every axis y label}+=[yshift=-10pt]
    \tikzstyle{every axis x label}+=[yshift=5pt]
    \pgfplotsset{every axis legend/.append style={cells={anchor=east}, fill=white, at={(1,1)}, anchor=north east, font=\footnotesize}}

    \begin{axis}[
        width=1.1\columnwidth,
        height=1\columnwidth,
        xlabel={$N$},
        ymajorgrids=true,
        scaled ticks=true,
    ]
        \addplot[only marks,mark=o,color=BLUE!60!white, 
        error bars/.cd, y dir=both, y explicit, error bar style={color=gray}] table[x index=0, y index=7, y error index=8] {./results/spnorm.txt};
        \addlegendentry{$\|\hat{\bPhi}^G - \bPhi\|$}

        \addplot[RED!60!white, line width=1.5pt, domain=50:400]{1.474*x^(-0.604)};
        \addlegendentry{$1.47 N^{-0.6}$}
    \end{axis}
\end{tikzpicture}
\vspace{0.5mm}
\end{minipage}
\begin{minipage}{\textwidth}
    
(b) closely-spaced DoAs ($\theta_1 = 0$ and $\theta_2 = 0.8\times 2 \pi/N$) having equal powers with $\P =2\I_2$, $\Delta=\lfloor \pi/|\theta_2| \rfloor $, and $n=N-\Delta$.
\end{minipage}
\end{minipage}
\caption{ 
Approximation errors in spectral norm versus array length $N$, for $T = 2N$ and $\sigma^2 = 1$. 
Simulation results in \textbf{\BLUE blue} are averaged over $2\,000$ trials, with polynomial fit in \textbf{\RED red}.}
\label{fig:spectral_error}
\vspace{-10pt}
\end{figure}

In the left plot of Fig.~\ref{fig:delta-n}, we investigate the MSE of ESPRIT and G-ESPRIT as a function of the subarray size $n$ in the widely-spaced DoA settings, with $\Delta=1$, $N = 200$ and $T = 400$.
In the subsequent experiments in this section, the source signals are generated according to $\mathbf{s}(t)=\mathbf{P}^{\frac1 2} \mathbf{w}(t)$ where $\mathbf{w}(t) \sim \mathcal{CN}(\mathbf{0}_K,\mathbf{I}_K)$, such that the source covariance matrix satisfies $\EE [\mathbf{s}(t) \mathbf{s}(t)^\H]=\mathbf{P}$.
The mean squared error (MSE) is defined as
\begin{align*}
    \mathrm{MSE}=\frac{1}{K N_{MC}}\sum_{i=1}^{N_{MC}}\sum_{k=1}^{K}(\hat{\phi}_{k,i}-\phi_k)^2,
\end{align*}
where $N_{MC}$ is the number of Monte Carlo trials, $\hat{\phi}_{k,i}$ and $\phi_k$ denote the sorted physical angles of the estimated DoAs and the true DoAs, respectively.
The results show that the estimator achieves lower MSE as $n$ increases, and attains its minimum at $n=N-1$.
In the right plot of Fig.~\ref{fig:delta-n}, we illustrate the MSE as a function of $\Delta$ in the closely-spaced DoA settings, with $n = N-\Delta$.
It can be observed that when $\Delta$ is small, the resulting phase difference $\Delta\theta$ is insufficient to provide adequate angular sensitivity, resulting in high MSE.
As $\Delta$ increases, the MSE decreases and reaches its minimum near the largest unambiguous spacing $\Delta_{\max}=\lfloor \pi/|\theta_3| \rfloor $ (marked by the dashed line).
When $\Delta$ exceeds $\Delta_{\max}$, phase wrapping occurs and the MSE increases sharply.
The experimental results in Fig.~\ref{fig:delta-n} validate the proposed $(n,\Delta)$ selection strategy described above.

\subsection{Validation of asymptotic theoretical results}
\label{subsec:validation}

We start by showing that the proposed asymptotic analyses in \Cref{theo:main} and \Cref{prop:gesprit} provide accurate predictions on finite dimensional signals.

Fig.~\ref{fig:spectral_error} compares the approximation errors $\| \hat{\bPhi} - \bar{\bPhi} \|$ and $\| \hat{\bPhi}^G - \bPhi \|$, corresponding to the classical ESPRIT analyzed in \Cref{theo:main} and the proposed G-ESPRIT in \Cref{prop:gesprit}, respectively.
We see, in both widely-spaced (\Cref{ass:widely_spaced}) and closely-spaced (\Cref{ass:closely_spaced}) scenarios, that as the array length $N$ grows, the spectral norm errors empirically decay at a rate of $O(N^{-1/2})$, as in line with \Cref{rem:N-con_for_two} above.

\begin{figure*}[!tp]
    \centering
    \hspace{12pt} 
    \begin{minipage}[t]{0.65\columnwidth}
    \centering
    \begin{tikzpicture}
        \def \deg2rady{180/3.1415926}
        \centering
        \renewcommand{\axisdefaulttryminticks}{4} 
        \pgfplotsset{every major grid/.style={densely dashed}}       
        \tikzstyle{every axis x label}+=[yshift=10pt]
        \pgfplotsset{every axis legend/.append style={cells={anchor=west},fill=white, at={(1,0.74),sacel=0.2}, anchor=north east, font=\footnotesize,legend columns=2 }} 
        \begin{axis}[
        ytick=\empty,
        ytick style={draw=none},
        width=1\columnwidth,
        height=0.9\columnwidth,
        ymajorgrids=true,
        scaled ticks=true,
        grid=both,
        ytick={0, 3.1415926/8, 3.1415926/4},
        yticklabels={$0$, $\frac{\pi}{8}$, $\frac{\pi}{4}$},
        xlabel = { $N$ },
        xlabel style={yshift=3pt},
        ylabel = { $\theta$ (in deg) },
        xmin=20,xmax=200,ymin=-0.1,ymax=0.9,
        scaled ticks=true]
        \addplot[smooth, RED!60!white, line width=1pt ,mark=o] table[x=N, y=ES1] {./results/fig2_wide_left.txt};
        \addlegendentry{$\hat{\theta}_1$};
        \addplot[smooth, RED!60!white, line width=1pt ,mark=o] table[x=N, y=ES2] {./results/fig2_wide_left.txt};
        \addlegendentry{$\hat{\theta}_2$};
        \addplot[smooth, BLUE!60!white, line width=1pt ,mark=square] table[x=N, y=GES1] {./results/fig2_wide_left.txt};
        \addlegendentry{$\hat{\theta}_1^G$};
        \addplot[smooth, BLUE!60!white, line width=1pt ,mark=square] table[x=N, y=GES2] {./results/fig2_wide_left.txt};
        \addlegendentry{$\hat{\theta}_2^G$};
        \addplot[densely dashed, gray!60!black, line width=1pt ] table[x=N, y=True1] {./results/fig2_wide_left.txt};
        \addlegendentry{$\theta_1$};
        \addplot[densely dashed, gray!60!black, line width=1pt ] table[x=N, y=True2] {./results/fig2_wide_left.txt};
        \addlegendentry{$\theta_2$};
        \addplot[smooth, GREEN!80!white, line width=1pt ,mark=x] table[x=N, y=bar1] {./results/fig2_wide_left.txt};
        \addlegendentry{$\bar{\theta}_1$};
        \addplot[smooth, GREEN!80!white, line width=1pt ,mark=x] table[x=N, y=bar2] {./results/fig2_wide_left.txt};
        \addlegendentry{$\bar{\theta}_2$};
    \end{axis}
    \node at (current bounding box.south) [yshift=-5pt,xshift=12pt] {\small (a)};
\end{tikzpicture}
\vspace{-17pt} 
\end{minipage}
\hspace{-18pt} 
\vspace{-1pt}
    \begin{minipage}[t]{0.65\columnwidth}
    \hspace{2pt}
    \begin{tikzpicture}
    \def \deg2rad2{180/3.1415926}
    \centering
    \renewcommand{\axisdefaulttryminticks}{4} 
    \pgfplotsset{every major grid/.style={densely dashed}}       
    \tikzstyle{every axis x label}+=[yshift=5pt]
    \pgfplotsset{
        every axis legend/.append style={cells={anchor=west},fill=white,at={(0.62,1)},anchor=north east,font=\tiny,legend columns=2}} 
    \begin{axis}[
        name=main,              
        ytick=\empty,
        ytick style={draw=none},
        width=1\columnwidth,
        height=0.9\columnwidth,
        ymajorgrids=true,
        grid=both,
        ytick={0, 0.02*3.1415926, 0.07*3.1415926},
        yticklabels={$0$, $\frac{\pi}{50}$, $\frac{7\pi}{100}$},
        xlabel = { $N$ },
        xlabel style={yshift=3pt},
        ylabel = \empty,
        xmin=20,xmax=200,
        ymin=-0.05,ymax=1.6*3/20, 
    ]
        \addplot[smooth, RED!60!white, line width=1pt ,mark=o] table[x=N, y=ES1] {./results/fig2_near_left.txt};
        \addlegendentry{$\hat{\theta}_1$};
        \addplot[smooth, RED!60!white, line width=1pt ,mark=o] table[x=N, y=ES2] {./results/fig2_near_left.txt};
        \addlegendentry{$\hat{\theta}_2$};
        \addplot[smooth, BLUE!60!white, line width=1pt ,mark=square] table[x=N, y=GES1] {./results/fig2_near_left.txt};
        \addlegendentry{$\hat{\theta}_1^G$};
        \addplot[smooth, BLUE!60!white, line width=1pt ,mark=square] table[x=N, y=GES2] {./results/fig2_near_left.txt};
        \addlegendentry{$\hat{\theta}_2^G$};
        \addplot[densely dashed, gray!60!black, line width=1pt ] table[x=N, y=True1] {./results/fig2_near_left.txt};
        \addlegendentry{$\theta_1$};
        \addplot[densely dashed, gray!60!black, line width=1pt ] table[x=N, y=True2] {./results/fig2_near_left.txt};
        \addlegendentry{$\theta_2$};
        \addplot[smooth, GREEN!80!white, line width=1pt ,mark=x] table[x=N, y=bar1] {./results/fig2_near_left.txt};
        \addlegendentry{$\bar{\theta}_1$};
        \addplot[smooth, GREEN!80!white, line width=1pt ,mark=x] table[x=N, y=bar2] {./results/fig2_near_left.txt};
        \addlegendentry{$\bar{\theta}_2$};

        \coordinate (zoomsw) at (axis cs:130,-0.015);
        \coordinate (zoomne) at (axis cs:180,0.045);
        \draw[black, thick] (zoomsw) rectangle (zoomne);
        \coordinate (zoomanchor) at ($(zoomne)!0.5!(zoomsw |- zoomne)$);
    \end{axis}

    \begin{axis}[
        name=inset,
        at={(rel axis cs:0.62,0.53)}, 
        anchor=south west,
        width=0.55\columnwidth,
        height=0.5\columnwidth,
        xtick=\empty, 
        ytick=\empty,
        xticklabels=\empty, 
        yticklabels=\empty,
        xmin=130, xmax=180,
        ymin=-0.01, ymax=0.05,
        axis line style={black},
    ]
        \addplot[smooth, RED!60!white, line width=0.8pt ,mark=o] table[x=N, y=ES1] {./results/fig2_near_left.txt};
        \addplot[smooth, RED!60!white, line width=0.8pt ,mark=o] table[x=N, y=ES2] {./results/fig2_near_left.txt};
        \addplot[smooth, BLUE!60!white, line width=0.8pt ,mark=square] table[x=N, y=GES1] {./results/fig2_near_left.txt};
        \addplot[smooth, BLUE!60!white, line width=0.8pt ,mark=square] table[x=N, y=GES2] {./results/fig2_near_left.txt};
        \addplot[densely dashed, gray!60!black, line width=0.8pt ] table[x=N, y=True1] {./results/fig2_near_left.txt};
        \addplot[densely dashed, gray!60!black, line width=0.8pt ] table[x=N, y=True2] {./results/fig2_near_left.txt};
        \addplot[smooth, GREEN!80!white, line width=0.8pt ,mark=x] table[x=N, y=bar1] {./results/fig2_near_left.txt};
        \addplot[smooth, GREEN!80!white, line width=0.8pt ,mark=x] table[x=N, y=bar2] {./results/fig2_near_left.txt};
    \end{axis}
    \draw[black, thick] (inset.south west) rectangle (inset.north east);
    \draw[black, thin] (zoomanchor) -- ($(inset.south west)!0.5!(inset.south east)$);
    \node at (current bounding box.south) [yshift=-5pt,xshift=10pt] {\small (b)};

\end{tikzpicture}
\vspace{-17pt} 
\end{minipage}
\vspace{-1pt}
\hspace{-8pt} 
    \begin{minipage}[t]{0.65\columnwidth}
    \centering
    \begin{tikzpicture}
        \def \deg2rady{180/3.1415926}
        \centering
        \renewcommand{\axisdefaulttryminticks}{4} 
        \pgfplotsset{every major grid/.style={densely dashed}}       
        \tikzstyle{every axis x label}+=[yshift=5pt]
        \pgfplotsset{every axis legend/.append style={cells={anchor=west},fill=white, at={(1,0.74),sacel=0.2}, anchor=north east, font=\footnotesize,legend columns=2 }} 
        \begin{axis}[
        ytick=\empty,
        ytick style={draw=none},
        width=1\columnwidth,
        height=0.9\columnwidth,
        ymajorgrids=true,
        scaled ticks=true,
        grid=both,
        ytick={0, 3.1415926/8, 3.1415926/4},
        yticklabels={$0$, $\frac{\pi}{8}$, $\frac{\pi}{4}$},
        xlabel = { $N$ },
        xlabel style={yshift=3pt},
        ylabel = \empty,
        xmin=20,xmax=200,ymin=-0.1,ymax=0.9,
        scaled ticks=true]
        \addplot[smooth, RED!60!white, line width=1pt ,mark=o] table[x=N, y=ES1] {./results/fig2_unco_left.txt};
        \addlegendentry{$\hat{\theta}_1$};
        \addplot[smooth, RED!60!white, line width=1pt ,mark=o] table[x=N, y=ES2] {./results/fig2_unco_left.txt};
        \addlegendentry{$\hat{\theta}_2$};
        \addplot[smooth, BLUE!60!white, line width=1pt ,mark=square] table[x=N, y=GES1] {./results/fig2_unco_left.txt};
        \addlegendentry{$\hat{\theta}_1^G$};
        \addplot[smooth, BLUE!60!white, line width=1pt ,mark=square] table[x=N, y=GES2] {./results/fig2_unco_left.txt};
        \addlegendentry{$\hat{\theta}_2^G$};
        \addplot[densely dashed, gray!60!black, line width=1pt ] table[x=N, y=True1] {./results/fig2_unco_left.txt};
        \addlegendentry{$\theta_1$};
        \addplot[densely dashed, gray!60!black, line width=1pt ] table[x=N, y=True2] {./results/fig2_unco_left.txt};
        \addlegendentry{$\theta_2$};
        \addplot[smooth, GREEN!80!white, line width=1pt ,mark=x] table[x=N, y=bar1] {./results/fig2_unco_left.txt};
        \addlegendentry{$\bar{\theta}_1$};
        \addplot[smooth, GREEN!80!white, line width=1pt ,mark=x] table[x=N, y=bar2] {./results/fig2_unco_left.txt};
        \addlegendentry{$\bar{\theta}_2$};
    \end{axis}
    \node at (current bounding box.south) [yshift=-5pt,xshift=10pt] {\small (c)};
\end{tikzpicture}
\vspace{-17pt} 
\end{minipage}
\vspace{-1pt} 
\begin{minipage}[t]{0.65\columnwidth}
\centering
\begin{tikzpicture}
    \centering
    \renewcommand{\axisdefaulttryminticks}{4} 
    \pgfplotsset{every major grid/.style={densely dashed}}    
    \tikzstyle{every axis x label}+=[yshift=5pt]
    \pgfplotsset{every axis legend/.append style={cells={anchor=west},fill=white, at={(1,1)}, anchor=north east, font=\footnotesize,legend columns=2,}} %
    \begin{axis}[
        width=1\columnwidth,
        height=0.9\columnwidth,
        ymajorgrids=true,
        scaled ticks=true,
        xtick={100,200,300,400},
        xmin=50,xmax=400,
        xlabel = { $N$ },
        xlabel style={yshift=3pt},
        ylabel = { MSE (in deg$^2$) },
        scaled ticks=true,
        ymode=log,
        grid = both,
        ]
        \addplot[smooth, BLUE!60!white, line width=1pt ,mark=o] table[x=N, y=Emse] {./results/rerunFig4d.txt};
        \addlegendentry{$\text{MSE}[\hat{\theta}]$};
        \addplot[smooth, BLUE!60!white, line width=1pt ,mark=x] table[x=N, y=Evar] {./results/rerunFig4d.txt};
        \addlegendentry{ $\text{Var}[\hat{\theta}]$};
        \addplot[smooth, RED!60!white, line width=1pt ,mark=o] table[x=N, y=GEmse] {./results/rerunFig4d.txt};
        \addlegendentry{$\text{MSE}[\hat{\theta}^G]$};
        \addplot[smooth, RED!60!white, line width=1pt ,mark=x] table[x=N, y=GEvar] {./results/rerunFig4d.txt};
        \addlegendentry{ $\text{Var}[\hat{\theta}^G]$};
    \end{axis}
        \node at (current bounding box.south) [yshift=-5pt,xshift=14pt] {\small (d)};
\end{tikzpicture}
\end{minipage}
\vspace{-2pt}
\hspace{-6pt}
\begin{minipage}[t]{0.65\columnwidth}
\centering
\begin{tikzpicture}
    \centering
    \renewcommand{\axisdefaulttryminticks}{4} 
    \pgfplotsset{every major grid/.style={densely dashed}}    
    \tikzstyle{every axis x label}+=[yshift=5pt]
    \pgfplotsset{every axis legend/.append style={
        cells={anchor=west},
        fill=white,
        at={(1,1)}, anchor=north east,
        font=\footnotesize,legend columns=2,}} 
    \begin{axis}[
        width=1\columnwidth,
        height=0.9\columnwidth,
        ymajorgrids=true,
        scaled ticks=true,
        xtick={100,200,300,400},
        xmin=50,xmax=400,
        xlabel = { $N$ },
        xlabel style={yshift=3pt},
        ylabel = \empty,
        ymode=log,
        grid = both,
        ]
        \addplot[smooth, BLUE!60!white, line width=1pt ,mark=o] table[x=N, y expr=\thisrow{Emse}*(180/pi)^2] {./results/fig2_near_right.txt};
        \addlegendentry{$\text{MSE}[\hat{\theta}]$};
        \addplot[smooth, BLUE!60!white, line width=1pt ,mark=x] table[x=N, y expr=\thisrow{Evar}*(180/pi)^2] {./results/fig2_near_right.txt};
        \addlegendentry{ $\text{Var}[\hat{\theta}]$};
        \addplot[smooth, RED!60!white, line width=1pt ,mark=o] table[x=N, y expr=\thisrow{GEmse}*(180/pi)^2] {./results/fig2_near_right.txt};
        \addlegendentry{$\text{MSE}[\hat{\theta}^G]$};
        \addplot[smooth, RED!60!white, line width=1pt ,mark=x] table[x=N, y expr=\thisrow{GEvar}*(180/pi)^2] {./results/fig2_near_right.txt};
        \addlegendentry{ $\text{Var}[\hat{\theta}^G]$};

    \end{axis}
        \node at (current bounding box.south) [yshift=-5pt,xshift=12pt] {\small (e)};
\end{tikzpicture}
\end{minipage}
\vspace{-2pt}
\hspace{-15pt}
\begin{minipage}[t]{0.65\columnwidth}
\centering
\begin{tikzpicture}
    \centering
    \renewcommand{\axisdefaulttryminticks}{4} 
    \pgfplotsset{every major grid/.style={densely dashed}}    
    \tikzstyle{every axis x label}+=[yshift=5pt]
    \pgfplotsset{every axis legend/.append style={cells={anchor=west},fill=white, at={(1,1)}, anchor=north east, font=\footnotesize,legend columns=2,}} %
    \begin{axis}[
        width=1\columnwidth,
        height=0.9\columnwidth,
        ymajorgrids=true,
        scaled ticks=true,
        xtick={100,200,300,400},
        xmin=50,xmax=400,
        ymin=1e-01,
        xlabel = { $N$ },
        xlabel style={yshift=3pt},
        ylabel = \empty,
        scaled ticks=true,
        ymode=log,
        grid = both,
        ]
        \addplot[smooth, BLUE!60!white, line width=1pt ,mark=o] table[x=N, y=Emse] {./results/fig2_unco_right.txt};
        \addlegendentry{$\text{MSE}[\hat{\theta}]$};
        \addplot[smooth, BLUE!60!white, line width=1pt ,mark=x] table[x=N, y=Evar] {./results/fig2_unco_right.txt};
        \addlegendentry{ $\text{Var}[\hat{\theta}]$};
        \addplot[smooth, RED!60!white, line width=1pt ,mark=o] table[x=N, y=GEmse] {./results/fig2_unco_right.txt};
        \addlegendentry{$\text{MSE}[\hat{\theta}^G]$};
        \addplot[smooth, RED!60!white, line width=1pt ,mark=x] table[x=N, y=GEvar] {./results/fig2_unco_right.txt};
        \addlegendentry{ $\text{Var}[\hat{\theta}^G]$};
    \end{axis}
        \node at (current bounding box.south) [yshift=-5pt,xshift=12pt] {\small (f)};
\end{tikzpicture}
\end{minipage} 
\vspace{-1pt}
\caption{
\textbf{Top}: (a)-(c) comparison of DoA estimates under a widely-spaced scenario.
\textbf{Bottom}: (d)-(f) MSEs and variances of DoA estimates versus $N$, averaged over $2\,000$ independent trials. Subfigures (a) and (d) correspond to widely-spaced correlated sources, (b) and (e) to closely-spaced uncorrelated sources, and (c) and (f) to widely-spaced uncorrelated sources. The number of snapshots is set as $T=2N$.  }
\label{fig:correlated}
\vspace{-6pt}
\end{figure*}

Fig.~\ref{fig:correlated} provides empirical support for our discussion of the three special cases. We first consider the case of widely-spaced DoAs with correlated sources, as discussed in \Cref{rem:wide_correlated}.
Here, under the same setting as in the subfigure (a) of Fig.~\ref{fig:spectral_error}, we compare, in  Fig.~\ref{fig:correlated}(a), the DoAs estimates $\hat \theta$ from classical ESPRIT (as well as the theoretical characterizations $\bar \theta$ given in \Cref{theo:main}), to $\hat{\theta}^G$ from the G-ESPRIT method in  \Cref{prop:gesprit}. The true DoAs $\theta$ are indicated by the gray dashed lines.
We observe that, as $N$ increases with a fixed ratio $N/T = 1/2$,
\begin{enumerate}
    \item the theoretical characterization ($\bar \theta$) perfectly matches the classical ESPRIT estimate ($\hat \theta$, that is observed to diverge from the true DoAs $\theta$); and
    \item the proposed G-ESPRIT estimates $\hat{\theta}^G$ remove this bias.
\end{enumerate}

In Fig.~\ref{fig:correlated}(d), we compare the MSEs and variances of both classical ESPRIT ($\hat \theta$) and G-ESPRIT ($\hat{\theta}^G$) estimates. 
We observe that: 
\begin{enumerate}
    \item classical ESPRIT provides \emph{inconsistent} DoA estimates, with MSE much larger than the variance; and
    \item G-ESPRIT provides consistent estimates and, in addition, yields \emph{smaller variances} than ESPRIT.
\end{enumerate}

In Fig.~\ref{fig:correlated}(b) and \ref{fig:correlated}(e), we investigate the case of closely-spaced DoAs with equal power discussed in \Cref{rem:closely_equal}, in the same setting as the bottom panel of Fig.~\ref{fig:spectral_error}.
We observe that:
\begin{enumerate}
    \item classical ESPRIT fails to distinguish two closely-spaced DoAs with a distance of order $O(N^{-1})$; and
    \item the proposed G-ESPRIT is $N$-consistent in this setting, with variance coinciding with the MSE.
\end{enumerate}

To illustrate the ``lucky'' consistency of classical ESPRIT in the case of widely-spaced DoAs from uncorrelated sources discussed in \Cref{rem:wide_uncorrelated}, we show, in  Fig.~\ref{fig:correlated}(c) and \ref{fig:correlated}(f), the DoA estimation results in the setting as in Fig.~\ref{fig:correlated}(a) but with power matrix $\P =\I_K$.
We observe that the classical ESPRIT estimates are in agreement with those of G-ESPRIT, and are close to the true DoAs.
In this case, both ESPRIT are G-ESPRIT are unbiased estimators, having their MSEs coinciding with variances.

\begin{figure}
\begin{minipage}[t]{0.9\columnwidth}
\centering
\begin{tikzpicture}
    \centering
    \renewcommand{\axisdefaulttryminticks}{4} 
    \pgfplotsset{every major grid/.style={densely dashed}}    
    \tikzstyle{every axis x label}+=[yshift=5pt]
    \pgfplotsset{every axis legend/.append style={cells={anchor=west},fill=white, at={(1,1)}, anchor=north east, font=\tiny,legend columns=1}} %
    \begin{axis}[
      width=1\columnwidth,
      height=0.8\columnwidth,
      xmin = 50, xmax = 400,
      ymajorgrids=true,
      scaled ticks=true,
      xlabel = {  $N$},
      ylabel = { MSE (in deg$^2$) },
      grid=major,
      scaled ticks=true,
      ymode=log
      ]
      \addplot[smooth,mark=o,color=yellow1,line width=1pt] table[x=N, y=es1] {./results/power_matrix.txt};
      \addlegendentry{ ESPRIT, Wishart-type $\mathbf{P}$};
      \addplot[smooth,mark=triangle,color=yellow1,line width=1pt] table[x=N, y=es2] {./results/power_matrix.txt};
      \addlegendentry{ ESPRIT, Wishart-type $\mathbf{P}$};

      \addplot[smooth, mark=o,color=BLUE!60!white,line width=1pt] table[x=N, y=ges1] {./results/power_matrix.txt};
      \addlegendentry{ G-ESPRIT, Random-eigendecomp. $\mathbf{P}$ };
      \addplot[smooth, mark=triangle,color=BLUE!60!white,line width=1pt] table[x=N, y=ges2] {./results/power_matrix.txt};
      \addlegendentry{ G-ESPRIT, Random-eigendecomp. $\mathbf{P}$ };

      \addplot[smooth,mark=o, color=black,line width=0.8pt] table[x=N, y=crb1] {./results/power_matrix.txt};
      \addlegendentry{CRB, Wishart-type $\mathbf{P}$};
      \addplot[smooth,mark=triangle, color=black,line width=0.8pt] table[x=N, y=crb2] {./results/power_matrix.txt};
      \addlegendentry{CRB, Random-eigendecomp. $\mathbf{P}$ };
  \end{axis}
\end{tikzpicture}
\end{minipage} 
\vspace{-4pt}
\caption{MSE versus array size $N$ for different random power matrix. Two closely-spaced sources are located at $\theta_1=0, \theta_2=0.8\times \frac{2\pi}{N}$. Results are obtained by averaging over $2\,000$ independent trials.}
\label{fig:diff-P}
\vspace{-4pt}
\end{figure}

In Fig.~\ref{fig:diff-P}, we compare the MSE performance of ESPRIT and G-ESPRIT, together with the corresponding Cramér--Rao Bound (CRB)\footnote{Here we compute the Cramér--Rao Bound for ULA according to \cite[Theorem 4.3]{stoica1989music} as $ \text{CRB}=\frac{\sigma^2}{2N} \left\{ \Re \left[ \left\{ \mathbf{D}^\H (\I_N - \A(\A^\H \A)^{-1}\A^\H) \mathbf{D} \right\} \odot \P^\T  \right] \right\}^{-1}$ with $\mathbf{D} = [\frac{\partial \mathbf{a}(\theta_1)}{\partial  \theta_1},\ldots,\frac{\partial \mathbf{a}(\theta_K)}{\partial  \theta_K}] \in \mathbb{C}^{N \times K}$. The CRB for the physical angle $\bm{\phi}$ is then obtained via the Jacobian transformation $
\text{CRB}_{\bm{\phi}} = \mathbf{J}^{-1}\text{CRB}_{\bm{\theta}}(\mathbf{J}^{-1})^{\H}$, where $\mathbf{J} = \diag\!\big(2\pi(d/\lambda)\cos\phi_k\big)_{k=1}^K$.}.
Two classes of randomly generated non-diagonal signal power matrices are considered, with an independent realization generated in each Monte Carlo trial. 
Specifically, the first class is a Wishart-distributed matrix, while the second class is a random eigen-decomposition model of the form $\mathbf{P}=\mathbf{U}\bm{\Lambda}\mathbf{U}^\H$, where $\mathbf{U}$ is a random unitary (i.e., Haar-distributed) matrix and the diagonal entries of $\bm{\Lambda}$ are independently drawn from a uniform distribution on $[1/2,2]$.
For a fair comparison, each randomly generated $\mathbf{P}$ is scaled such that the $K$-th signal spike attains the same target $\text{SNR}=5$ dB.
Here, we consider the same closely-spaced DoA scenario as in Figs.~\ref{fig:spectral_error} and Fig.~\ref{fig:correlated}.
The results in Fig.~\ref{fig:diff-P} show that the performance trends under the two types of random $\mathbf{P}$ are highly consistent, and G-ESPRIT consistently achieves lower MSE than ESPRIT. 
This indicates that the proposed method exhibits good robustness with respect to different covariance structures.

\begin{figure}[!t]
\begin{minipage}[t]{0.9\columnwidth}
\centering
\begin{tikzpicture}
    \centering
    \renewcommand{\axisdefaulttryminticks}{4} 
    \pgfplotsset{every major grid/.style={densely dashed}}    
    \tikzstyle{every axis x label}+=[yshift=5pt]
    \pgfplotsset{every axis legend/.append style={cells={anchor=west},fill=white, at={(1,1)}, anchor=north east, font=\tiny,legend columns=2}} %
    \begin{axis}[
      width=1\columnwidth,
      height=0.8\columnwidth,
      xmin = 50,
      xmax = 400,
      ymax = 2e-1,
      ymin = 2e-07,
      ymajorgrids=true,
      scaled ticks=true,
      xlabel = { $ N$},
      xlabel style={yshift=2pt},
      ylabel = { MSE (in deg$^2$) },
      grid=major,
      scaled ticks=true,
      ymode=log
      ]
      \addplot[smooth,mark=o,color=yellow1,line width=1pt] table[x=N, y=es1] {./results/near_corr.txt};
      \addlegendentry{ ESPRIT, $\rho = 0$};
      \addplot[smooth, mark=o,color=BLUE!60!white,line width=1pt] table[x=N, y=ges1] {./results/near_corr.txt};
      \addlegendentry{ G-ESPRIT, $\rho = 0$};

      \addplot[smooth,mark=triangle,color=yellow1,line width=1pt] table[x=N, y=es2] {./results/near_corr.txt};
      \addlegendentry{ ESPRIT, $\rho = 0.2$};
      \addplot[smooth, mark=triangle,color=BLUE!60!white,line width=1pt] table[x=N, y=ges2] {./results/near_corr.txt};
      \addlegendentry{ G-ESPRIT, $\rho = 0.2$};

      \addplot[smooth,mark=x,color=yellow1,line width=1pt] table[x=N, y=es3] {./results/near_corr.txt};
      \addlegendentry{ ESPRIT, $\rho = 0.6$};
      \addplot[smooth, mark=x,color=BLUE!60!white,line width=1pt] table[x=N, y=ges3] {./results/near_corr.txt};
      \addlegendentry{ G-ESPRIT, $\rho = 0.6$};

      \addplot[smooth,mark=o, color=black,line width=0.8pt] table[x=N, y=crb1] {./results/near_corr.txt};
      \addlegendentry{CRB, $\rho = 0$};
      \addplot[smooth,mark=triangle, color=black,line width=0.8pt] table[x=N, y=crb2] {./results/near_corr.txt};
      \addlegendentry{CRB, $\rho = 0.2$};
      \addplot[smooth,mark=x, color=black,line width=0.8pt] table[x=N, y=crb3] {./results/near_corr.txt};
      \addlegendentry{CRB, $\rho = 0.6$};
  \end{axis}
\end{tikzpicture}
\end{minipage} 
\vspace{-2pt}
\caption{MSE versus array size $N$ for different correlation level $\rho$ with the power matrix $\mathbf{P} =  2000\big((1-\rho)\mathbf{I}_4 + \rho \mathbf{1}\mathbf{1}^\T\big)$. Four closely-spaced sources are located at $\theta_1=-0.3\times \frac{2\pi}{N}$, $\theta_2=0$, $\theta_3=0.2\times \frac{2\pi}{N}$, $\theta_4=0.5\times \frac{2\pi}{N}$. Results are obtained by averaging over $2\,000$ independent trials.}
\label{fig:diff-rho}
\vspace{-2pt}
\end{figure}
\begin{figure}[!t]
\begin{minipage}[!ht]{0.9\columnwidth}
\centering
\begin{tikzpicture}
    \centering
    \renewcommand{\axisdefaulttryminticks}{4} 
    \pgfplotsset{every major grid/.style={densely dashed}}    
    \tikzstyle{every axis x label}+=[yshift=5pt]
    \pgfplotsset{every axis legend/.append style={cells={anchor=west},fill=white, at={(1,1)}, anchor=north east, font=\footnotesize,legend columns=1}} %
    \begin{axis}[
      width=1\columnwidth,
      height=0.8\columnwidth,
      xmin = -4,
      xmax = 28,
      ymax = 1000,
      ymin = 4e-09,
      ymajorgrids=true,
      scaled ticks=true,
      xlabel = {  Relative SNR $R_{\rm SNR}$ ($\text{dB}$) },
      ylabel = { MSE (in deg$^2$) },
      grid=major,
      scaled ticks=true,
      ymode=log
      ]
      \addplot[smooth,mark=x,color=yellow1,line width=1pt] table[x=snr, y=es] {./results/paper_far.txt};
      \addlegendentry{ESPRIT};
      \addplot[smooth, mark=o,color=BLUE!60!white,line width=1pt] table[x=snr, y=ges] {./results/paper_far.txt};
      \addlegendentry{G-ESPRIT};
      \addplot[smooth,mark=x,color=GREEN!80!white,line width=1pt] table[x=snr, y=mu] {./results/paper_far.txt};
      \addlegendentry{MUSIC};
      \addplot[smooth,mark=o,color=RED!60!white,line width=1pt] table[x=snr, y=gmu] {./results/paper_far.txt};
      \addlegendentry{G-MUSIC};
      \addplot[densely dashed,color=black,line width=1pt] table[x=snr, y=crb] {./results/paper_far.txt};
      \addlegendentry{CRB};
      \addplot [densely dotted, color=gray, line width=1.5pt] coordinates {(-3.5,1e-09) ( -3.5,1000)} node [pos=0.5, right] {};
  \end{axis}
\end{tikzpicture}
\end{minipage} 
\vspace{-2pt}
\caption{Empirical MSEs for widely-spaced DoAs versus relative SNR, where $\Delta=1$, $n = N-1$, $N=400$ and $T=800$. Four sources are located at $\theta_1=-0.6\pi$, $\theta_2=-0.22\pi$, $\theta_3=0.25\pi$, $\theta_4=0.4\pi$, and the power matrix is given by $\mathbf{P} = 2\big((1-\rho)\mathbf{I}_4 + \rho\mathbf{1}\mathbf{1}^\T\big)$ with a correlation coefficient $\rho=0.2$. Results are obtained by averaging over $2\,000$ independent trials.}
\label{fig:compare_widely_spaced}
\end{figure}
\begin{figure}[!t]
\begin{minipage}[!ht]{0.9\columnwidth}
\centering
\begin{tikzpicture}
    \centering
    \renewcommand{\axisdefaulttryminticks}{4} 
    \pgfplotsset{every major grid/.style={densely dashed}}    
    \tikzstyle{every axis x label}+=[yshift=5pt]
    \pgfplotsset{every axis legend/.append style={cells={anchor=west},fill=white, at={(1,1)}, anchor=north east, font=\footnotesize,legend columns=1}} %
    \begin{axis}[
      width=1\columnwidth,
      height=0.8\columnwidth,
      xmin = -4,
      xmax = 28,
      ymax = 2,
      ymin = 5e-08,
      ymajorgrids=true,
      scaled ticks=true,
      xlabel = {  Relative SNR $R_{\rm SNR}$ ($\text{dB}$) },
      ylabel = { MSE (in deg$^2$) },
      grid=major,
      scaled ticks=true,
      ymode=log
      ]
      \addplot[smooth,mark=x,color=yellow1,line width=1pt] table[x=snr, y=es] {./results/rerunFig7.txt};
      \addlegendentry{ESPRIT};
      \addplot[smooth, mark=o,color=BLUE!60!white,line width=1pt] table[x=snr, y=ges] {./results/rerunFig7.txt};
      \addlegendentry{G-ESPRIT};
      \addplot[smooth,mark=x,color=GREEN!80!white,line width=1pt] table[x=snr, y=mu] {./results/rerunFig7.txt};
      \addlegendentry{MUSIC};
      \addplot[smooth,mark=o,color=RED!60!white,line width=1pt] table[x=snr, y=gmu] {./results/rerunFig7.txt};
      \addlegendentry{G-MUSIC};
      \addplot[densely dashed,color=black,line width=1pt] table[x=snr, y=crb] {./results/rerunFig7.txt};
      \addlegendentry{CRB};
      \addplot [densely dotted, color=gray, line width=1.5pt] coordinates {( 1.8,5e-08) ( 1.8,2)} node [pos=0.5, right] {};
  \end{axis}
\end{tikzpicture}
\end{minipage} 
\vspace{-2pt}
\caption{ Empirical MSEs for closely-spaced DoAs versus relative SNR, where $\Delta=120$, $n = N-\Delta$, $N=400$ and $T=800$. Four sources are located at $\theta_1=-0.3\!\times\! \frac{2\pi}{N}$, $\theta_2\!=\!0$, $\theta_3=0.2\times \frac{2\pi}{N}$, $\theta_4=0.5\times \frac{2\pi}{N}$, and the power matrix is given by $\mathbf{P} = 2000\big((1-\rho)\mathbf{I}_4 + \rho \mathbf{1}\mathbf{1}^\T\big)$ with a correlation coefficient $\rho=0.2$. Results are obtained by averaging over $2\,000$ independent trials.} 
\label{fig:compare_closely_spaced}
\vspace{-4pt}
\end{figure}

\begin{figure}[!t]
\begin{minipage}[t]{0.9\columnwidth}
\centering
\begin{tikzpicture}
    \centering
    \renewcommand{\axisdefaulttryminticks}{4} 
    \pgfplotsset{every major grid/.style={densely dashed}}    
    \tikzstyle{every axis x label}+=[yshift=5pt]
    \pgfplotsset{every axis legend/.append style={cells={anchor=west},fill=white, at={(1,1)}, anchor=north east, font=\tiny,legend columns=2}} %
    \begin{axis}[
      width=1\columnwidth,
      height=0.8\columnwidth,
      xmin = 1,
      xmax = 10,
      ymax = 3e+2,
      ymin = 4e-05,
      ymajorgrids=true,
      scaled ticks=true,
      xtick={2,4,6,8,10},
      xticklabels={$\frac{0.4\pi}{N}$,$\frac{0.8\pi}{N}$,$\frac{1.6\pi}{N}$,$\frac{3.2\pi}{N}$,$\frac{6.4\pi}{N}$},
      xlabel={Angular separation $d_s$},
      ylabel = { MSE (in deg$^2$) },
      grid=major,
      scaled ticks=true,
      ymode=log
      ]
      \addplot[smooth,mark=x,color=BLUE!60!white ,line width=1pt] table[x=coeff, y=ges1] {./results/closer_angle.txt};
      \addlegendentry{ G-ESPRIT, $K = 8$};
      \addplot[smooth, mark=o,color=BLUE!60!white,line width=1pt] table[x=coeff, y=gmu1] {./results/closer_angle.txt};
      \addlegendentry{ G-MUSIC, $K = 8$};

      \addplot[smooth,mark=x,color=RED!60!white,line width=1pt] table[x=coeff, y=ges2] {./results/closer_angle.txt};
      \addlegendentry{ G-ESPRIT, $K = 4$};
      \addplot[smooth, mark=o,color=RED!60!white,line width=1pt] table[x=coeff, y=gmu2] {./results/closer_angle.txt};
      \addlegendentry{ G-MUSIC, $K = 4$};

      \addplot[smooth,mark=x,color=yellow1,line width=1pt] table[x=coeff, y=ges3] {./results/closer_angle.txt};
      \addlegendentry{ G-ESPRIT, $K = 2$};
      \addplot[smooth, mark=o,color=yellow1,line width=1pt] table[x=coeff, y=gmu3] {./results/closer_angle.txt};
      \addlegendentry{ G-MUSIC, $K = 2$};

  \end{axis}
\end{tikzpicture}
\end{minipage} 
\vspace{-2pt}
\caption{MSE versus angular interval under different numbers of sources $K=2,4,8$, with $N=400$ and $T=800$. The power matrix is $\mathbf{P} = 2000\big((1-\rho)\mathbf{I}_K + \rho\mathbf{1}\mathbf{1}^\T\big)$ with $\rho=0.2$. Results are obtained by averaging over $2\,000$ independent trials.} 
\label{fig:closer_angle}
\vspace{-2pt}
\end{figure}

\begin{figure*}[!th]
    \centering
    \begin{minipage}[t]{0.65\columnwidth}
    \centering
    \begin{tikzpicture}
        \def \deg2rady{180/3.1415926}
        \centering
        \renewcommand{\axisdefaulttryminticks}{4} 
        \pgfplotsset{every major grid/.style={densely dashed}}       
        \tikzstyle{every axis x label}+=[yshift=5pt]
        \pgfplotsset{every axis legend/.append style={cells={anchor=east},fill=white, at={(2.8,1.2),sacel=0.2}, anchor=north east, font=\footnotesize,legend columns=5 }} 
        \begin{axis}[
        width=1\columnwidth,
        height=0.9\columnwidth,
        xmin = -4,
        xmax = 28,
        ymax = 1,
        ymin = 2e-8,
        ymajorgrids=true,
        scaled ticks=true,
        xlabel = {  SNR $R_{\rm SNR}$ ($\text{dB}$) },
        ylabel = {MSE (in deg$^2$) },
        grid=major,
        scaled ticks=true,
        ymode=log
        ]
        \addplot[smooth,mark=x,color=yellow1,line width=1pt] table[x=snr, y=es] {./results/noise/Rademacher.txt};
        \addlegendentry{ESPRIT};
        \addplot[smooth, mark=o,color=BLUE!60!white,line width=1pt] table[x=snr, y=ges] {./results/noise/Rademacher.txt};
        \addlegendentry{G-ESPRIT};
        \addplot[smooth,mark=x,color=GREEN!80!white,line width=1pt] table[x=snr, y=mu] {./results/noise/Rademacher.txt};
        \addlegendentry{MUSIC};
        \addplot[smooth,mark=o,color=RED!60!white,line width=1pt] table[x=snr, y=gmu] {./results/noise/Rademacher.txt};
        \addlegendentry{G-MUSIC};
        \addplot[densely dashed,color=black,line width=1pt] table[x=snr, y=crb] {./results/noise/Rademacher.txt};
        \addlegendentry{CRB};
        \end{axis}
        \node at (current bounding box.south) [yshift=-7pt,xshift=-85pt] {\small (a)};
\end{tikzpicture}
\end{minipage}
\hspace{-10pt} 
\begin{minipage}[t]{0.65\columnwidth}
\centering
\begin{tikzpicture}
    \centering
    \renewcommand{\axisdefaulttryminticks}{4} 
    \pgfplotsset{every major grid/.style={densely dashed}}    
    \tikzstyle{every axis x label}+=[yshift=5pt]
    \pgfplotsset{every axis legend/.append style={cells={anchor=east},fill=white, at={(0.98,0.7)}, anchor=north east, font=\footnotesize,legend columns=2,}} %
    \begin{axis}[
        width=1\columnwidth,
        height=0.9\columnwidth,
        xmin = -4,
        xmax = 28,
        ymax = 1,
        ymin = 2e-8,
        ymajorgrids=true,
        scaled ticks=true,
        xlabel = {  SNR $R_{\rm SNR}$ ($\text{dB}$) },
        ylabel = \empty,
        grid=major,
        scaled ticks=true,
        ymode=log
        ]
        \addplot[smooth,mark=x,color=yellow1,line width=1pt] table[x=snr, y=es] {./results/noise/uniform.txt};
        \addplot[smooth, mark=o,color=BLUE!60!white,line width=1pt] table[x=snr, y=ges] {./results/noise/uniform.txt};
        \addplot[smooth,mark=x,color=GREEN!80!white,line width=1pt] table[x=snr, y=mu] {./results/noise/uniform.txt};
        \addplot[smooth,mark=o,color=RED!60!white,line width=1pt] table[x=snr, y=gmu] {./results/noise/uniform.txt};
        \addplot[densely dashed,color=black,line width=1pt] table[x=snr, y=crb] {./results/noise/uniform.txt};
    \end{axis}
    \node at (current bounding box.south) [yshift=-7pt,xshift=15pt] {\small (b)};
\end{tikzpicture}
\end{minipage} 
\hspace{-15pt}
\begin{minipage}[t]{0.65\columnwidth}
\centering
\begin{tikzpicture}
    \centering
    \renewcommand{\axisdefaulttryminticks}{4} 
    \pgfplotsset{every major grid/.style={densely dashed}}    
    \tikzstyle{every axis x label}+=[yshift=5pt]
    \pgfplotsset{every axis legend/.append style={cells={anchor=east},fill=white, at={(0.98,0.7)}, anchor=north east, font=\footnotesize,legend columns=2,}} %
    \begin{axis}[
        width=1\columnwidth,
        height=0.9\columnwidth,
        xmin = -4,
        xmax = 28,
        ymax = 1,
        ymin = 2e-8,
        ymajorgrids=true,
        scaled ticks=true,
        xlabel = { SNR $R_{\rm SNR}$ ($\text{dB}$) },
        ylabel =\empty,
        grid=major,
        scaled ticks=true,
        ymode=log
        ]
        \addplot[smooth,mark=x,color=yellow1,line width=1pt] table[x=snr, y=es] {./results/noise/heavy-tailed.txt};
        \addplot[smooth, mark=o,color=BLUE!60!white,line width=1pt] table[x=snr, y=ges] {./results/noise/heavy-tailed.txt};
        \addplot[smooth,mark=x,color=GREEN!80!white,line width=1pt] table[x=snr, y=mu] {./results/noise/heavy-tailed.txt};
        \addplot[smooth,mark=o,color=RED!60!white,line width=1pt] table[x=snr, y=gmu] {./results/noise/heavy-tailed.txt};
    \end{axis}
    \node at (current bounding box.south) [yshift=-7pt,xshift=15pt] {\small (c)};
\end{tikzpicture}
\end{minipage} 
\caption{MSE performance under three non-Gaussian noise models: \textbf{(a)} Rademacher noise, \textbf{(b)} uniform noise, and \textbf{(c)} heavy-tailed noise. Results are obtained by averaging over $2\,000$ independent trials.} 
\label{fig:nonGaussian}
\vspace{-2pt}
\end{figure*}

In Fig.~\ref{fig:diff-rho}, we adopt $\mathbf{P} = 2000\big((1-\rho)\mathbf{I}_4 + \rho \mathbf{1}\mathbf{1}^\T\big)$ as the power matrix for correlated signals, and investigate the impact of the correlation coefficient $\rho$ on the algorithm performance. 
The case $\rho=0$ corresponds to uncorrelated sources.
We observe that increasing source correlation substantially degrades the performance of classical ESPRIT, consistent with its well-known sensitivity to subspace leakage.
The proposed G-ESPRIT, on the other hand, maintains a significantly smaller estimation error.

\subsection{Comparison to other DoA approaches}
\label{subsec:comparision}
In this subsection, we compare the performances of classical ESPRIT in \Cref{alg:TraESPRIT}, the proposed G-ESPRIT in \Cref{alg:G-ESPRIT}, to other popular DoA estimation methods such as MUSIC~\cite{schmidt1986multiple}, and G-MUSIC~\cite{vallet2012improved,Vallet2015Performance}.


Fig.~\ref{fig:compare_widely_spaced} compares the MSEs of the aforementioned four DoA estimation methods as a function of the (relative) signal-to-noise ratio (SNR), in the case of widely-spaced DoAs.
The noise power is set as $\sigma^2 = 1$.
In this case, the minimum subspace separation condition in \Cref{ass:subspace} for the four sources is satisfied at relative SNR $-3.5 \mathrm{dB}$, as indicated by the vertical dotted lines. 
Interestingly, a clear \emph{phase-transition} behavior can be observed for all subspace-based methods.
This is an empirical manifestation of the counterintuitive large-dimensional behavior of SCM eigenspectral discussed in \Cref{theo:spikes}.
For widely spaced DoAs, G-ESPRIT outperforms classical ESPRIT over the SNR range of $[-4, 8]$ dB, whereas its performance advantage becomes marginal at high SNRs.
Moreover, ESPRIT-type methods yield significantly higher MSEs than MUSIC-type methods in this setting.
To have a theoretical grasp of this empirical observation, a second-order behavior analysis is needed.

We then compare in Fig.~\ref{fig:compare_closely_spaced} the MSEs of four subspace methods for closely-spaced DoAs.
To further assess robustness, a deterministic signal model is adopted, where the empirical covariance of the source signals is constructed to match the prescribed power matrix $\mathbf{P}$.
The minimum subspace separation among the four sources occurs at a relative SNR of $1.8~\mathrm{dB}$.
We observe that the classical MUSIC performs poorly and has a relatively large MSE in this setting, in accordance with the inconsistency proven in \cite{Vallet2015Performance}.
Different from the case of widely-spaced DoA in Fig.~\ref{fig:compare_widely_spaced}, here the proposed G-ESPRIT beats classical ESPRIT over a wide SNR range (approximately $4-18$ dB) and gets close to the CRB. 
We believe that this is due to the better choice of large $\Delta$.
We also note that the slight apparent rise of the ESPRIT curve in the mid-SNR region is only caused by line smoothing; the actual simulated MSE values remain monotonically decreasing.

To further validate the superiority of G-ESPRIT in resolving closely-spaced DoAs, Fig.~\ref{fig:closer_angle} analyzes its performance under different angular separation and numbers of sources.
The scaling of $\mathbf{P}$ is chosen such that the weakest source operates at approximately the same SNR level , which guarantees that the phase transition condition is satisfied.
In Fig.~\ref{fig:closer_angle}, the DoAs are uniformly arranged with an identical angular separation between adjacent sources, given by $d_s$, such that $\theta_k=\theta_1+(k-1)d_s, k=1,\ldots,K$.  
When the sources are closely-spaced (small angular separation), both subspace estimators experience degraded performance due to increased coherence among steering vectors. 
Nonetheless, G-ESPRIT maintains significantly lower MSE compared to G-MUSIC across all values of $K$, demonstrating its robustness in challenging high-coherence scenarios.
For example, the MSE gap between the two methods at $d_s=\frac{0.6\pi}{N}$ with $K=4$ is approximately 2 dB, which is consistent with the performance observed in Fig.~\ref{fig:compare_closely_spaced} at low-to-moderate SNRs.
As the angular separation increases, e.g., at $d_s=6.4\pi/N$, G-MUSIC improves and even exhibits a lower MSE than G-ESPRIT, which is also consistent with the widely-spaced source case discussed in Fig.~\ref{fig:compare_widely_spaced}.
Increasing the number of sources naturally raises the difficulty of the problem, but the performance trends remain consistent, further confirming the performance advantage of G-ESPRIT in closely-spaced DoAs and multi-source scenarios.

Although Gaussian noise is assumed for analytical tractability, the main RMT results invoked in this work are universal for a broad class of sub-Gaussian noise distributions~\cite{yin1988on}.
Finally, in Fig.~\ref{fig:nonGaussian}, we further examine the robustness of the proposed algorithms from the perspective of noise Gaussianity.
Specifically, we consider three non-Gaussian noise models, namely Rademacher noise, uniform noise, and heavy-tailed noise, while keeping the same parameter settings as in Fig.~\ref{fig:compare_closely_spaced}. For the first two noise models, which both satisfy the finite-variance condition, the simulation results remain consistent with the theoretical analysis: the classical ESPRIT and MUSIC suffer from significant bias, whereas the proposed G-ESPRIT and G-MUSIC exhibit stable performance. In contrast, for the heavy-tailed noise model, the finite-variance assumption is violated and the theoretical guarantees no longer apply. This further suggests that the theoretical results hold potential for extension to certain sub-Gaussian noise settings.

\section{Conclusion}
\label{sec:conclusion}

In this paper, we perform a large-dimensional analysis of the classical ESPRIT DoA estimation method in the regime of large arrays and limited snapshots, where the array length $N$ and the number of samples $T$ are both large and comparable. 
Our study covers both widely-spaced and closely-spaced DoA scenarios. 
We show that while classical ESPRIT is generally \emph{inconsistent} in these settings, this bias can be corrected using the proposed G-ESPRIT approach.
G-ESPRIT exhibits a pronounced performance advantage over conventional ESPRIT and G-MUSIC methods especially in the closely spaced DoA scenario, demonstrating its ability to resolve extremely small angular separations between sources.

From a technical perspective, we establish a novel bound on the eigenvalue differences between two possibly non-Hermitian matrices, which may be of independent interest in non-Hermitian spectral analysis and related structured matrix problems, for example, approximately gauge-equivalent weighted digraphs and gain-graph switching models. In these models, vertex gauge transformations or switching operations naturally preserve cycle products of edge weights, making cycle-product control more suitable than standard entrywise or norm-based perturbation measures in approximate settings; see, e.g.,~\cite{engel1982algorithms,cavaleri2022godsil}.
Numerical simulations validate the consistency of G-ESPRIT and highlight its reduced variance compared to classical ESPRIT -- though a rigorous theoretical characterization of this variance reduction is beyond the scope of this paper.

It would be of future interest to extend the RMT analysis framework in this paper to characterize the (e.g., CLT-type) second-order fluctuation of both ESPRIT and G-ESPRIT, as to assess quantitatively their performance gaps from the CRB.

\bibliographystyle{IEEEbib}
\bibliography{ref}
    
\clearpage
\begin{appendices}
\onecolumn



\section{Technical Lemmas}
\label{sec:techLemmas}

In this section, we prepare the readers with a few technical lemmas and examples.
\begin{Lemma}[Woodbury identity]\label{lem:Woodbury}
    For $\A \in \CC^{p \times p}$, $\U$, $\V \in \CC^{p \times n}$ such that both $\A$ and $\A + \U \V^\H$ are invertible, we have
    \begin{equation*}
        (\A + \U \V^\H)^{-1} = \A^{-1} - \A^{-1} \U (\I_n + \V^\H \A^{-1} \U)^{-1} \V^\H \A^{-1}.
    \end{equation*}
\end{Lemma}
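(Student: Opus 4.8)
The plan is to establish the identity by \emph{direct verification}: since every matrix involved is square (of size $p \times p$ after inversion), it suffices to exhibit the claimed expression as a right inverse of $\A + \U\V^\H$, a right inverse of a square matrix over $\CC$ being automatically two-sided. Before doing so, I would first make sure the right-hand side is even well defined, i.e.\ that the $n \times n$ capacitance matrix $\mathbf{G} \equiv \I_n + \V^\H \A^{-1} \U$ is invertible. This is where the two invertibility hypotheses are actually used: by the matrix determinant lemma,
\begin{equation*}
\det(\A + \U\V^\H) = \det(\A)\,\det\big(\I_n + \V^\H \A^{-1} \U\big) = \det(\A)\,\det(\mathbf{G}),
\end{equation*}
and since $\det(\A) \neq 0$ and $\det(\A + \U\V^\H) \neq 0$ by assumption, we conclude $\det(\mathbf{G}) \neq 0$, so $\mathbf{G}^{-1}$ exists.

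With $\mathbf{G}^{-1}$ in hand, write $\mathbf{W} \equiv \A^{-1} - \A^{-1}\U \mathbf{G}^{-1} \V^\H \A^{-1}$ for the claimed inverse and expand $(\A + \U\V^\H)\mathbf{W}$. The computation collapses cleanly once one notes the two algebraic facts $(\A + \U\V^\H)\A^{-1} = \I_p + \U\V^\H \A^{-1}$ and, crucially,
\begin{equation*}
(\A + \U\V^\H)\A^{-1}\U = \U + \U\V^\H \A^{-1}\U = \U\big(\I_n + \V^\H \A^{-1}\U\big) = \U\mathbf{G}.
\end{equation*}
Substituting the latter makes the factor $\mathbf{G}$ cancel against $\mathbf{G}^{-1}$, so that $(\A + \U\V^\H)\mathbf{W} = \I_p + \U\V^\H \A^{-1} - \U\mathbf{G}\mathbf{G}^{-1}\V^\H \A^{-1} = \I_p$, as desired.

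The argument is elementary, and the only genuine subtlety --- which I would flag as the one point not to skip --- is the invertibility of $\mathbf{G}$, which must be derived from the hypotheses via the determinant lemma rather than taken for granted. Everything else is bookkeeping; the telescoping identity $(\A+\U\V^\H)\A^{-1}\U = \U\mathbf{G}$ is what drives the cancellation. As an alternative route, one could instead invert the block matrix $\big(\begin{smallmatrix} \A & -\U \\ \V^\H & \I_n \end{smallmatrix}\big)$ by Schur complements with respect to each diagonal block and equate the two resulting $(1,1)$ entries; this yields the identity without any guessing of the final form, but the direct verification above is the shortest path to the stated result.
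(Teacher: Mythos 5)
Your proof is correct: the direct verification that $\mathbf{W}$ is a right (hence two-sided) inverse is sound, and you rightly isolate the invertibility of the capacitance matrix $\mathbf{G}$ as the one non-trivial point, settling it via the determinant identity $\det(\A + \U\V^\H) = \det(\A)\det(\I_n + \V^\H\A^{-1}\U)$. The paper itself states this lemma without proof, treating it as a standard identity, so there is no authorial argument to compare against; your write-up is a complete and appropriately careful justification.
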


\begin{Lemma}\label{lem:ABA=ABA}
    For $\A \in \CC^{p \times n}$ and $\B \in \CC^{n \times p}$, we have
    \begin{equation*}
        \A(\B \A - z \I_n)^{-1} = (\A \B - z\I_p)^{-1} \A,
    \end{equation*}
    for $z \in \CC$ distinct from zero and from the eigenvalues of $\A \B$.
\end{Lemma}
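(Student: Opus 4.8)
The plan is to reduce the claimed identity to an elementary distributive computation, combined with a spectral argument that guarantees both resolvents appearing in the statement are well defined. First I would record the purely algebraic ``push-through'' relation obtained by expanding each side:
\begin{equation*}
    \A(\B\A - z \I_n) = \A\B\A - z\A = (\A\B - z\I_p)\A,
\end{equation*}
which holds for \emph{every} $z \in \CC$ without any invertibility assumption. This is the structural heart of the lemma; everything else is bookkeeping about when the inverses exist.

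The second, and main, step is to verify that under the hypothesis that $z \neq 0$ and $z$ is not an eigenvalue of $\A\B$, \emph{both} $\A\B - z\I_p \in \CC^{p \times p}$ and $\B\A - z\I_n \in \CC^{n \times n}$ are nonsingular. The key fact is that $\A\B$ and $\B\A$ share the same \emph{nonzero} spectrum. I would justify this through the Sylvester-type determinant identity
\begin{equation*}
    \det(\lambda \I_p - \A\B) = \lambda^{\,p-n}\det(\lambda \I_n - \B\A), \quad p \geq n,
\end{equation*}
and symmetrically when $n \geq p$, so that the two characteristic polynomials differ only by a power of $\lambda$. Consequently, any eigenvalue of $\B\A$ that fails to be an eigenvalue of $\A\B$ must equal $0$; since $z \neq 0$ is excluded from the spectrum of $\A\B$ by assumption, it is therefore also excluded from that of $\B\A$. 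Hence both $\A\B - z\I_p$ and $\B\A - z\I_n$ are invertible.

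Finally, I would left-multiply the distributive identity by $(\A\B - z\I_p)^{-1}$ and right-multiply by $(\B\A - z\I_n)^{-1}$; on each side one ``$-z\I$'' factor cancels against its inverse, leaving precisely
\begin{equation*}
    (\A\B - z\I_p)^{-1}\A = \A(\B\A - z\I_n)^{-1},
\end{equation*}
as claimed. I expect the only nontrivial obstacle to be the second step: arguing cleanly that the excluded set, consisting of $0$ together with the eigenvalues of $\A\B$, already rules out singularity of \emph{both} matrices, and not merely of $\A\B - z\I_p$. Once the shared nonzero spectrum is in hand, both inverses exist simultaneously and the conclusion follows immediately from the algebraic identity of the first step.
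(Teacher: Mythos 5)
Your proof is correct: the push-through identity $\A(\B\A - z\I_n) = (\A\B - z\I_p)\A$ together with the fact that $\A\B$ and $\B\A$ share the same nonzero spectrum (so both resolvents exist under the stated hypothesis on $z$) is exactly the standard argument. The paper states this lemma as a known fact without giving any proof, so there is nothing to compare against; your write-up fills that gap correctly and completely.
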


\begin{Lemma}[Large-dimensional approximations involving steering matrix]\label{lem:AJJA}
Under the large-dimensional setting in \Cref{ass:large_array}, we have, as $N,n,T \to \infty$ at the same pace:
\begin{enumerate}
    \item in the case of widely-spaced DoAs in \Cref{ass:widely_spaced} that, 
    \begin{equation}
        \| \A^\H \A - \I_K \| = O(N^{-1}),
    \end{equation}
    and
    \begin{equation}
        \begin{aligned}\relax
            &\| \A^\H  \J_1^\H \J_1 \A - \tau \I_K \| = O(N^{-1})\\
            &\| \A^\H \J_1^\H \J_2 \A - \tau \diag\{ e^{\imath \Delta \theta_i} \}_{i=1}^K \| = O(N^{-1}),
        \end{aligned}
    \end{equation}
    with $\tau = \lim n/N \in (0,1)$,
    so that the steering matrix $\A$ is (approximately for $N$ large) the same as $\U_K$, the top-$K$ subspace of $\C$, and that both $\A^\H  \J_1^\H \J_1 \A$ and $\A^\H  \J_1^\H \J_2 \A$ are asymptotically diagonal; and
    \item in the case of closely-spaced DoAs in \Cref{ass:closely_spaced} with $K=2$ sources such that $\theta_2=\theta_1+\alpha/N$ for $ \alpha>0$, 
    \begin{equation*}
        \A^\H \A = \begin{bmatrix}
            1 & e^{\imath \alpha /2} \sinc(\frac\alpha2) \\
            e^{-\imath \alpha /2} \sinc(\frac\alpha2) & 1
        \end{bmatrix}  + O_{\| \cdot \|}(N^{-1}),
    \end{equation*}
    where $\sinc(t) = \sin(t)/t$, and
    \begin{equation}\label{eq:AJJA2}
        \begin{split}
        \A^\H \J_1^\H \J_1 \A &=  \begin{bmatrix}
            \tau & \frac{1-e^{\imath\alpha \tau}}{-\imath \alpha} \\
            \frac{1-e^{-\imath\alpha \tau}}{\imath \alpha} & \tau
        \end{bmatrix} + O_{\| \cdot \|}(N^{-1}),  \\
        \A^\H \J_1^\H \J_2 \A &= \begin{bmatrix}
            \tau e^{\imath \Delta \theta_1} &\frac{1-e^{\imath\alpha \tau}}{-\imath \alpha} e^{\imath \Delta \theta_2} \\
            \frac{1-e^{-\imath\alpha \tau}}{\imath \alpha} e^{\imath \Delta \theta_1} & \tau e^{\imath \Delta \theta_2}
        \end{bmatrix} + O_{\| \cdot \|}(N^{-1}),
    \end{split} 
    \end{equation}
    so that the steering vectors are no longer orthogonal, and $\A^\H \J_1^\H \J_1 \A$ and $\A^\H \J_1^\H \J_2 \A$ are no long diagonal.
    In particular, it can be checked that in this setting asymptotically as $N,T \to \infty$, $\A^\H \A$ admits $1 \pm |\sinc\frac{\alpha }{2} |$ as eigenvalues, with corresponding eigenvectors 
    \begin{equation}\label{eq:def_V}
        \vv_\pm = [e^{\frac{\imath \alpha}{2}}, \pm 1]^\T/\sqrt 2.
    \end{equation}
    In particular, the approximation errors of both eigenvalues and eigenvectors (in an Euclidean norm sense) are of order $O(N^{-1})$, by Weyl's inequality and Davis--Kahan theorem, respectively.
\end{enumerate}
\end{Lemma}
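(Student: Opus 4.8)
The plan is to reduce every claim to the evaluation of bilinear forms $\mathbf{a}(\theta_j)^\H \M \mathbf{a}(\theta_k)$, where $\M$ ranges over $\I_N$, $\J_1^\H\J_1$, and $\J_1^\H\J_2$, and then to estimate these as (truncated and phase-shifted) geometric sums in the phase difference $\delta \equiv \theta_k - \theta_j$. Writing the steering vectors via \eqref{eq:def_ULA}, each such form is of the type $\frac1N\sum_{p} e^{\imath p \delta}$ summed over an index block: for $\J_1^\H\J_1$ over the $n$ selected indices, and for $\J_1^\H\J_2$ over the same block but with an extra factor $e^{\imath\Delta\theta_k}$ coming from the shift by $\Delta$. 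The diagonal entries ($j=k$, i.e.\ $\delta=0$) are then exactly $1$, $n/N=\tau+O(N^{-1})$, and $\tau e^{\imath\Delta\theta_k}+O(N^{-1})$, respectively, which already recovers the claimed diagonal structure in both regimes.

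For the widely-spaced case (\Cref{ass:widely_spaced}), every off-diagonal entry has $\delta$ equal to a fixed nonzero constant modulo $2\pi$, since the DoAs are fixed as $N\to\infty$. First I would bound any partial geometric sum by $\bigl|\frac1N\sum_p e^{\imath p\delta}\bigr|\le \frac{2}{N\,|1-e^{\imath\delta}|}=O(N^{-1})$, uniformly over the finitely many index pairs. Since $K$ is fixed, this entrywise $O(N^{-1})$ bound upgrades to the stated spectral-norm bounds $\|\A^\H\A-\I_K\|=O(N^{-1})$, $\|\A^\H\J_1^\H\J_1\A-\tau\I_K\|=O(N^{-1})$, and $\|\A^\H\J_1^\H\J_2\A-\tau\diag\{e^{\imath\Delta\theta_i}\}\|=O(N^{-1})$.

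For the closely-spaced case (\Cref{ass:closely_spaced}) with $K=2$ and $\theta_2=\theta_1+\alpha/N$, the phase difference is $\delta=\alpha/N\to 0$, so the previous bound degenerates and the sum instead converges to a Riemann integral. Concretely I would keep the exact geometric sum and Taylor-expand the denominator as $1-e^{\imath\delta}=-\imath\delta+O(\delta^2)$, yielding $\frac1N\sum_{p=0}^{N-1}e^{\imath p\alpha/N}=\frac{e^{\imath\alpha}-1}{\imath\alpha}+O(N^{-1})=e^{\imath\alpha/2}\sinc(\tfrac\alpha2)+O(N^{-1})$, which gives the stated form of $\A^\H\A$; the analogous computation over the block $\{\ell,\ldots,n+\ell-1\}$ produces the partial integral $\int_0^\tau e^{\imath\alpha t}\,dt=\frac{1-e^{\imath\alpha\tau}}{-\imath\alpha}$ appearing in \eqref{eq:AJJA2}, with the extra factor $e^{\imath\Delta\theta_k}$ in the $\J_2$ case. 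The hard part will be precisely the uniform $O(N^{-1})$ control of this approximation: one must check that the second-order term in the denominator expansion and the $\ell$-dependent boundary phase $e^{\imath(\ell-1)\alpha/N}=1+O(\ell/N)$ each contribute only $O(N^{-1})$, which requires the offset $\ell$ to be fixed (or at least $\ell=o(N)$). This quantitative bookkeeping is the technical heart of the lemma.

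Finally, for the eigen-decomposition of $\A^\H\A$, I observe that its limit is $\I_2$ plus a Hermitian off-diagonal part with entry $z=e^{\imath\alpha/2}\sinc(\tfrac\alpha2)$; a matrix of the form $\bigl(\begin{smallmatrix}1 & z\\ \bar z & 1\end{smallmatrix}\bigr)$ has eigenvalues $1\pm|z|=1\pm|\sinc(\tfrac\alpha2)|$ with normalized eigenvectors $\vv_\pm=[e^{\imath\alpha/2},\pm 1]^\T/\sqrt2$. Transferring these to the finite-$N$ matrix, Weyl's inequality gives eigenvalue perturbations of order $O(N^{-1})$, and the Davis--Kahan theorem gives eigenvector perturbations of order $O(N^{-1})$, the latter using that the spectral gap $2|\sinc(\tfrac\alpha2)|$ is a fixed positive constant whenever $\alpha$ is not a zero of $\sinc$. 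This completes the plan.
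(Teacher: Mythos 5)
Your proposal is correct and follows essentially the same route as the paper: evaluating each bilinear form $\mathbf{a}(\theta_j)^\H\M\,\mathbf{a}(\theta_k)$ as a truncated geometric sum, bounding it by $O(N^{-1})$ when the phase difference is a fixed nonzero constant (widely-spaced case), and Taylor-expanding the closed-form sum when the phase difference is $\alpha/N$ (closely-spaced case) to obtain the $\frac{1-e^{\imath\alpha\tau}}{-\imath\alpha}$ terms, with Weyl and Davis--Kahan handling the eigen-decomposition. Your explicit remark that the boundary phase $e^{\imath(\ell-1)\alpha/N}$ must be $1+O(N^{-1})$ (i.e.\ $\ell$ fixed) is a point the paper's computation uses implicitly without comment.
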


\begin{proof}[Proof of \Cref{lem:AJJA}]
In the case of widely-spaced DoAs under \Cref{ass:widely_spaced}, we have, per the definition of $\mathbf{a}(\theta_i)$ in \eqref{eq:def_ULA} and of $\J_1, \J_2$ in \eqref{eq:def_J}, that
\begin{align*}
    &\mathbf{a}(\theta_i)^\H \mathbf{a}(\theta_j) = \frac1N \sum_{k=1}^N e^{ \imath (k-1) (\theta_j - \theta_i)  } = \begin{cases}
        \frac{n}N, & i = j, \\ 
        \frac{ e^{\imath (N-1) (\theta_j - \theta_i)} }N \frac{ 1- e^{\imath N (\theta_j - \theta_i)} }{ 1 - e^{\imath (\theta_j - \theta_i)} } = O(N^{-1}), & i \neq j;
    \end{cases}
\end{align*}
and similarly
\begin{align*}
    \mathbf{a}(\theta_i)^\H \J_1^\H \J_1 \mathbf{a}(\theta_j) &=\frac{1}{N}\sum_{k=\ell}^{\ell+n-1} e^{\imath (k-1)(\theta_j-\theta_i)} = \begin{cases}
        \frac{n}N, & i = j, \\ 
        O(N^{-1}), & i \neq j; 
    \end{cases}\\
    \mathbf{a}(\theta_i)^\H \J_1^\H \J_2 \mathbf{a}(\theta_j) &= \frac{1}{N}\sum_{k=\ell}^{\ell+n-1} e^{\imath\Delta \theta_j} e^{\imath (k-1)(\theta_j-\theta_i)} = \begin{cases}
        \frac{n}N e^{-\imath \Delta \theta_i}, & i = j; \\ 
        O(N^{-1}), & i \neq j.
    \end{cases}
\end{align*}
This concludes the proof of the first item of \Cref{lem:AJJA}.

In the case of closely-spaced DoAs in \Cref{ass:closely_spaced} with $K=2$ sources such that $\theta_2=\theta_1+\alpha/N$ for $ \alpha>0$, we have 
\begin{align*}
     &\mathbf{a}(\theta_1)^\H \J_1^\H \J_1 \mathbf{a}(\theta_2) =\frac{1}{N}\sum_{k=\ell}^{\ell+n-1} e^{\imath (k - 1)(\theta_2-\theta_1)} = \frac{e^{\imath (\ell-1) \alpha/N}(1-e^{\imath \alpha n/N})}{N(1-e^{\imath \alpha/N})} =  \frac{1-e^{\imath\alpha \tau}}{-\imath \alpha} + O(N^{-1}),
\end{align*}
by Taylor expansion with $\tau = \lim n/N$, and similarly
\begin{align*}
    &\mathbf{a}(\theta_1)^\H \J_1^\H \J_2 \mathbf{a}(\theta_2) =\frac{1-e^{\imath\alpha \tau}}{-\imath \alpha}e^{\imath\Delta \theta_2} + O(N^{-1}).
\end{align*}
This concludes the proof of the second item of \Cref{lem:AJJA}.
\end{proof}

\begin{Example}[Circle decomposition of permutation]\label{example:circle_decom}
For example, the permutation written in two-line notation as
\begin{align*}
    \sigma = \begin{bmatrix}
        1&4&6&8&3&9&7&2&5\\
        4&6&8&3&9&1&2&7&5
    \end{bmatrix}
\end{align*}
has the decomposition of one 6-node cycle, one 2-node cycle, and a 1-node cycle. 
Its cycle diagram is shown below.
\begin{center}
    \begin{tikzpicture}[scale=0.7]
        \def\step{1};
        \node[inner sep=1pt] (1) at (0,\step) {$1$};
        \node[inner sep=1pt] (4) at (\step,\step) {$4$};
        \node[inner sep=1pt] (6) at (2*\step,\step) {$6$};
        \node[inner sep=1pt] (8) at (2*\step,0) {$8$};
        \node[inner sep=1pt] (3) at (1*\step,0) {$3$};
        \node[inner sep=1pt] (9) at (0,0) {$9$};
        \draw[->] (1) -- (4);
        \draw[->] (4) -- (6);
        \draw[->] (6) -- (8);
        \draw[->] (8) -- (3);
        \draw[->] (3) -- (9);
        \draw[->] (9) -- (1);
        \node[inner sep=1pt] (7) at (4*\step,\step) {$7$};
        \node[inner sep=1pt] (2) at (4*\step,0) {$2$};
        \draw[-latex] (7.180) to[out = 180,in=180]  (2.180);
        \draw[-latex] (2.0) to[out = 0,in=0]  (7.0);
        \node[inner sep=1pt] (5) at (6*\step,0.5*\step) {$5$};
        \draw[->] (5.north) to[out=135,in=90]([xshift=-0.2cm]5.west) to[out=-90,in=-135] (5.south);
    \end{tikzpicture}
    \end{center}
After deleting node 4 in the first row and deleting node 6 in the second row, the permutation becomes
\begin{align*}
    \sigma = \begin{bmatrix}
        1&6&8&3&9&7&2&5\\
        4&8&3&9&1&2&7&5
    \end{bmatrix}
\end{align*}
which can be decomposed into a path from node 6 to node 4 and several disjoint cycles, as shown below.
\begin{center}
\begin{tikzpicture}
    \def\step{1};
        \node[inner sep=1pt] (1) at (0,\step) {$1$};
        \node[inner sep=1pt] (4) at (\step,\step) {$4$};
        \node[inner sep=1pt] (6) at (2*\step,\step) {$6$};
        \node[inner sep=1pt] (8) at (2*\step,0) {$8$};
        \node[inner sep=1pt] (3) at (1*\step,0) {$3$};
        \node[inner sep=1pt] (9) at (0,0) {$9$};
        \draw[->] (1) -- (4);
        \draw[->] (6) -- (8);
        \draw[->] (8) -- (3);
        \draw[->] (3) -- (9);
        \draw[->] (9) -- (1);
        \node[inner sep=1pt] (7) at (4*\step,\step) {$7$};
        \node[inner sep=1pt] (2) at (4*\step,0) {$2$};
        \draw[-latex] (7.180) to[out = 180,in=180]  (2.180);
        \draw[-latex] (2.0) to[out = 0,in=0]  (7.0);
        \node[inner sep=1pt] (5) at (6*\step,0.5*\step) {$5$};
        \draw[->] (5.north) to[out=135,in=90]([xshift=-0.2cm]5.west) to[out=-90,in=-135] (5.south);
\end{tikzpicture}
\end{center}
\end{Example}

\section{Deterministic Equivalent for Resolvent}
\label{sec:DE}

In this section, we present the Deterministic Equivalent as a convenient technical tool to assess the asymptotic behavior of (eigenspectral) scalar observations of large random matrices. 
We refer the interested readers to \cite[Chapter~2]{couillet2022RMT4ML} for a review.

\begin{Definition}[Deterministic Equivalent]\label{def:DE}
For Hermitian random matrix $\Q \in \mathbb{C}^{N \times N}$, we say a deterministic matrix $\bar{\Q} \in \mathbb{C}^{N \times N} $ is a Deterministic Equivalent of $\Q$ and denote
\begin{equation}\label{eq:def_DE}
    \Q \leftrightarrow \bar \Q,
\end{equation}
if for all deterministic matrices $\A  \in \mathbb{C}^{N \times N}$ and vectors $\mathbf{a}, \mathbf{b}  \in \mathbb{C}^N$ of unit spectral and Euclidean norm, respectively, we have
\begin{align*}
    \frac{1}{N} \tr(\A(\Q-\bar{\Q})) \to 0, \quad \mathbf{a}^\H (\Q-\bar{\Q}) \mathbf{b}\to 0,
\end{align*}
almost surely as $N \to \infty$.
\end{Definition}

\begin{Lemma}[First-~and~second-order Deterministic Equivalents for resolvent,~{\cite[Theorem~2.4]{couillet2022RMT4ML}}]\label{lem:DE}
For random matrix $\Z \in \CC^{N \times T}$ having i.i.d.\@ $\mathcal{CN}(0, \sigma^2)$ entries, $z_1, z_2 \in \CC$ not eigenvalue of $\Z \Z^\H/T$ and deterministic matrix $\mathbf{B} \in \CC^{N \times N}$ of bounded spectral norm, then, for the resolvent $\Q(z) = (\Z \Z^\H/T - z \I_N)^{-1}$, the following deterministic equivalents hold
\begin{align*}
        \Q(z_1) &\leftrightarrow m(z_1) \I_N, \\
        \Q(z_1) \mathbf{B} \Q(z_2) &\leftrightarrow m(z_1) m(z_2)  \mathbf{B} + \eta(z_1, z_2) \frac1T \tr (\mathbf{B})  \I_N,
\end{align*}
with
\begin{equation}\label{eq:def_eta}
    \eta(z_1, z_2) = \frac{\sigma^4 m^2(z_1) m^2(z_2) }{ (1+\sigma^2cm(z_1)) (1+\sigma^2cm(z_2)) - c \sigma^4 m(z_1) m(z_2) },
\end{equation}
and $m(z)$ the unique Stieltjes transform solution to the Mar\u{c}enko-Pastur equation~\cite{marvcenko1967distribution} as defined in \eqref{eq:m(z)} of \Cref{theo:spikes}.
In particular, for $z_1 = z_2 = z$, we obtain $\eta(z,z) = \frac{\sigma^4 m'(z) m^2(z)}{(1+c \sigma^2 m(z))^2}$ with $m'(z)=\frac{m^2(z)}{1-c\sigma^4 m^2(z)/(1+c\sigma^2m(z))^2}$.
\end{Lemma}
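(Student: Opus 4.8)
This is a classical result and I will only outline the strategy; full details are in \cite[Chapter~2]{couillet2022RMT4ML}. The plan is to first establish the first-order equivalent $\Q(z)\leftrightarrow m(z)\I_N$ by a leave-one-out analysis of the resolvent, and then to bootstrap it into the second-order equivalent for the product $\Q(z_1)\B\Q(z_2)$.

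First I would write $\Z = [\z_1,\ldots,\z_T]$, so that $\Z\Z^\H/T = \frac1T\sum_{t=1}^T\z_t\z_t^\H$, and for each $t$ introduce the leave-one-out resolvent $\Q_{-t}(z) = (\frac1T\sum_{s\neq t}\z_s\z_s^\H - z\I_N)^{-1}$, which is independent of $\z_t$. The Sherman--Morrison identity (the rank-one case of \Cref{lem:Woodbury}) gives $\Q(z)\z_t = \Q_{-t}(z)\z_t/(1+\frac1T\z_t^\H\Q_{-t}(z)\z_t)$, while the trace lemma (concentration of quadratic forms in i.i.d.\ $\mathcal{CN}(0,1)$ entries, e.g.\ via Hanson--Wright) gives $\frac1T\z_t^\H\Q_{-t}(z)\z_t - \frac1T\tr\Q_{-t}(z)\to 0$, and a rank-one perturbation bound gives $\frac1T\tr\Q_{-t}(z)-\frac1T\tr\Q(z)\to 0$, both almost surely. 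Substituting these into the identity $z\cdot\frac1N\tr\Q = \frac1N\tr(\Z\Z^\H\Q/T)-1$ and writing $m\equiv\lim\frac1N\tr\Q$ yields the self-consistent relation $zm = m/(1+cm)-1$, which is exactly the Marchenko--Pastur equation \eqref{eq:m(z)}. The same concentration arguments then promote $\frac1N\tr\Q\to m(z)$ to the bilinear and trace statements $\mathbf{a}^\H(\Q-m\I_N)\mathbf{b}\to 0$ and $\frac1N\tr(\A(\Q-m\I_N))\to 0$ required in \Cref{def:DE}.

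For the second-order equivalent I would exploit the Gaussianity of the entries: I would compute $\EE[\mathbf{a}^\H\Q(z_1)\B\Q(z_2)\mathbf{b}]$ by Gaussian integration by parts (Stein's identity) and control the fluctuations around this expectation with the Poincaré--Nash inequality. Integrating by parts over the entries $Z_{ia}$, and substituting the first-order equivalent, reduces the expectation to a closed relation of the form $\Q(z_1)\B\Q(z_2)\leftrightarrow m(z_1)m(z_2)\B + \gamma\,\frac1T\tr(\B)\I_N$; the rank-one term $\frac1T\tr(\B)\I_N$ arises precisely from the \emph{shared} randomness of $\Z$ across the two resolvents, since a column $\z_t$ appearing in both produces a quadratic form $\frac1T\z_t^\H\Q_{-t}(z_1)\B\Q_{-t}(z_2)\z_t\approx\frac1T\tr(\Q_{-t}(z_1)\B\Q_{-t}(z_2))$. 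The scalar $\gamma$ is then pinned down self-consistently: taking the normalized trace of the candidate and feeding it back through the same relation gives a scalar fixed-point equation whose solution is $\gamma=\eta(z_1,z_2)$ as in \eqref{eq:def_eta}, the denominator $(1+cm(z_1))(1+cm(z_2))-cm(z_1)m(z_2)$ being produced by inverting this feedback loop.

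Finally, I would obtain the diagonal case $z_1=z_2=z$ by continuity: since $\frac{d}{dz}\Q(z)=\Q^2(z)$, letting $z_2\to z_1=z$ sends $\Q(z_1)\B\Q(z_2)$ to $\Q(z)\B\Q(z)$ and $\eta(z_1,z_2)$ to $\eta(z,z)$, while implicit differentiation of \eqref{eq:m(z)} supplies $m'(z)=\frac{m^2(z)}{1-cm^2(z)/(1-cm(z))^2}$; substituting this into the limit of \eqref{eq:def_eta} and simplifying gives $\eta(z,z)=\frac{m'(z)m^2(z)}{(1+cm(z))^2}$. I expect the main obstacle to be the second-order step, namely correctly identifying and resumming the coupling term generated by the shared matrix $\Z$ — this is what produces the rank-one correction $\eta\,\frac1T\tr(\B)\I_N$ together with its nontrivial coefficient, whereas the first-order equivalent and the $z_1=z_2$ specialization are comparatively routine.
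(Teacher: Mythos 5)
The paper does not actually prove \Cref{lem:DE}: it is imported verbatim as a known result, cited to \cite[Theorem~2.4]{couillet2022RMT4ML}, so there is no in-paper proof to compare your argument against line by line. That said, your outline is a correct and standard proof strategy for both equivalents, and your first-order derivation (Sherman--Morrison leave-one-out, trace lemma, rank-one perturbation, then the self-consistent relation $zm = m/(1+cm)-1$, which indeed rearranges to \eqref{eq:m(z)}) is exactly the style of computation the paper itself carries out when it \emph{does} prove the companion result \Cref{lem:E_Omega} in \Cref{sec:DE}. The one genuine methodological difference is in the second-order step: you propose Gaussian integration by parts plus Poincar\'e--Nash to identify the coupling coefficient $\eta(z_1,z_2)$, whereas the paper's appendix (for \Cref{lem:E_Omega}) stays entirely within the leave-one-out/Woodbury framework of \Cref{lem:Woodbury}, writing $\Q\z_t = \Q_{-t}\z_t/(1+\frac1T\z_t^\H\Q_{-t}\z_t)$ and, for off-diagonal cross terms, a second leave-one-out $\Q_{-t} \to \Q_{-ts}$, then invoking \Cref{lem:DE} to close the computation. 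Both routes are rigorous for Gaussian entries and yield the same fixed-point equation for $\eta$; the integration-by-parts route gives cleaner control of expectations and variance bounds, while the leave-one-out route is the one that generalizes most directly to the further quantities (e.g.\ $\frac1T\Z^\H\Q\J_1^\T\J_1\Q\Z$) the paper needs. Your identification of the shared-randomness quadratic form $\frac1T\z_t^\H\Q_{-t}(z_1)\B\Q_{-t}(z_2)\z_t \approx \frac1T\tr(\Q_{-t}(z_1)\B\Q_{-t}(z_2))$ as the source of the $\frac1T\tr(\B)\I_N$ correction, and of the denominator of \eqref{eq:def_eta} as the result of inverting the resulting scalar feedback, is exactly right; I see no gap in the outline.
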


\begin{Lemma}[Further Deterministic Equivalent results]\label{lem:E_Omega}
For random matrix $\Z \in \CC^{N \times T}$ having i.i.d.\@ $\mathcal{CN}(0,\sigma^2)$ entries, its resolvent $\Q(z) = (\Z \Z^\H/T - z \I_N)^{-1}$, and matrices $\J_1$, $\J_2$ defined in \eqref{eq:def_J}, the following Deterministic Equivalents hold
\begin{align*}
    \frac{1}{T} \Z^\H \Q(z_1) \J_1^\T \J_2 \Q(z_2) \Z & \leftrightarrow \mathbf{0}_T, \\
    \frac{1}{T} \Z^\H \Q(z_1) \J_1^\T \J_1 \Q(z_2) \Z &\leftrightarrow \gamma(z_1,z_2)  \I_T,
\end{align*}
where
\begin{equation}\label{eq:def_gamma}
    \gamma(z_1,z_2)= c \tau \sigma^2 \times \frac{m(z_1)m(z_2)+c\eta(z_1,z_2)}{(1+\sigma^2cm(z_1))(1+\sigma^2cm(z_2))},
\end{equation}
for $c = \lim N/T$, $\tau = \lim n/N$, $\eta(z_1, z_2)$ defined in \eqref{eq:def_eta} and $m(z)$ defined in \eqref{eq:m(z)}.
\end{Lemma}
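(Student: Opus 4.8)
The plan is to check both equivalences directly against \Cref{def:DE}: for deterministic unit-norm $\mathbf{a},\mathbf{b}\in\CC^T$ and deterministic $\A\in\CC^{T\times T}$ of unit spectral norm, I must show that the bilinear forms and normalized traces of the $T\times T$ matrix $\mathbf{G}\equiv\frac1T\Z^\H\Q(z_1)\M\Q(z_2)\Z$ (with $\M=\J_1^\T\J_2$ or $\M=\J_1^\T\J_1$) converge almost surely to those of $\mathbf{0}_T$, respectively $\gamma(z_1,z_2)\I_T$. Writing $\mathbf{a}^\H\mathbf{G}\mathbf{b}=\frac1T(\Z\mathbf{a})^\H\Q(z_1)\M\Q(z_2)(\Z\mathbf{b})$ exposes the central difficulty: the Gaussian matrix $\Z$ appears simultaneously inside the resolvents $\Q(z_i)$ and in the outer vectors $\Z\mathbf{a},\Z\mathbf{b}$, so a naive substitution of the second-order equivalent $\Q(z_1)\M\Q(z_2)\leftrightarrow m(z_1)m(z_2)\M+\eta(z_1,z_2)\frac1T\tr(\M)\I_N$ from \Cref{lem:DE} is not licensed.

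To decouple, I would expand column-wise with $\Z=[\z_1,\ldots,\z_T]$, so that $[\mathbf{G}]_{ij}=\frac1T\z_i^\H\Q(z_1)\M\Q(z_2)\z_j$, and isolate each $\z_j$ by a leave-one-out argument. Writing $\Q_{-j}(z)$ for the resolvent built from $\Z$ with its $j$-th column removed, the Sherman--Morrison form of \Cref{lem:Woodbury} gives $\Q(z)\z_j=\Q_{-j}(z)\z_j/(1+\frac1T\z_j^\H\Q_{-j}(z)\z_j)$ and symmetrically $\z_j^\H\Q(z)=\z_j^\H\Q_{-j}(z)/(1+\frac1T\z_j^\H\Q_{-j}(z)\z_j)$; since $\Q_{-j}(z)$ is independent of $\z_j$, the trace lemma (concentration of quadratic forms) yields $\frac1T\z_j^\H\Q_{-j}(z)\z_j\to\frac1T\tr\Q_{-j}(z)\to c\,m(z)$. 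The push-through identity $\Z^\H\Q(z)=\tilde\Q(z)\Z^\H$ of \Cref{lem:ABA=ABA}, with $\tilde\Q(z)=(\Z^\H\Z/T-z\I_T)^{-1}$, provides an equivalent route to the same denominators.

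For the diagonal entries this yields $[\mathbf{G}]_{jj}=\frac{1}{(1+c\,m(z_1))(1+c\,m(z_2))}\cdot\frac1T\z_j^\H\Q_{-j}(z_1)\M\Q_{-j}(z_2)\z_j+o(1)$, and a further use of the trace lemma together with the second-order equivalent of \Cref{lem:DE} gives $\frac1T\z_j^\H\Q_{-j}(z_1)\M\Q_{-j}(z_2)\z_j\to\frac1T\tr(\Q(z_1)\M\Q(z_2))\to\frac{\tr\M}{T}\big(m(z_1)m(z_2)+c\,\eta(z_1,z_2)\big)$. The two cases then separate purely through $\tr\M$: the selector $\J_1^\T\J_1=\sum_{k=\ell}^{n+\ell-1}\ee_k\ee_k^\T$ has $\tr=n$, so $\tr\M/T\to c\tau$ and $[\mathbf{G}]_{jj}\to\gamma(z_1,z_2)$ as in \eqref{eq:def_gamma}, whereas the shifted selector $\J_1^\T\J_2=\sum_k\ee_k\ee_{k+\Delta}^\T$ has vanishing diagonal and $\tr=0$, so $[\mathbf{G}]_{jj}\to0$. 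For off-diagonal entries $i\neq j$, the analogous leave-two-out expansion leaves $\frac1T\z_i^\H\Q_{-ij}(z_1)\M\Q_{-ij}(z_2)\z_j$ as the leading term, a cross quadratic form of two independent zero-mean vectors that vanishes in expectation.

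The main obstacle is not this expectation computation---routine given \Cref{lem:DE}---but controlling the fluctuations of the full bilinear form: each off-diagonal $[\mathbf{G}]_{ij}$ is only of size $O(T^{-1/2})$, yet there are $\Theta(T^2)$ of them, so $\sum_{i\neq j}\bar a_i[\mathbf{G}]_{ij}b_j$ cannot be bounded entrywise and must instead be shown to concentrate at $0$. I would control this through a variance estimate, either by a martingale-difference decomposition over the columns $\z_1,\ldots,\z_T$, or, exploiting Gaussianity, by bounding $\Var(\mathbf{a}^\H\mathbf{G}\mathbf{b})$ via the Poincar\'e inequality after computing $\EE[\mathbf{a}^\H\mathbf{G}\mathbf{b}]$ through Gaussian integration by parts; a Borel--Cantelli argument then upgrades this to the almost sure convergence required by \Cref{def:DE}. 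The normalized-trace condition $\frac1T\tr(\A\mathbf{G})\to\gamma(z_1,z_2)\frac1T\tr\A$ (resp.\ $0$) follows from the identical diagonal/off-diagonal split and fluctuation bound.
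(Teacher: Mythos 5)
Your proposal follows essentially the same route as the paper's proof: Sherman--Morrison/leave-one-out to decouple $\z_j$ from the resolvents, the trace lemma plus the second-order deterministic equivalent of \Cref{lem:DE} to evaluate the diagonal entries (with the two cases separating through $\tr(\J_1^\T\J_1)=n$ versus $\tr(\J_1^\T\J_2)=0$), and a leave-two-out expansion showing the off-diagonal expectations vanish. Your additional discussion of the fluctuation control for the full bilinear form is a point the paper simply delegates to standard concentration arguments, so the two proofs agree in substance.
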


\begin{proof}[Proof of \Cref{lem:E_Omega}]
The proof of Deterministic Equivalents generally comes in two steps: 
\begin{enumerate}
    \item approximation (in a spectral norm sense) of the expectation of the random matrix model of interest; and
    \item concentration of trace and bilinear norms as in \Cref{def:DE} around the corresponding expectations.
\end{enumerate}
Here, we provide detailed derivation of the first step for the results in \Cref{lem:E_Omega}, the second concentration step is rather standard, see \cite[Chapter~2]{couillet2022RMT4ML}.

We evaluate the expectation $\frac{1}{T} \EE[\Z^\H \Q(z_1) \J_1^\T \J_1 \Q(z_2) \Z]$, and that of $\frac{1}{T} \Z^\H \Q(z_1) \J_1^\T \J_2 \Q(z_2) \Z$ can be derived similarly.

Consider the $(i,i)$th diagonal entry of the expectation $\frac1T \EE[\Z^\H \Q(z_1) \J_1^\T \J_1 \Q(z_2) \Z]$ as
\begin{align*}
    &\frac{1}{T} \EE[\z_i^\H \Q(z_1) \J_1^\T \J_1 \Q(z_2) \z_i] = \frac{1}{T} \EE\left[\frac{\z_i^\H \Q_{-i}(z_1) \J_1^\T \J_1 \Q_{-i}(z_2) \z_i}{(1+\frac{1}{T}\tr\Q(z_1))(1+\frac{1}{T}\tr\Q(z_2))} \right] + o(1) \\
    &= \frac{\sigma^2 \tr(\Q_{-i}(z_1)\J_1^\T \J_1\Q_{-i}(z_2))/T}{(1+\frac{1}{T}\tr\Q(z_1))(1+\frac{1}{T}\tr\Q(z_2))} + o(1) = \frac{\sigma^2(\frac n T m(z_1)m(z_2)+c\eta(z_1,z_2)\frac n T)}{(1+c \sigma^2 m(z_1))(1+c \sigma^2 m(z_2))} + o(1) = \gamma(z_1,z_2) + o(1),
\end{align*}
where we used in the second line the Woodbury identity in \Cref{lem:Woodbury} (the rank-one case is known as the Sherman--Morrison formula) to write
\begin{align*}
    \Q\z_i = \frac{\Q_{-i}\z_i}{1+\frac{1}{T}\z_i^\H \Q_{-i}\z_i} =  \frac{\Q_{-i}\z_i}{1+\frac{1}{T}\tr\Q} + o(1),
\end{align*}
for $\Q_{-i}=(\frac 1 T \sum_{j \neq i}\x_j \x_j^\H-z\I_N )^{-1}$ independent of $\x_i$ so that $\| \Q_{-i} - \Q \| = O(N^{-1})$, and then \Cref{lem:DE} in the last line.

For off-diagonal entries of $\frac{1}{T} \Z^\H \Q(z_1) \J_1^\T \J_1 \Q(z_2) \Z$, we write, for $i \neq j$ that by Lemmas~\ref{lem:Woodbury}~and~\ref{lem:DE} that
\begin{align*}
    & \frac1T \z_i^\H \Q(z_1) \J_1^\T \J_1 \Q(z_2) \z_j = \frac1T \frac{\z_i^\H \Q_{-i}(z_1) \J_1^\T \J_1 \Q_{-j}(z_2) \z_j}{(1+\frac{1}{T}\tr\Q(z_1))(1+\frac{1}{T}\tr\Q(z_2))} + o(1) = \frac1T \frac{\z_i^\H \Q_{-i}(z_1) \J_1^\T \J_1 \Q_{-j}(z_2) \z_j}{(1+c \sigma^2 m(z_1))(1+c \sigma^2 m(z_2))} + o(1).
\end{align*}
Note that $\x_j$ still depends on $\Q_{-i}$, so we further write $\Q_{-i}=(\Q_{-ij}+\frac{1}{T}\z_j\z_j^\H)^{-1}=\Q_{-ij}-\frac{\Q_{-ij}\frac{1}{T}\z_j\z_j^\H\Q_{-ij}}{1+\frac{1}{T}\z_j^\H \Q_{-ij}\z_j}$ for $\Q_{-ij}$ that is independent of both $\x_i$ and $\x_j$.
Then,
\begin{align*}
    \frac1T \z_i^\H \Q_{-i}(z_1) \J_1^\T \J_1 \Q_{-j}(z_2) \z_j 
    &= \frac1T \z_i^\H \left(\Q_{-ij}(z_1)-\frac{\Q_{-ij}(z_1)\frac{1}{T}\z_j\z_j^\H\Q_{-ij}(z_1)}{ 1 + c \sigma^2 m(z_1) } \right) \\
    & \times \J_1^\T \J_1 \left(\Q_{-ij}(z_2) -\frac{\Q_{-ij}(z_2)\frac{1}{T}\z_i\z_i^\H\Q_{-ij}(z_2)}{1+c \sigma^2m(z_2)} \right) \z_j + o(1),
\end{align*}
the expectation of which is zero for $i \neq j$ by independence $\z_i, \z_j$ and $\Q_{-ij}$.
This concludes the proof of \Cref{lem:E_Omega}.
\end{proof}

\section{Mathematical Proofs}
\label{sec:proofs}

In this section, we present the proofs of our technical results of \Cref{rem:wide_correlated}, \Cref{rem:closely_equal} and \Cref{theo:CDT} in \Cref{subsec:proof_of_rem_wide_correlated}, \Cref{subsec:proof_rem_closely_equal} and \Cref{subsec:proof_of_main}, respectively.

\subsection{Proof of \Cref{rem:wide_correlated}}
\label{subsec:proof_of_rem_wide_correlated}

To prove \Cref{rem:wide_correlated}, we propose to check, in the case of $K = 2$ sources with widely-spaced (see \Cref{ass:widely_spaced}) DoAs $\theta_1 \neq \theta_2 \in (-\pi/2,\pi/2)$, that the classical ESPRIT estimates $\hat \theta_1, \hat \theta_2$ from \Cref{alg:TraESPRIT} \emph{cannot} be consistent \emph{unless} $\U_\P = \I_2$, that is, when the two sources are \emph{uncorrelated}.

To this end, recall from \Cref{rem:wide_correlated} that for $\U_{\P} = [\begin{smallmatrix} u_{11} & u_{12} \\ u_{21} & u_{22} \end{smallmatrix}] \in \CC^{2 \times 2}$ the eigenvectors of $\P$, it follows from \Cref{lem:AJJA} that 
\begin{align*}
    \bPhi_1 &= \U_\P^\H \A^\H \J_1^\H \J_1 \A \U_\P + O_{\| \cdot \|}(N^{-1}) = \tau \I_2 + O_{\| \cdot \|}(N^{-1}),\\
    \bPhi_2 &= \U_\P^\H \A^\H \J_1^\H \J_2 \A \U_\P + O_{\| \cdot \|}(N^{-1}) =\tau \U_{\P}^\H \diag\{ e^{\imath \Delta \theta_i} \}_{i=1}^K \U_{\P} + O_{\lVert \cdot \rVert}(N^{-1}),
\end{align*}
so that by \Cref{theo:main},
\begin{align*}
    \bar{ \mathbf{{\Phi}}} &= \diag(\sqrt{\mathbf{g}})\U_{\P}^\H  \diag\{e^{\imath \Delta \theta_k}\}_{k=1}^{2} \U_{\P} \diag(\sqrt{\mathbf{g}}) + O_{\| \cdot \|}(N^{-1}) \\ 
    &= \Big[\begin{smallmatrix} g_1 (\lvert u_{11} \rvert ^2 e^{\imath \Delta \theta_1} + \lvert u_{21}\rvert^2 e^{\imath \Delta \theta_2} )  & \sqrt{g_1 g_2} ( u_{11}^* u_{12} e^{\imath \Delta \theta_1} + u_{21}^* u_{22} e^{\imath \Delta \theta_2} ) \\ \sqrt{g_1 g_2} ( u_{12}^* u_{11} e^{\imath \Delta \theta_1} + u_{22}^* u_{21} e^{\imath \Delta \theta_2} ) & g_2 (\lvert u_{12}\rvert ^2 e^{\imath \Delta \theta_1} +\lvert u_{22}\rvert^2 e^{\imath \Delta \theta_2} ) \end{smallmatrix}\Big] + O_{\| \cdot \|}(N^{-1}).
\end{align*}
As such, for any $a \in \RR$, $\lambda_1 = a e^{\imath \Delta \theta_1}$ is an eigenvalue of $\bar{\mathbf{\Phi}}$ if and only if it satisfies asymptotically the following quadratic equation
\begin{align} 
\label{eq:characpoly_1}
    \lambda^2 - \tr(\bar{\bPhi}) \lambda + \det(\bar{\bPhi}) = \varepsilon, \quad |\varepsilon| =  O(N^{-1}).
\end{align}
This can be further written as
\begin{equation*}
    \begin{aligned}
        &\left(\lambda -(g_1 \lvert u_{11} \rvert ^2 + g_2 \lvert u_{12} \rvert ^2) e^{\imath \Delta \theta_1} \right)\left(\lambda - (g_1 \lvert u_{21} \rvert ^2 + g_2 \lvert u_{22} \rvert ^2) e^{ \imath \Delta \theta_2}\right) =  (g_1 -g_2)^2 \lvert u_{12} \rvert ^2  \lvert u_{22} \rvert ^2   e^{\imath \Delta \theta_1}
         e^{\imath \Delta \theta_2} + \varepsilon,
    \end{aligned}
\end{equation*}
with $g_1, g_2$ defined in \eqref{eq:def_g} of \Cref{theo:main}.
By substituting $\lambda = ae^{\imath \Delta \theta_1}$,
 we obtain that the equation holds if the corresponding real and imagery parts satisfy
\begin{equation}\label{eq:real_and_imag_in_proof}
    \begin{aligned}
    & \left(a-(g_1 \lvert u_{11} \rvert ^2 + g_2 \lvert u_{12} \rvert ^2)\right)a \cos(2 \Delta \theta_1) - \Re[\varepsilon] \\ 
    &= \left( \left(a-(g_1 \lvert u_{11} \rvert ^2 + g_2 \lvert u_{12} \rvert ^2)\right)(g_1 \lvert u_{21} \rvert^2 + g_2 \lvert u_{22} \rvert ^2) + (g_1 -g_2)^2 \lvert u_{12} \rvert ^2  \lvert u_{22} \rvert ^2 \right) \cos( \Delta (\theta_1 + \theta_2) ), \\
    & \left(a-(g_1 \lvert u_{11} \rvert ^2 + g_2 \lvert u_{12} \rvert ^2)\right)a \sin(2 \Delta \theta_1) - \Im[\varepsilon] \\ 
    &= \left( \left(a-(g_1 \lvert u_{11} \rvert ^2 + g_2 \lvert u_{12} \rvert ^2)\right)(g_1 \lvert u_{21} \rvert^2 + g_2 \lvert u_{22} \rvert ^2) + (g_1 -g_2)^2 \lvert u_{12} \rvert ^2  \lvert u_{22} \rvert ^2 \right) \sin( \Delta (\theta_1 + \theta_2) ).
\end{aligned}
\end{equation}
Therefore, for any $\theta_1 \neq \theta_2$ such that $\Delta \theta_1 \neq \Delta \theta_2 + m \pi$, $a > 0$ and $m,\Delta \in \mathbb{N}$, \eqref{eq:characpoly_1} holds \emph{if and only if} both equations in \eqref{eq:real_and_imag_in_proof} are satisfied.
It can be checked this is possible \emph{only} when $(g_1 -g_2)^2 \lvert u_{12} \rvert ^2  \lvert u_{22} \rvert ^2 = O(N^{-1})$.
Recall from \eqref{eq:def_g} and \Cref{ass:subspace} that $g_1 \neq g_2 > 0$, so that one must have $|u_{12}| |u_{22}| = O(N^{-1/2})$.
A similar conclusion can drawn by considering $\lambda_2 = b e^{\imath \Delta \theta_2}$, and one can thus conclude that classical ESPRIT estimates $\hat \theta_1, \hat \theta_2$ from \Cref{alg:TraESPRIT} \emph{cannot} be consistent \emph{unless} $\U_\P = \I_2$, that is, when the two sources are \emph{uncorrelated}.
This concludes the proof of \Cref{rem:wide_correlated}.

\subsection{Proof of \Cref{rem:closely_equal}}
\label{subsec:proof_rem_closely_equal}

To prove \Cref{rem:closely_equal}, we propose to check, in the case of $K=2$ sources with $\theta_2=\theta_1+\alpha/N$ for some $\alpha >0$, and uncorrelated signals with equal power (i.e., $\P=\I_2$), that the estimated DoAs $\hat{\theta}_1, \hat{\theta}_2$ from classical ESPRIT are \emph{not} $N$-inconsistent (see \Cref{coro:N_consistency}) as $N,T \to \infty$.

In this case, it follows from the second item of \Cref{lem:AJJA} that the nonzero eigenvalues of $\A \P \A^\H = \A \A^\H$ are the same as those of $\A^\H \A$ and are  (asymptotically up to an error of order $O(N^{-1})$) given by $1 \pm |\sinc(\alpha/2) |$, so that the subspace separation condition in \Cref{ass:subspace} becomes
\begin{equation}
    |\sinc(\alpha/2)| < 1- \sigma^2 \sqrt c,
\end{equation}
and we will be working in this setting.
Note in particular that we \emph{must} have $c \in (0,1)$.

Recall again from the second item of \Cref{lem:AJJA} that the top two eigenvectors of $\A \P \A^\H$ are approximately given by $\A \vv_\pm =\A [e^{\frac{\imath \alpha}{2}}, \pm 1]^\T/\sqrt 2$, up to some error in Euclidean norm of order $O(N^{-1})$.
We thus have, 
\begin{align*}
    \bPhi_1 &= \begin{bmatrix} \tau + \Re[e^{-\imath \alpha/2} \beta_1] & -\imath \Im[e^{-\imath \alpha/2} \beta_1] \\
        \imath \Im[e^{-\imath \alpha/2} \beta_1] & \tau - \Re[e^{-\imath \alpha/2} \beta_1] \end{bmatrix} + O_{\| \cdot \|}(N^{-1}),
\end{align*}
and 
\begin{align*}
    [\bPhi_2]_{ij} = &\tau e^{\imath \Delta \theta_1} + (-1)^{i+1}\beta_2 e^{\imath \alpha/2} e^{\imath \Delta \theta_1} + (-1)^{j+1} \beta_1 e^{-\imath \alpha/2} e^{\imath \Delta \theta_2} + (-1)^{i+j}\tau e^{\imath \Delta \theta_2}  + O(N^{-1}),
\end{align*}
with the shortcuts $\beta_1 = \frac{1-e^{\imath\alpha \tau}}{-\imath \alpha} $, $\beta_2 =\beta_1^{*}$.

It follows from \Cref{theo:main} that the DoA estimates given by classical ESPRIT are asymptotically given by the angles of the complex eigenvalues of $\bar \bPhi = \bar\bPhi^{-1} \bar \bPhi_2$ defined in \eqref{eq:def_bar_Phi}, the determinant of which is given by
\begin{align*}
    \det(\bar{\bPhi}) =& \nu_1 e^{\imath \Delta \theta_1} e^{\imath \Delta \theta_2}-\nu_2 +\varepsilon, \nu_1, \nu_2 \in \mathbb{R},|\varepsilon| =  O(N^{-1}),
\end{align*}
where $g_i$ are defined in \eqref{eq:def_g}, $h_i = 1 - g_1$, and $\nu_1, \nu_2$  defined as
    \begin{align*}
        \nu_1 &= g_1^2 g_2^2 ( \tau^2-\beta_1 \beta_2)^2 +(1-g_1)(1-g_2)g_1g_2  \tau^4 +g_1g_2 \tau^2 \left[ \tau^2 (g_1+g_2-2g_1g_2)+\frac{\tau(g_1-g_2)}{2}(e^{\frac{\imath \alpha}{2}}\beta_2+e^{-\frac{\imath \alpha}{2}}\beta_1) \right] ,\\
        \nu_2 &=\frac{g_1g_2\beta_1\beta_2 \tau}{\det^2(\bar{\bPhi}_1)} \left[\tau (1-g_1)(1-g_2)+\tau (g_1+g_2-2g_1g_2)+\frac12 (g_1-g_2) (e^{\frac{\imath \alpha }{2}}\beta_2+e^{-\frac{\imath \alpha }{2}}\beta_1) \right].
    \end{align*}

We now prove \Cref{rem:closely_equal} by contradiction: assume that $\lambda_1=a e^{\imath \Delta \theta_1}, \lambda_2 = b e^{\imath \Delta \theta_2}, a, b \in \mathbb{R}$ are two eigenvalues of $\bar{\bPhi}$, satisfying $\theta_2 = \theta_1 +\alpha/N$, then we must have asymptotically
\begin{align}
    \label{eq:det_function}
    \lambda_1 \lambda_2 = ab e^{\imath \Delta \theta_1}e^{\imath \Delta \theta_2} = \nu_1 e^{\imath \Delta \theta_1} e^{\imath \Delta \theta_2}-\nu_2 +\varepsilon,
\end{align}
for some $\varepsilon \in \RR$ such that $|\varepsilon| = O(N^{-1})$.
This can be further simplified as
\begin{align*}
    &(ab-\nu_1)[\sin(2\Delta\theta_1)\cos(\frac{\Delta \alpha}{N}) +\cos(2\Delta\theta_1)\sin(\frac{\Delta \alpha}{N}) ] = \Im[\varepsilon], \\
    &(ab-\nu_1)[\cos(2\Delta\theta_1)\cos(\frac{\Delta \alpha}{N}) -\sin(2\Delta\theta_1)\sin(\frac{\Delta \alpha}{N}) ] +\nu_2 = \Re[\varepsilon].
\end{align*}
The above equation holds if and only if
\begin{align}
    \nu_1 = ab + O(N^{-1}), \quad \nu_2 = O(N^{-1}).
\end{align}

Let us now focus on the term $\nu_2$. 
For any given $\alpha > 0$, $c \in (0, \infty)$, and $\tau = \lim n /N \in (0,1)$, note that $\beta_1 \beta_2 = | \frac{1-e^{\imath\alpha \tau}}{-\imath \alpha} |^2 \geq 0$ (with equality if and only if $\cos(\alpha \tau) = 1$) so that $\nu_2 > 0$ if and only if
\begin{align}
    \vspace{-2pt}
        \!\!\kappa(\alpha,\tau,c) \equiv & (\ell_1\!-\!\ell_2)\big(\frac{2}{\sigma^2}\!+\!c\!+\!\ell_1 \ell_2\big)\big(\! \sinc \frac{\alpha}2 \!-\! \frac{2}{\alpha} \sin (\frac{\alpha}{2}\!-\!\alpha \tau ) \big) \notag\\
        &+ 2 \tau \Big(\frac{4}{\sigma^2}+(\frac{2}{\sigma^2}-2)\ell_1\ell_2+c(\ell_1\ell_2-1) \Big), \label{eq:kappa_function}
        \vspace{-2pt}
\end{align}
by substituting $g_i$ defined in \eqref{eq:def_g}. 
Note that the (partial) derivative of $\kappa(\alpha,\tau,c)$ with respect to $\tau$ is given by 
\begin{align*}
    \vspace{-2pt}
    \frac{\partial}{\partial \tau} \kappa(\alpha,\tau,c) &= 2(\ell_1\!-\!\ell_2)\big(\frac{2}{\sigma^2}\!+\!c\!+\!\ell_1 \ell_2\big)\cos \left(\frac{\alpha}{2}-\alpha \tau \right) \\
    & + 2 \Big(\frac{4}{\sigma^2}+(\frac{2}{\sigma^2}-2)\ell_1\ell_2+c(\ell_1\ell_2-1) \Big).
    \vspace{-2pt} 
\end{align*}
For any \(\alpha > 0 \) and \( \tau \in (0,1) \), it can be observed that \( \frac{\partial}{\partial \tau} \kappa(\alpha,\tau,c)\) is linear in \( c \in (0,1) \), with
\begin{align*}
    &\lim_{c \to 0} \frac{\partial}{\partial \tau} \kappa(\alpha,\tau,c) > 0, \quad \lim_{c \to 1} \frac{\partial}{\partial \tau} \kappa(\alpha,\tau,c )  >0.
\end{align*}
where we recall that $ |\sinc(\alpha/2)| < 1- \sigma^2 \sqrt c$.
As such, we have that $\frac{\partial}{\partial \tau} \kappa(\alpha,\tau,c ) > 0$ for all $c \in (0,1)$, and \( \kappa(\alpha,\tau,c) \) is thus an increasing function of \( \tau \), so that 
\begin{equation}
    \kappa(\alpha,\tau,c) > \kappa(\alpha, 0 ,c) = 0,
\end{equation}
We thus conclude that the traditional ESPRIT is not $N$-consistent in the case of closely-spaced DoAs with equal power sources.
This concludes the proof of \Cref{rem:closely_equal}.

\subsection{Proof of \Cref{theo:CDT}}
\label{subsec:proof_CDT}

Here we provide the detailed proof of \Cref{theo:CDT}.
We first consider the diagonal entries and then the off-diagonal entries of $\hat{\bPhi}_1 $ and $ \hat{\bPhi}_2$.

\paragraph{Diagonal entries of $\hat{\bPhi}_1, \hat{\bPhi}_2$}
Here, we would like to show that for $\hat{\mathbf{\Phi}}_1, \hat{\mathbf{\Phi}}_2$ and $\bar{\mathbf{\Phi}}_1, \bar{\mathbf{\Phi}}_2$ defined in \eqref{eq:def_hat_Phi} and \eqref{eq:def_bar_Phi} respectively, we have 
\begin{align*}
    [\hat{\mathbf{\Phi}}_1]_{kk} - [\bar{\mathbf{\Phi}}_1]_{kk} \to 0, \quad [\hat{\mathbf{\Phi}}_2]_{kk} - [\bar{\mathbf{\Phi}}_2]_{kk} \to 0.
\end{align*}

Let us start with the diagonal entries of the asymmetric matrix $[\hat{\mathbf{\Phi}}_2]_{kk}=\hat \uu_k^\H \J_1^\T \J_2 \hat \uu_k$, for $\hat \uu_k$ the $k$th dominant eigenvector of the SCM $\hat \C = \X \X^\H/T$.
First note that under the subspace separation condition in \Cref{ass:subspace}, it follows from \Cref{theo:spikes} that the top-$K$ empirical eigenvalues $\hat \lambda_k$ of $\hat \C$ converge almost surely to different limits in the large $N,T$ limit.
We thus have, for $\Gamma_k$ a positively (i.e., counterclockwise) oriented contour circling around \emph{only} the $k$th largest eigenvalue of $\hat \C$, that
\begin{align*}
      &\hat \uu_k^\H \J_1^\H \J_2 \hat \uu_k = \tr  \left( \hat \uu_k \hat \uu_k^\H \J_1^\H \J_2  \right) = -\frac1{2 \pi \imath} \oint_{\Gamma_k} \tr ( \hat \C - z \I_N )^{-1} \J_1^\H \J_2\,dz = -\frac1{2 \pi \imath} \oint_{\Gamma_k} \tr \left( \frac1T \Z \Z^\H - z \I_N + \V \bLambda \V^\H \right)^{-1}\J_1^\H \J_2\,dz,
\end{align*}
where we used Cauchy's integral formula in the second line, and $\X = \A \S + \Z$, $\mathbf{P}=\mathbb{E}[\mathbf{s}(t)\mathbf{s}(t)^{\H}]$ the signal power matrix, as well as
\begin{equation*}
\V = \begin{bmatrix} \A & \frac1T \Z \S ^\H \end{bmatrix} \in \CC^{N \times 2K}, \, \bLambda = \begin{bmatrix} \P & \I_K \\ \I_K & \zo_K \end{bmatrix} \in \CC^{2K \times 2K},
\end{equation*}
in the third line.
We then get, by Woodbury identity in \Cref{lem:Woodbury} that
\begin{align*}
    \hat \uu_k^\H \J_1^\T \J_2 \hat \uu_k &= -\frac1{2 \pi \imath} \oint_{\Gamma_k} \tr (  \Q(z) - \Q(z) \V \left( \bLambda^{-1} + \V^\H \Q(z) \V \right)^{-1} \V^\H \Q(z) ) \J_1^\H \J_2 \,dz \\  
    &=\frac1{2 \pi \imath}   \oint_{\Gamma_k} \tr\left( \bLambda^{-1} + \V^\H \Q(z) \V \right)^{-1} \V^\H \Q(z) \J_1^\H \J_2 \Q(z) \V\,dz ,
\end{align*}
where we recall the resolvent $\Q(z)=(\Z\Z^\H/T-z\I_N)^{-1}$ as in \Cref{def:ST} and use the fact that the term $\tr (\Q(z) \J_1^\T \J_2)$ has no pole circled by $\Gamma_k$.
It is important to note that till now we have \emph{not} used any (asymptotic) approximation in the large $N,T$ limit.

We then use the deterministic equivalents for resolvent result in \Cref{lem:DE} to approximate this trace term, we start by approximating the block matrix $\left( \bLambda^{-1} + \V^\H \Q(z) \V \right)^{-1}$ as
\begin{equation}\label{eq:approx_inv}
    \left( \bLambda^{-1} + \V^\H \Q(z) \V \right)^{-1} = \begin{bmatrix}
      \mathbf{H}_1(z) & \mathbf{H}_2(z) \\ 
      \mathbf{H}_2^\H(z) & \mathbf{H}_3(z)  
    \end{bmatrix} + o_{\| \cdot \|}(1),
\end{equation}
with
\begin{equation}\label{eq:def_H}
\begin{aligned}
    \mathbf{H}_1(z) &=  \frac{z m(z) + 1}{\sigma^2m(z)} \mathbf{H}_2(z) \P, \\
    \mathbf{H}_2(z) &= \left(\I_K+\frac{1+zm(z)}{\sigma^2}\P \A^\H \A \right)^{-1}, \\ 
    \mathbf{H}_3(z) &= -m(z) \A^\H \A \mathbf{H}_2(z),
\end{aligned}
\end{equation}
and
\begin{align}
    \V^\H \Q(z_1) \J_1^\H \J_2 \Q(z_2) \V &= \begin{bmatrix}
    m(z_1) m(z_2)\A^\H \J_1^\H \J_2 \A & \mathbf{0}_{K}\\
    \mathbf{0}_{K}& \mathbf{0}_{K}
\end{bmatrix} + o_{\lVert \cdot \rVert}(1) \label{eq:approx_VQJ1J2QV}.
\end{align}
Here, we use the independence between $\mathbf{S}$ and $\mathbf{Z}$, so that the mixed terms involving both $\mathbf{S}$ and $\mathbf{Z}$ are absorbed into $o_{\lVert \cdot \rVert}(1)$ with high probability.
Using these spectral norm approximations, we obtain
\begin{align*}
    \hat \uu_k^\H \J_1^\H \J_2 \hat \uu_k &= \frac1{2 \pi \imath}  \oint_{\Gamma_k}  \tr ( m^2(z) \mathbf{H}_1(z) \A^\H \J_1^\H \J_2 \A ) \,dz +o(1)\\
    &= \frac1{2 \pi \imath}  \oint_{\Gamma_k}  \tr \left(\frac{zm(z)+1}{\sigma^2 m(z)}(\I_N + \frac{1+zm(z)}{\sigma^2} \P \A^\H \A )^{-1} \P \cdot m^2(z) \A^\H \J_1^\H \J_2 \A \right) \,dz +o(1)\\
    &= -\frac1{2 \pi \imath}  \oint_{\Gamma_k}  \tr \left(\frac{zm(z)+1}{\sigma^2 m(z)}\A\P\A^\H (\I_N + \frac{1+zm(z)}{\sigma^2} \A \P \A^\H)^{-1} \cdot m^2(z) \J_1^\H \J_2 \right) \,dz +o(1)\\
    &\equiv  -\frac1{2 \pi \imath}  \oint_{\Gamma_k}  \tr \left({\mathbf{T}_1(z)} \mathbf{T}_2(z, z) \right) dz +o(1),
\end{align*}
where we define the shortcuts $\mathbf{T}_1(z)$ and $\mathbf{T}_2(z)$ as
\begin{equation}\label{eq:def_T1_T2}
\begin{aligned}
 &\mathbf{T}_1(z) = \frac{zm(z)+1}{\sigma^2 m(z)}(\mathbf{L}^{-1} + \frac{1+zm(z)}{\sigma^2}\mathbf{I})^{-1} \\        
 &\mathbf{T}_2(z_1,z_2) = m(z_1) m(z_2) \U^\H \J_1^\H \J_2 \U,
\end{aligned}    
\end{equation}
for (asymptotic) eigendecomposition $\A \P\A^\H = \U \mathbf{L} \U^\H + o_{\| \cdot \|}(1)$ with diagonal $\mathbf{L} = \diag\{\lambda_1,\ldots,\lambda_K\} \in \RR^{K \times K}$ containing the eigenvalues and $\U=[\uu_1,\ldots,\uu_K] \in \CC^{N \times K}$ containing the associated eigenvectors, as in \Cref{ass:subspace}.
Note here that $\mathbf{T}_1(z)$ is diagonal, we have, by residue calculus, that
\begin{align*}
    &\hat \uu_k^\H \J_1^\H \J_2 \hat \uu_k = -\frac{1}{2\pi\iota}\oint_{\Gamma_k} \frac{(z m(z)+1)m(z)}{\sigma^2 \lambda_k^{-1}+(1+z m(z))} \cdot \mathbf u_k^\H\mathbf J_1^\H\mathbf J_2\mathbf u_k\,dz +o(1) \\ 
    &= - \lim_{z \to \bar \lambda_k} \frac{(z-\bar \lambda_k)(zm(z)+1) m(z)  }{\sigma^2 \lambda_k^{-1}+(1+z m(z))} \times \uu_k^\H \J_1^\H \J_2 \uu_k + o(1),\\
    &= \frac{1-c \ell_k^{-2}}{1+c \ell_k^{-1}} \uu_k^\H \J_1^\H \J_2 \uu_k + o(1) = g_k [\bPhi_2]_{kk}+o(1),
\end{align*}
for $\ell_k=\lambda_k/\sigma^2$ the SNR of $k$th source, $\bar \lambda_k$ the limiting spike in \eqref{eq:eigenvalue_bias} of \Cref{theo:spikes}, and $g_k $ defined in \eqref{eq:def_g} of \Cref{theo:main}.
This allows us to conclude that
\begin{equation}
 [\hat{\mathbf{\Phi}}_2]_{kk} - [\bar{\mathbf{\Phi}}_2]_{kk} \to 0,
\end{equation} 
almost surely as $N,T \to \infty$.

Similarly, we evaluate the diagonal entries of $\bPhi_1$ as
\begin{align*}
        &\hat \uu_k^\H \J_1^\H \J_1 \hat \uu_k =\frac1{2 \pi \imath} \oint_{\Gamma_k} \tr  (\left( \bLambda^{-1} + \V^\H \Q(z) \V \right)^{-1} \V^\H \Q(z) \J_1^\H \J_1 \Q(z) \V)\,dz ,
\end{align*}
for which we have 
\begin{align}\label{eq:approx_VQJ1J1QV}
   & \V^\H \Q(z) \J_1^\H \J_1 \Q(z) \V = \begin{bmatrix}
        m(z_1) m(z_2) \A^\H \J_1^\H \J_1 \A + c \tau \eta(z_1, z_2) \A^\H \A  & \mathbf{0}_{K}\\
    \mathbf{0}_{K}&  \gamma(z_1,z_2) \P
\end{bmatrix} + o_{\lVert \cdot \rVert}(1),  
\end{align}
using Lemmas~\ref{lem:DE}~and~\ref{lem:E_Omega}, for $\gamma(z,z)$ defined in \eqref{eq:def_gamma}.
We thus have
\begin{align*}
    &\hat \uu_k^\H \J_1^\H \J_1 \hat \uu_k = \frac1{2 \pi \imath} \oint_{\Gamma_k} \tr \left( \mathbf{H}_1(z) ( m^2(z) \A^\H \J_1^\H \J_1 \A + c \tau \eta(z, z) \A^\H \A) \right) \,dz + \frac1{2 \pi \imath} \oint_{\Gamma_k} \gamma(z,z) \tr \left( \mathbf{H}_3(z) \P \right) \,dz + o(1),
\end{align*}
for which we have
\begin{align*}
    &\frac1{2 \pi \imath}  \oint_{\Gamma_k}  \tr ( \mathbf{H}_1(z) ( m^2(z) \A^\H \J_1^\H \J_1 \A + c \tau \eta(z, z) \A^\H \A) ) dz =  -\frac1{2 \pi \imath}  \oint_{\Gamma_k}  \tr \left({\mathbf{T}_1(z)} \mathbf{T}_3(z,z) \right)  dz \\
    &= \frac{1-c \ell_k^{-2}}{1+c \ell_k^{-1}} \uu_k^\H \J_1^\H \J_1 \uu_k  + \frac{c \tau}{\ell_k^2+c\ell_k},
\end{align*}
where we define the shortcut
\begin{equation}\label{eq:def_T_3}
    \mathbf{T}_3(z_1,z_2) = m(z_1) m(z_2) \U^\H \J_1^\H \J_1 \U + c \tau \eta(z_1, z_2) \I_K,
\end{equation}
for $\eta(z_1,z_2)$ defined in \eqref{eq:def_eta}; as well as
\begin{align*}
    \frac1{2 \pi \imath}  \oint_{\Gamma_k}  \tr ( \mathbf{H}_3(z) \gamma(z,z) \P) dz &=-\frac1{2 \pi \imath}  \oint_{\Gamma_k} m(z) \gamma(z,z) \tr \left(\mathbf{L}^{-1} + \frac{1+zm(z)}{\sigma^2} \I_K \right)^{-1}\, dz  \\
    &=- \frac1{2 \pi \imath}  \oint_{\Gamma_k}  c \tau \cdot \frac{\sigma^2m(z)( m^2(z)+c\eta(z,z))}{(1+\sigma^2cm(z))^2 (\lambda_k^{-1}+\frac{1+zm(z)}{\sigma^2})} \,dz \\
    &= \frac{c \tau }{\ell_k+c}.
\end{align*}
Putting these together, we thus conclude that
\begin{align*}
    \hat \uu_k^\H \J_1^\T \J_1 \hat \uu_k &= \frac{1-c \ell_k^{-2}}{1+c \ell_k^{-1}} \uu_k^\H \J_1^\T \J_1 \uu_k + c \tau \frac{1+\ell_k^{-1}}{c+\ell_k} + o(1),\\
    &=g_k [\bPhi_1]_{kk} +h_k + o(1),
\end{align*}
for $g_k, h_k$ defined in \eqref{eq:def_g} of \Cref{theo:main}.
This allows us to conclude that
\begin{equation}
 [\hat{\mathbf{\Phi}}_1]_{kk} - [\bar{\mathbf{\Phi}}_1]_{kk} \to 0,
\end{equation} 
almost surely as $N,T \to \infty$.

\paragraph{Off-diagonal entries of $\hat{\bPhi}_1, \hat{\bPhi}_2$}


We now consider the off-diagonal entries, and in particular, those having their indices forming a circle.
Consider indices $1 \leq k_1 < \ldots < k_m \leq K$ that form a cycle of length $m$ and 
\begin{equation}\label{eq:def_M}
    \mathbf{M}_{k_j} = \J_1^\H\J_2~\mbox{or}~\J_1^\H\J_1, \quad j \in \{ 1, \ldots, m \}.
\end{equation}
Introducing the matrices $\mathbf{M}_{k_1}, \ldots, \mathbf{M}_{k_m}$ allows us to evaluate, in a unified fashion, off-diagonal entries of $\hat \bPhi_1, \hat{\bPhi}_2$ as well as their (arbitrary) products.
Note in particular that 
\begin{equation}
    \| \mathbf{M}_k \| \leq 1,
\end{equation}
and by \eqref{eq:approx_VQJ1J2QV}, \eqref{eq:approx_VQJ1J1QV} and \Cref{lem:DE} that 
\begin{equation}\label{eq:approx_VQMQV}
    \V^\H \Q(z_1) \mathbf{M}_k \Q(z_2) \V \equiv \begin{bmatrix}
    \bTheta_{k,1}(z_1,z_2) & \mathbf{0}_{K}\\
    \mathbf{0}_{K}& \bTheta_{k,2}(z_1,z_2)
\end{bmatrix} + o_{\lVert \cdot \rVert}(1),
\end{equation}
with
\begin{equation}\label{eq:def_Theta_1}
   \bTheta_{k,1}(z_1,z_2) = m(z_1) m(z_2) \A^\H \mathbf{M}_k \A + \eta(z_1, z_2) \frac1T \tr (\mathbf{M}_k) \cdot \A^\H \A,
\end{equation}
and
\begin{equation}\label{eq:def_Theta_2}
\bTheta_{k,2}(z_1,z_2) = 
    \begin{cases}
         \zo_K &\text{for}~\mathbf{M}_k = \J_1^\H\J_2, \\ 
         \gamma(z_1,z_2) \P &\text{for}~\mathbf{M}_{k_j} = \J_1^\H\J_1.
    \end{cases}
\end{equation}

Recall that the off-diagonal entries of $\hat{\bPhi}_1$ and $\hat{\bPhi}_2$ are respectively given by
\begin{equation}
    [\hat{\bPhi}_1]_{k_i k_j} = \hat \uu_{k_i}^\H \J_1^\H \J_1 \hat \uu_{k_j}, \quad [\hat{\bPhi}_2]_{k_i k_j} = \hat \uu_{k_i}^\H \J_1^\H \J_2 \hat \uu_{k_j},
\end{equation}
which can be written as the generic bilinear form involving $\mathbf{M}_{k_j} \in \CC^{N \times N}$ defined in \eqref{eq:def_M}, as
\begin{equation}\label{eq:generic_bilinear_form}
    [\hat{\bPhi}_1]_{k_i k_j}~\mbox{or}~[\hat{\bPhi}_2]_{k_i k_j} \equiv \hat \uu_{k_i}^\H \mathbf{M}_{k_j} \hat \uu_{k_j}.
\end{equation}

In the form of \eqref{eq:generic_bilinear_form}, consider now the following product $\hat{\psi}_{k_1, \ldots, k_m}$ involving the off-diagonal entries of both $\hat{\bPhi}_1$ and $\hat{\bPhi}_2$, with their indices $1 \leq k_1 < \ldots < k_m \leq K$ forming a cycle of length $m$ as
\begin{equation}\label{eq:def_hat_psi}
    \hat{\psi}_{k_1, \ldots, k_m} \equiv \hat \uu_{k_m}^\H \mathbf{M}_{k_1} \hat \uu_{k_1} \times \hat \uu_{k_1}^\H \mathbf{M}_{k_2} \hat \uu_{k_2} \times \ldots \times \hat \uu_{k_{m-1}}^\H \mathbf{M}_{k_m} \hat \uu_{k_m}.
\end{equation}
With the same arguments as for the diagonal entries, we get \eqref{eq:hat_psi_loop_1}, for $\Gamma_{k_1}, \ldots, \Gamma_{k_m}$ positively (i.e., counterclockwise) oriented contours circling around \emph{only} the $k_1, \ldots, k_m$th largest eigenvalue of $\hat \C$, respectively, where we used the approximation in \eqref{eq:approx_inv}, and the fact that $\Q(z_i)$ does not \emph{have} pole enclosed by any of the contours $\Gamma_{k_1}, \ldots \Gamma_{k_m}$.
We thus get
\begin{align}
    &\hat{\psi}_{k_1, \ldots, k_m} = \tr( \M_{k_1} \hat{\uu}_{k_1}\hat{\uu}_{k_1}^\H \times \M_{k_2} \hat{\uu}_{k_2}\hat{\uu}_{k_2}^\H \ldots \M_{k_m} \hat{\uu}_{k_m}\hat{\uu}_{k_m}^\H) \nonumber  \\ \nonumber
    &=\left(\frac{1}{2 \pi \imath} \right)^{m}\oint_{\Gamma_{k_1}}\ldots \oint_{\Gamma_{k_m}}  \tr \left( \prod_{i=1}^m \M_{k_i} \Q(z_i) \V \begin{bmatrix}
        \mathbf{H}_1(z_i) & \mathbf{H}_2(z_i) \\ 
        \mathbf{H}_2^\H(z_i) & \mathbf{H}_3(z_i)  
    \end{bmatrix} \V^\H \Q(z_i) \right) d z_1 \ldots d z_m + o(1) \\  \nonumber
    &=\left(\frac{1}{2 \pi \imath} \right)^{m}\oint_{\Gamma_{k_1}}\ldots \oint_{\Gamma_{k_m}}  \tr \left( \begin{bmatrix}
        \mathbf{H}_1(z_1) & \mathbf{H}_2(z_1) \\ 
        \mathbf{H}_2^\H(z_1) & \mathbf{H}_3(z_1)  
    \end{bmatrix} \begin{bmatrix}
        \bTheta_{k_1,1}(z_m,z_1) & \zo_K \\ 
        \zo_K & \bTheta_{k_1,2}(z_m,z_1)
    \end{bmatrix} \right.\\ \nonumber
    & \times \left. \prod_{i=1}^{m-1} 
    \begin{bmatrix}
        \mathbf{H}_1(z_{i+1}) & \mathbf{H}_2(z_{i+1}) \\ 
        \mathbf{H}_2^\H(z_{i+1}) & \mathbf{H}_3(z_{i+1})  
    \end{bmatrix} \begin{bmatrix}
        \bTheta_{k_{i+1},1}(z_i,z_{i+1}) & \zo_K \\ 
        \zo_K & \bTheta_{k_{i+1},2}(z_i,z_{i+1})
    \end{bmatrix}  \right) d z_1 \ldots d z_m + o(1), \\ \nonumber
        &=\left(\frac{1}{2 \pi \imath} \right)^{m}\oint_{\Gamma_{k_1}}\ldots \oint_{\Gamma_{k_m}}  \tr \left( \begin{bmatrix}
        \mathbf{H}_1(z_1) \bTheta_{k_1,1}(z_m,z_1) & \mathbf{H}_2(z_1) \bTheta_{k_1,2}(z_m,z_1) \\ 
        \mathbf{H}_2^\H(z_1) \bTheta_{k_1,1}(z_m,z_1) & \mathbf{H}_3(z_1) \bTheta_{k_1,2}(z_m,z_1)
    \end{bmatrix} \right.\\
    & \left. \times \prod_{i=1}^{m-1} \begin{bmatrix}
        \mathbf{H}_1(z_{i+1}) \bTheta_{k_{i+1},1}(z_i,z_{i+1}) & \mathbf{H}_2(z_{i+1}) \bTheta_{k_{i+1},2}(z_i,z_{i+1}) \\ 
        \mathbf{H}_2^\H(z_{i+1}) \bTheta_{k_{i+1},1}(z_i,z_{i+1}) & \mathbf{H}_3(z_{i+1}) \bTheta_{k_{i+1},2}(z_i,z_{i+1})
    \end{bmatrix} 
        \right) d z_1 \ldots d z_m + o(1), \label{eq:hat_psi_loop_1}
\end{align}

To treat the product of block matrices in \eqref{eq:hat_psi_loop_1}, we introduce the following result, on the (contour) integration over two $dz_{i}$  and $dz_{i+1}$ \emph{only}.

\begin{Lemma}\label{lem:matrix_prod}
Given $i \in \{ 1, \ldots, m-1 \} $, we have, for the following product of three matrices that
\begin{align}
&\oint_{\Gamma_{k_{i}}}\oint_{\Gamma_{k_{i+1}}} \prod_{j=i-1}^{i+1} \begin{bmatrix}
    \mathbf{H}_1(z_{j+1}) \bTheta_{k_{j+1},1}(z_j,z_{j+1}) & \mathbf{H}_2(z_{j+1}) \bTheta_{k_{j+1},2}(z_j,z_{j+1}) \\ 
    \mathbf{H}_2^\H(z_{j+1}) \bTheta_{k_{j+1},1}(z_j,z_{j+1}) & \mathbf{H}_3(z_{j+1}) \bTheta_{k_{j+1},2}(z_j,z_{j+1})
\end{bmatrix} dz_i dz_{i+1} \notag \\ 
&=\oint_{\Gamma_{k_{i}}}\oint_{\Gamma_{k_{i+1}}}\begin{bmatrix}
    \bm{X}_1(z_{i-1},z_i,z_{i+1},z_{i+2}) & \bm{X}_2(z_{i-1},z_i,z_{i+1},_{i+2}) \\ 
    \bm{X}_3(z_{i-1},z_i,z_{i+1},z_{i+2}) & \bm{X}_4(z_{i-1},z_i,z_{i+1},_{i+2})
\end{bmatrix} dz_i dz_{i+1} +o_{\| \cdot \|}(1), \label{eq:3matrices}
\end{align}
where
\begin{align*}
\bm{X}_1(z_{i-1},z_i,z_{i+1},z_{i+2}) =\mathbf{H}_1(z_{i}) \bTheta_{k_{i},1}(z_{i-1},z_{i})\mathbf{H}_1(z_{i+1}) \bTheta_{k_{i+1},1}(z_{i},z_{i+1})\mathbf{H}_1(z_{i+2}) \bTheta_{k_{i+2},1}(z_{i+1},z_{i+2}),\\
\bm{X}_2(z_{i-1},z_i,z_{i+1},z_{i+2}) = \mathbf{H}_1(z_{i}) \bTheta_{k_{i},1}(z_{i-1},z_{i})\mathbf{H}_1(z_{i+1}) \bTheta_{k_{i+1},1}(z_{i},z_{i+1})\mathbf{H}_2(z_{i+2}) \bTheta_{k_{i+2},2}(z_{i+1},z_{i+2}),\\
\bm{X}_3(z_{i-1},z_i,z_{i+1},z_{i+2}) = \mathbf{H}_2^\H(z_{i}) \bTheta_{k_{i},1}(z_{i-1},z_{i})\mathbf{H}_1(z_{i+1}) \bTheta_{k_{i+1},1}(z_{i},z_{i+1})\mathbf{H}_1(z_{i+2}) \bTheta_{k_{i+2},1}(z_{i+1},z_{i+2}),\\
\bm{X}_4(z_{i-1},z_i,z_{i+1},z_{i+2}) = \mathbf{H}_2^\H(z_{i}) \bTheta_{k_{i},1}(z_{i-1},z_{i})\mathbf{H}_1(z_{i+1}) \bTheta_{k_{i+1},1}(z_{i},z_{i+1})\mathbf{H}_1(z_{i+2}) \bTheta_{k_{i+2},1}(z_{i+1},z_{i+2}).
\end{align*}
\end{Lemma}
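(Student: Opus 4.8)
The plan is to prove \Cref{lem:matrix_prod} by reducing the double contour integral to a residue computation at the two enclosed spikes $\bar\lambda_{k_i}, \bar\lambda_{k_{i+1}}$, and then showing that every ``cross'' term of the triple block-matrix product that leaves the upper-left block at the middle node integrates to $o_{\|\cdot\|}(1)$. The claimed right-hand side is most transparent when read as the rank-structured product
\[
\oint_{\Gamma_{k_i}}\oint_{\Gamma_{k_{i+1}}}
\begin{bmatrix}\mathbf{H}_1(z_i)\bTheta_{k_i,1}(z_{i-1},z_i)\\ \mathbf{H}_2^\H(z_i)\bTheta_{k_i,1}(z_{i-1},z_i)\end{bmatrix}
\mathbf{H}_1(z_{i+1})\bTheta_{k_{i+1},1}(z_i,z_{i+1})
\begin{bmatrix}\mathbf{H}_1(z_{i+2})\bTheta_{k_{i+2},1}(z_{i+1},z_{i+2}) & \mathbf{H}_2(z_{i+2})\bTheta_{k_{i+2},2}(z_{i+1},z_{i+2})\end{bmatrix}
\, dz_i\, dz_{i+1},
\]
i.e.\ the first factor keeps only its first block-column, the middle factor collapses to its upper-left block $\mathbf{H}_1\bTheta_1$, and the last factor keeps only its first block-row.

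First I would use the subspace-separation condition of \Cref{ass:subspace}: the limiting spikes $\bar\lambda_{k_1},\ldots,\bar\lambda_{k_m}$ are distinct and bounded away from $[E_-,E_+]$, so each $\Gamma_{k_j}$ encircles exactly one of them, and along these contours the scalar functions $m(z)$, $\eta(z_1,z_2)$, $\gamma(z_1,z_2)$ of \eqref{eq:def_eta}, \eqref{eq:def_gamma} and the rational prefactors in \eqref{eq:def_H} are analytic. Hence the factors $\bTheta_{k,1},\bTheta_{k,2}$ carry no pole in the integration variables, and the only enclosed poles come from the inverse $\mathbf{H}_2(z)=(\I_K+(zm(z)+1)\P\A^\H\A)^{-1}$ and its transpose. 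By \eqref{eq:def_H}, the residues of $\mathbf{H}_1,\mathbf{H}_3$ at $\bar\lambda_{k_j}$ are obtained from the single rank-one residue $\Res_{\bar\lambda_{k_j}}\mathbf{H}_2$ by multiplication with the analytic factors $\tfrac{zm(z)+1}{m(z)}(\cdot)\P$ and $-m(z)\A^\H\A(\cdot)$. The double integral therefore becomes a finite sum of double residues, each a finite product of rank-one matrices and bounded analytic scalars, and the $o_{\|\cdot\|}(1)$ error is inherited directly from the spectral-norm approximations \eqref{eq:approx_inv} and \eqref{eq:approx_VQMQV}.

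Next I would expand the triple product into its four block-paths per output block and evaluate the double residue path by path. When $\mathbf{M}_k=\J_1^\H\J_2$ the identity is purely algebraic: \eqref{eq:def_Theta_2} forces $\bTheta_{k,2}=\zo_K$, so the upper-right block $\mathbf{H}_2\bTheta_{k,2}$ and lower-right block $\mathbf{H}_3\bTheta_{k,2}$ of that factor vanish, and multiplying three matrices of the form $\begin{bmatrix}\mathbf{H}_1\bTheta_1 & \zo_K\\ \mathbf{H}_2^\H\bTheta_1 & \zo_K\end{bmatrix}$ immediately yields the stated column-times-middle-times-row structure, with no residue needed. The substantive case is $\mathbf{M}_k=\J_1^\H\J_1$, where $\bTheta_{k,2}=\gamma(z_1,z_2)\P\neq\zo_K$ and the cross-paths (those whose middle block-index is lower, or which use the upper-right block of the first factor or a lower block-row of the last factor) are genuinely nonzero before integration. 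Here I would show that, for each fixed output block, the double residue of the \emph{sum} of the three cross-paths vanishes, substituting $\Res\,\mathbf{H}_1=\tfrac{\bar\lambda m+1}{m}(\Res\,\mathbf{H}_2)\P$ and $\Res\,\mathbf{H}_3=-m\A^\H\A(\Res\,\mathbf{H}_2)$ and using the regularity of $\gamma$ to align the contributions.

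The main obstacle is precisely this cancellation in the $\J_1^\H\J_1$ case: unlike the all-upper-left path, which after the two residues reproduces the factors $g_{k_i},g_{k_{i+1}}$ and the bilinear forms $\uu^\H\mathbf{M}\uu$ that eventually build $\bar\psi_{k_1,\ldots,k_m}$ in \Cref{theo:CDT}, each \emph{individual} cross-path has a nonzero double residue, so the cancellation must be exhibited at the level of their sum rather than term by term. I expect it to hinge on the relation $\mathbf{H}_3=-m\A^\H\A\mathbf{H}_2$ between the lower-right and upper-left residue structures together with the common rank-one residue of $\mathbf{H}_2$, which force the path that ``descends'' into the lower block at the middle node (carrying a $\gamma\P$ factor) to be exactly offset by the path that descends at an endpoint. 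Once \Cref{lem:matrix_prod} is in hand, it is applied cyclically in \eqref{eq:hat_psi_loop_1} to collapse the middle matrix at each step, reducing the whole chain to the all-upper-left product and thereby yielding the stated limit of $\hat\psi_{k_1,\ldots,k_m}$.
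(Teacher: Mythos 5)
Your overall strategy matches the paper's at the top level: expand the triple block product into a sum of terms of the form $\oint\oint \mathbf{H}\bTheta\cdot\mathbf{H}\bTheta\cdot\mathbf{H}\bTheta\,dz_i\,dz_{i+1}$ and argue that every ``cross'' term vanishes, keeping only $\bm X_1,\ldots,\bm X_4$. However, there is a genuine gap at the heart of your argument: you assert that ``each \emph{individual} cross-path has a nonzero double residue, so the cancellation must be exhibited at the level of their sum,'' and you only \emph{expect} such a sum-level cancellation to follow from the relation $\mathbf{H}_3=-m\,\A^\H\A\,\mathbf{H}_2$. This is not how the lemma is actually proved, and you give no computation substantiating the conjectured cancellation. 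In the paper each cross term is shown to vanish \emph{individually}, with no cancellation between terms. The mechanism you are missing is the following: using \Cref{lem:ABA=ABA} to commute $\A$ (or $\P\A^\H$) past the inverses, every cross term such as $\mathbf{H}_2(z_i)\bTheta_{k_i,2}(z_{i-1},z_i)\mathbf{H}_2^\H(z_{i+1})\cdots$ can be rewritten so that the two pole-carrying factors $\left(\I_K+(z_i m(z_i)+1)\mathbf{L}\right)^{-1}$ and $\left(\I_K+(z_{i+1} m(z_{i+1})+1)\mathbf{L}\right)^{-1}$ become \emph{adjacent}, separated only by powers of the diagonal matrix $\mathbf{L}$ (because $\bTheta_{k,2}\propto\P$ and $\mathbf{H}_2,\mathbf{H}_3$ carry no steering-matrix ``mixing''). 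The eigen-expansion is then a \emph{single} sum $\sum_n \ell_n^j\,\uu_n\uu_n^\H\big/\big[(1+(z_i m(z_i)+1)\ell_n)(1+(z_{i+1}m(z_{i+1})+1)\ell_n)\big]$, so each summand has both of its poles attached to the \emph{same} spike $\bar\lambda_n$; since $\Gamma_{k_i}$ and $\Gamma_{k_{i+1}}$ encircle \emph{distinct} spikes, no summand has a pole inside both contours and the iterated residue is exactly zero. By contrast, in the retained paths $\bm X_1,\ldots,\bm X_4$ the two pole factors are separated by the non-diagonal matrices $\U^\H\mathbf{M}_k\U$ coming from $\bTheta_{k,1}$, so the double sum over a pair of indices $(n,n')$ survives with $n=k_i$, $n'=k_{i+1}$.

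A secondary inaccuracy: your purely algebraic disposal of the case $\mathbf{M}_k=\J_1^\H\J_2$ only applies when \emph{all three} factors have $\bTheta_{k,2}=\zo_K$; since the $\mathbf{M}_{k_j}$ may mix $\J_1^\H\J_1$ and $\J_1^\H\J_2$ within one triple product, you cannot in general split into a ``purely algebraic'' case and a ``substantive'' case, and the contour argument above is needed for every cross term whose $\bTheta_{k,2}$ factors are nonzero. Without the adjacency-of-poles observation, your proposal does not close.
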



\begin{proof}[Proof of \Cref{lem:matrix_prod}]
Consider the following product of three (having index $i-1, i, i+1$) two-by-two block matrices, for which we evaluate only the integration with respect to $z_i$ and $z_{i+1}$,
\begin{align*}
    &\oint_{\Gamma_{k_{i}}}\oint_{\Gamma_{k_{i+1}}} \prod_{j=i-1}^{i+1} \begin{bmatrix}
      \mathbf{H}_1(z_{j+1}) \bTheta_{k_{j+1},1}(z_j,z_{j+1}) & \mathbf{H}_2(z_{j+1}) \bTheta_{k_{j+1},2}(z_j,z_{j+1}) \\ 
      \mathbf{H}_2^\H(z_{j+1}) \bTheta_{k_{j+1},1}(z_j,z_{j+1}) & \mathbf{H}_3(z_{j+1}) \bTheta_{k_{j+1},2}(z_j,z_{j+1})
    \end{bmatrix} dz_i dz_{i+1}.
\end{align*}
This will result in a two-by-two block matrix, each block is the sum of four matrices of the form 
\begin{equation}
    \oint_{\Gamma_{k_{i}}}\oint_{\Gamma_{k_{i+1}}}  \mathbf{H} \bTheta \cdot \mathbf{H} \bTheta \cdot \mathbf{H} \bTheta dz_i dz_{i+1},
\end{equation}
for $\mathbf{H} = \mathbf{H}_1, \mathbf{H}_2, \mathbf{H}_2^\H, \mathbf{H}_3$ and $\bTheta = \bTheta_1, \bTheta_2$.

As an example, let us consider the following term, as one of the four terms in the sum of the $(1,1)$ block.
By definitions of $\mathbf{H}_1,\mathbf{H}_2$ and $\bTheta_{k,1}$ in \eqref{eq:def_H} and \eqref{eq:def_Theta_1}, respectively, we have 
\begin{align*}
    &\oint_{\Gamma_{k_{i}}}\oint_{\Gamma_{k_{i+1}}}  \mathbf{H}_2(z_{i}) \bTheta_{k_{i},2}(z_{i-1},z_{i})\mathbf{H}_2^\H(z_{i+1}) \bTheta_{k_{i+1},1}(z_{i},z_{i+1})\mathbf{H}_1(z_{i+2}) \bTheta_{k_{i+2},1}(z_{i+1},z_{i+2}) \,dz_i dz_{i+1} \\
    &= \oint_{\Gamma_{k_{i}}}\oint_{\Gamma_{k_{i+1}}} {\left(\I_K+\frac{1+z_i m(z_i)}{\sigma^2}\P \A^\H \A \right)^{-1} \gamma(z_{i-1},z_{i})\P \left(\I_K+\frac{1+z_i m(z_i)}{\sigma^2}\A^\H \A\P  \right)^{-1}} \A^\H \\
    &\times \left( m(z_{i}) m(z_{i+1}) \mathbf{M}_{k_{i+1}} + \eta(z_{i}, z_{i+1}) \frac1T \tr \mathbf{M}_{k_{i+1}} \cdot \I_N \right) \A \frac{z_{i+2} m(z_{i+2}) + 1}{\sigma^2 m(z_{i+2})}\left(\I_K+\frac{1+z_{i+2}m(z_{i+2})}{\sigma^2}\P \A^\H \A \right)^{-1}\\
    &\times \P\A^\H \left( m(z_{i+1}) m(z_{i+2}) \mathbf{M}_{k_{i+1}} + \eta(z_{i+1}, z_{i+2}) \frac1T \tr \mathbf{M}_{k_{i+1}} \cdot \I_N \right) \A \,dz_i dz_{i+1} \\
    &= \oint_{\Gamma_{k_{i}}}\oint_{\Gamma_{k_{i+1}}} \underbrace{\left(\I_K+\frac{1+z_i m(z_i)}{\sigma^2}\P \A^\H \A \right)^{-1} \gamma(z_{i-1},z_{i})\P \left(\I_K+\frac{1+z_{i+1} m(z_{i+1})}{\sigma^2}\A^\H \A\P  \right)^{-1}}_{ \mathbf{H}_2(z_{i}) \bTheta_{k_{i},2}(z_{i-1},z_{i})\mathbf{H}_2^\H(z_{i+1}) } \bm{\Gamma}_1(z_i,z_{i+1},z_{i+2}) \,dz_i dz_{i+1}
\end{align*}
where we introduce
\begin{align*}
    \bm{\Gamma}_1(z_i,z_{i+1},z_{i+2})&=\A^\H \left( m(z_{i}) m(z_{i+1}) \mathbf{M}_{k_{i+1}} + \eta(z_{i}, z_{i+1}) \frac1T \tr \mathbf{M}_{k_{i+1}} \cdot \I_N \right) \A \frac{z_{i+2} m(z_{i+2}) + 1}{\sigma^2  m(z_{i+2})}\\
    &\times \left(\I_K+\frac{1+z_{i+2}m(z_{i+2})}{\sigma^2}\P \A^\H \A \right)^{-1} \P\A^\H \left( m(z_{i+1}) m(z_{i+2}) \mathbf{M}_{k_{i+1}} + \eta(z_{i+1}, z_{i+2}) \frac1T \tr \mathbf{M}_{k_{i+1}} \cdot \I_N \right) \A.
\end{align*}
Note that $\bm{\Gamma}_1(z_i,z_{i+1},z_{i+2})$ is a matrix polynomial that does not contain any pole (for $z_i,z_{i+1}$ under evaluation).
We thus have, by \Cref{lem:ABA=ABA} and the (asymptotic) eigendecomposition $\A \P\A^\H = \U \mathbf{L} \U^\H + o_{\| \cdot \|}(1)$ with diagonal $\mathbf{L} = \diag\{\lambda_1,\ldots,\lambda_K\} \in \RR^{K \times K}$ and $\U=[\uu_1,\ldots,\uu_K] \in \CC^{N \times K}$ as in \Cref{ass:subspace},
\begin{align*}
    &\oint_{\Gamma_{k_{i}}}\oint_{\Gamma_{k_{i+1}}}  \mathbf{H}_2(z_{i}) \bTheta_{k_{i},2}(z_{i-1},z_{i})\mathbf{H}_2^\H(z_{i+1}) \bTheta_{k_{i+1},1}(z_{i},z_{i+1})\mathbf{H}_1(z_{i+2}) \bTheta_{k_{i+2},1}(z_{i+1},z_{i+2}) dz_i dz_{i+1} \\
    &= \oint_{\Gamma_{k_{i}}}\oint_{\Gamma_{k_{i+1}}} \frac{z_i m(z_i)+1}{\sigma^2}\P \A^\H \left(\I_K+\frac{z_i m(z_i)+1}{\sigma^2}\A \P \A^\H  \right)^{-1}  \gamma(z_{i-1},z_{i})\frac{z_{i+1} m(z_{i+1})+1}{\sigma^2}\A \P \A^\H \\
    &\times \left(\I_K+(z_{i+1} m(z_{i+1}) + 1)\A \P \A^\H  \right)^{-1} \A\P \bm{\Gamma}_1(z_i,z_{i+1},z_{i+2}) dz_i dz_{i+1}\\
    &=\oint_{\Gamma_{k_{i}}}\oint_{\Gamma_{k_{i+1}}} \frac{(z_i m(z_i) + 1)(z_{i+1} m(z_{i+1}) + 1)}{\sigma^4} \gamma(z_{i-1},z_{i}) \P \A^\H \U\left(\I_K+\frac{z_i m(z_i)+1}{\sigma^2}\mathbf{L}  \right)^{-1} \mathbf{L}\\
    &\times  \left(\I_K+\frac{z_{i+1} m(z_{i+1})+1}{\sigma^2}\mathbf{L} \right)^{-1} \U^\H \A\P \bm{\Gamma}_1(z_i,z_{i+1},z_{i+2}) dz_i dz_{i+1} + o_{\| \cdot \|}(1) \\
    &=\oint_{\Gamma_{k_{i}}}\oint_{\Gamma_{k_{i+1}}}\!\!\!\! \frac{(z_i m(z_i) + 1)(z_{i+1} m(z_{i+1}) + 1)}{\sigma^4} \gamma(z_{i-1},z_{i}) \P \A^\H \sum_{n=1}^K  \frac{\sigma^4 \lambda_n \uu_n \uu_n^\H \times \A\P \bm{\Gamma}_1(z_i,z_{i+1},z_{i+2}) dz_i dz_{i+1}}{(\sigma^2+(z_{i+1} m(z_{i+1}) + 1)\lambda_n)(\sigma^2+(z_{i} m(z_{i}) + 1)\lambda_n)}\\
    &=\lim_{z\to \bar\lambda_i}\oint_{\Gamma_{k_{i+1}}}\!\!\!\! (z_i m(z_i) + 1)(z_{i+1} m(z_{i+1}) + 1) \gamma(z_{i-1},z_{i}) \P \A^\H \sum_{n=1}^K \!\frac{(z-\bar\lambda_i)\lambda_n \uu_n \uu_n^\H \times \A\P \bm{\Gamma}_1(z_i,z_{i+1},z_{i+2}) dz_{i+1}}{(\sigma^2\!+\!(z_{i+1} m(z_{i+1}) \!+ \!1)\lambda_n)(\sigma^2\!+\!(z_{i} m(z_{i}) \!+\! 1)\lambda_n)} \\
    &=o_{\| \cdot \|}(1),
\end{align*}
where we used residue calculus in the last line, with $\bar\lambda_i \equiv \sigma^2(1+\ell_i+c\frac{1+\ell_i}{\ell_i})$ the asymptotic position of the isolated eigenvalue circled by $\Gamma_{k_{i}}$, and the crucial observation that when integrating over $z_{i+1}$, the integrant does \emph{not} contain pole circled by the contour $\Gamma_{k_{i+1}}$. 
This is due to the fact that $\bm{\Gamma}_1(z_i,z_{i+1},z_{i+2})$ does not contain pole and the only pole $\lambda_i$ is already enclosed by $\Gamma_{k_{i}}$ and cannot be enclosed by $\Gamma_{k_{i+1}}$.


Similarly, other terms containing  $\mathbf{H}_2(z_{i}) \bTheta_{k_{i},2}(z_{i-1},z_{i})\mathbf{H}_2^\H(z_{i+1}) $, including 
\begin{align*}
    &\mathbf{H}_2(z_{i}) \bTheta_{k_{i},2}(z_{i-1},z_{i})\mathbf{H}_2^\H(z_{i+1}) \bTheta_{k_{i+1},1}(z_{i},z_{i+1})\mathbf{H}_2(z_{i+2}) \bTheta_{k_{i+2},2}(z_{i+1},z_{i+2}), \\
    &\mathbf{H}_1(z_{i}) \bTheta_{k_{i},1}(z_{i-1},z_{i})\mathbf{H}_2^\H(z_{i+1}) \bTheta_{k_{i+1},2}(z_{i},z_{i+1})\mathbf{H}_2^\H(z_{i+2}) \bTheta_{k_{i+2},1}(z_{i+1},z_{i+2}), \\
    &\mathbf{H}_2^\H(z_{i}) \bTheta_{k_{i},1}(z_{i-1},z_{i})\mathbf{H}_2(z_{i+1}) \bTheta_{k_{i+1},2}(z_{i},z_{i+1})\mathbf{H}_2^\H(z_{i+2}) \bTheta_{k_{i+2},1}(z_{i+1},z_{i+2}),
\end{align*}
lead to matrices of vanishing spectral norm after contour integration.

Following the same idea, we evaluate
\begin{align*}
    & \oint_{\Gamma_{k_{i}}}\oint_{\Gamma_{k_{i+1}}}  \mathbf{H}_2(z_{i}) \bTheta_{k_{i},2}(z_{i-1},z_{i})\mathbf{H}_3(z_{i+1}) \bTheta_{k_{i+1},2}(z_{i},z_{i+1})\mathbf{H}_2^\H(z_{i+2}) \bTheta_{k_{i+2},1}(z_{i+1},z_{i+2}) dz_i dz_{i+1} \\
    &= - \oint_{\Gamma_{k_{i}}}\oint_{\Gamma_{k_{i+1}}} \left(\I_K+\frac{1+z_i m(z_i)}{\sigma^2}\P \A^\H \A \right)^{-1} \gamma(z_{i-1},z_{i})\P m(z_{i+1}) \A^\H \A \left(\I_K+\frac{1+z_{i+1} m(z_{i+1})}{\sigma^2}\P\A^\H \A  \right)^{-1} \\
    & \times \gamma(z_i,z_{i+1})\P  \left(\I_K+\frac{1+z_{i+2} m(z_{i+2})}{\sigma^2}\A^\H \A\P  \right)^{-1} \A^\H \left( m(z_{i+1}) m(z_{i+2}) \mathbf{M}_{k_{i+1}} + \eta(z_{i+1}, z_{i+2}) \frac1T \tr \mathbf{M}_{k_{i+1}} \cdot \I_N \right) \A dz_i dz_{i+1} \\
    &= - \oint_{\Gamma_{k_{i}}}\oint_{\Gamma_{k_{i+1}}} \underbrace{\left(\I_K+\frac{1+z_i m(z_i)}{\sigma^2}\P \A^\H \A \right)^{-1} \gamma(z_{i-1},z_{i})\P m(z_{i+1}) \A^\H \A \left(\I_K+\frac{1+z_{i+1} m(z_{i+1})}{\sigma^2}\P\A^\H \A  \right)^{-1}}_{ \mathbf{H}_2(z_{i}) \bTheta_{k_{i},2}(z_{i-1},z_{i})\mathbf{H}_3(z_{i+1}) } \\
    & \times \bm{\Gamma}_2(z_i,z_{i+1},z_{i+2}) dz_i dz_{i+1},
\end{align*}
where 
\begin{align*}
    \bm{\Gamma}_2(z_i,z_{i+1},z_{i+2})&=\gamma(z_i,z_{i+1})\P  \left(\I_K+\frac{1+z_{i+2} m(z_{i+2})}{\sigma^2}\A^\H \A\P  \right)^{-1} \A^\H \\ 
    & \times \left( m(z_{i+1}) m(z_{i+2}) \mathbf{M}_{k_{i+1}} + \eta(z_{i+1}, z_{i+2}) \frac1T \tr \mathbf{M}_{k_{i+1}} \cdot \I_N \right) \A,
\end{align*}
is a matrix polynomial that does not contain any pole (for $z_i$ and $z_{i+1}$).
We thus get
\begin{align*}
    &\oint_{\Gamma_{k_{i}}}\oint_{\Gamma_{k_{i+1}}}  \mathbf{H}_2(z_{i}) \bTheta_{k_{i},2}(z_{i-1},z_{i})\mathbf{H}_3(z_{i+1}) \bTheta_{k_{i+1},2}(z_{i},z_{i+1})\mathbf{H}_2^\H(z_{i+2}) \bTheta_{k_{i+2},1}(z_{i+1},z_{i+2}) dz_i dz_{i+1} \\
    &= -\left(\frac{1}{2 \pi \imath} \right)^2\oint_{\Gamma_{k_{i}}}\oint_{\Gamma_{k_{i+1}}} \left(\I_K+\frac{1+z_i m(z_i)}{\sigma^2}\P \A^\H \A \right)^{-1} \gamma(z_{i-1},z_{i})\P m(z_{i+1}) \A^\H \A \left(\I_K+\frac{1+z_{i+1} m(z_{i+1})}{\sigma^2}\P\A^\H \A  \right)^{-1}  \\
    & \bm{\Gamma}_2(z_i,z_{i+1},z_{i+2}) dz_i dz_{i+1}\\
    &=-\oint_{\Gamma_{k_{i}}}\oint_{\Gamma_{k_{i+1}}} \frac{(z_i m(z_i) + 1)(z_{i+1} m(z_{i+1}) + 1)}{\sigma^4} m(z_{i+1})\P \A^\H \left(\I_K+\frac{z_i m(z_i)+1}{\sigma^2}\A \P \A^\H  \right)^{-1}  \gamma(z_{i-1},z_{i})\A \P \A^\H  \\
    &\left(\I_K+(z_{i+1} m(z_{i+1}) + 1)\A \P \A^\H  \right)^{-1} \A  \bm{\Gamma}_2(z_i,z_{i+1},z_{i+2}) dz_i dz_{i+1}\\
    &= -\left(\frac{1}{2 \pi \imath} \right)^2\oint_{\Gamma_{k_{i}}}\oint_{\Gamma_{k_{i+1}}} \frac{(z_i m(z_i) + 1)(z_{i+1} m(z_{i+1}) + 1)}{\sigma^4}m(z_{i+1}) \gamma(z_{i-1},z_{i}) \P \A^\H \U\left(\I_K+\frac{z_i m(z_i)+1}{\sigma^2}\mathbf{L}  \right)^{-1} \mathbf{L} \\
    &\left(\I_K+(z_{i+1} m(z_{i+1}) + 1)\mathbf{L} \right)^{-1} \U^\H \bm{\Gamma}_2(z_i,z_{i+1},z_{i+2}) dz_i dz_{i+1}\\
    &=- \oint_{\Gamma_{k_{i}}}\oint_{\Gamma_{k_{i+1}}}\frac{(z_i m(z_i) + 1)(z_{i+1} m(z_{i+1}) + 1)}{\sigma^4} m(z_{i+1})\gamma(z_{i-1},z_{i}) \P \A^\H \\
    & \times \sum_{n=1}^K \frac{\sigma^4 \lambda_n \uu_n \uu_n^\H}{(\sigma^2+(z_{i+1} m(z_{i+1}) + 1)\lambda_n)(\sigma^2+(z_{i} m(z_{i}) + 1)\lambda_n)} \bm{\Gamma}_2(z_i,z_{i+1},z_{i+2}) dz_i dz_{i+1} = o_{\| \cdot \|}(1),
\end{align*}
where the last line follows the same line of arguments as for $\bm{\Gamma}_1$ above.
Similarly, other terms containing $\mathbf{H}_2(z_{i}) \bTheta_{k_{i},2}(z_{i-1},z_{i})\mathbf{H}_3(z_{i+1}) $, including 
\begin{align*}
    &\mathbf{H}_1(z_{i}) \bTheta_{k_{i},1}(z_{i-1},z_{i})\mathbf{H}_2(z_{i+1}) \bTheta_{k_{i+1},2}(z_{i},z_{i+1})\mathbf{H}_3(z_{i+2}) \bTheta_{k_{i+2},2}(z_{i+1},z_{i+2}), \\
    &\mathbf{H}_2(z_{i}) \bTheta_{k_{i},2}(z_{i-1},z_{i})\mathbf{H}_3(z_{i+1}) \bTheta_{k_{i+1},2}(z_{i},z_{i+1})\mathbf{H}_3(z_{i+2}) \bTheta_{k_{i+2},2}(z_{i+1},z_{i+2}), \\
    &\mathbf{H}_2^\H(z_{i}) \bTheta_{k_{i},1}(z_{i-1},z_{i})\mathbf{H}_2(z_{i+1}) \bTheta_{k_{i+1},2}(z_{i},z_{i+1})\mathbf{H}_3(z_{i+2}) \bTheta_{k_{i+2},2}(z_{i+1},z_{i+2}),
\end{align*}
again lead to matrices of vanishing spectral norm after contour integration.

We then consider
    \begin{align*}
        &\oint_{\Gamma_{k_{i}}}\oint_{\Gamma_{k_{i+1}}}  \mathbf{H}_3(z_{i}) \bTheta_{k_{i},2}(z_{i-1},z_{i})\mathbf{H}_2^\H(z_{i+1}) \bTheta_{k_{i+1},1}(z_{i},z_{i+1})\mathbf{H}_1(z_{i+2}) \bTheta_{k_{i+2},1}(z_{i+1},z_{i+2}) dz_i dz_{i+1} \\
        &= - \oint_{\Gamma_{k_{i}}}\oint_{\Gamma_{k_{i+1}}} m(z_i)\A^\H \A \left(\I_K+\frac{1+z_i m(z_i)}{\sigma^2}\P \A^\H \A \right)^{-1}  \gamma(z_{i-1},z_{i})\P \left(\I_K+\frac{1+z_{i+1} m(z_{i+1})}{\sigma^2}\A^\H \A\P  \right)^{-1} \A^\H \\
        &\times \left( m(z_{i}) m(z_{i+1}) \mathbf{M}_{k_{i+1}} + \eta(z_{i}, z_{i+1}) \frac1T \tr \mathbf{M}_{k_{i+1}} \cdot \I_N \right) \A \frac{z_{i+2} m(z_{i+2}) + 1}{\sigma^2 m(z_{i+2})}\left(\I_K+\frac{1+z_{i+2} m(z_{i+2})}{\sigma^2}\P \A^\H \A \right)^{-1}\\
        &\times \P\A^\H \left( m(z_{i+1}) m(z_{i+2}) \mathbf{M}_{k_{i+1}} + \eta(z_{i+1}, z_{i+2}) \frac1T \tr \mathbf{M}_{k_{i+1}} \cdot \I_N \right) \A dz_i dz_{i+1} \\
        &= -\oint_{\Gamma_{k_{i}}}\oint_{\Gamma_{k_{i+1}}} \underbrace{m(z_i)\A^\H \A \left(\I_K+\frac{1+z_i m(z_i)}{\sigma^2}\P \A^\H \A \right)^{-1}  \gamma(z_{i-1},z_{i})\P \left(\I_K+\frac{1+z_{i+1} m(z_{i+1})}{\sigma^2}\A^\H \A\P  \right)^{-1}}_{ \mathbf{H}_3(z_{i}) \bTheta_{k_{i},2}(z_{i-1},z_{i})\mathbf{H}_2^\H(z_{i+1}) }\\
        &\times \bm{\Gamma}_3(z_i,z_{i+1},z_{i+2}) dz_i dz_{i+1}
    \end{align*}
    where 
    \begin{align*}
        \bm{\Gamma}_3(z_i,z_{i+1},z_{i+2})&=\A^\H \left( m(z_{i}) m(z_{i+1}) \mathbf{M}_{k_{i+1}} + \eta(z_{i}, z_{i+1}) \frac1T \tr \mathbf{M}_{k_{i+1}} \cdot \I_N \right) \A \frac{z_{i+2} m(z_{i+2}) + 1}{\sigma^2m(z_{i+2})}\\
        &\times \left(\I_K+\frac{1+z_{i+2} m(z_{i+2})}{\sigma^2}\P \A^\H \A \right)^{-1}\P\A^\H \left( m(z_{i+1}) m(z_{i+2}) \mathbf{M}_{k_{i+1}} + \eta(z_{i+1}, z_{i+2}) \frac1T \tr \mathbf{M}_{k_{i+1}} \cdot \I_N \right) \A
    \end{align*}
    is a matrix polynomial that does not contain any pole (for $z_i$ and $z_{i+1}$).
    We thus obtain
    \begin{align*}
        & \oint_{\Gamma_{k_{i}}}\oint_{\Gamma_{k_{i+1}}}  \mathbf{H}_3(z_{i}) \bTheta_{k_{i},2}(z_{i-1},z_{i})\mathbf{H}_2^\H(z_{i+1}) \bTheta_{k_{i+1},1}(z_{i},z_{i+1})\mathbf{H}_1(z_{i+2}) \bTheta_{k_{i+2},1}(z_{i+1},z_{i+2}) dz_i dz_{i+1} \\
        &= - \oint_{\Gamma_{k_{i}}}\oint_{\Gamma_{k_{i+1}}} m(z_i)\A^\H \A \left(\I_K+\frac{1+z_i m(z_i)}{\sigma^2}\P \A^\H \A \right)^{-1}  \gamma(z_{i-1},z_{i})\P \left(\I_K+\frac{1+z_{i+1} m(z_{i+1})}{\sigma^2}\A^\H \A\P  \right)^{-1} \\
        & \times \bm{\Gamma}_3(z_i,z_{i+1},z_{i+2}) dz_i dz_{i+1}\\
        &=-\oint_{\Gamma_{k_{i}}}\oint_{\Gamma_{k_{i+1}}} m(z_i)\frac{z_i m(z_i)+1}{\sigma^2} \A^\H \A\P\A^\H \left(\I_K+\frac{z_i m(z_i)+1}{\sigma^2}\A \P \A^\H  \right)^{-1}  \gamma(z_{i-1},z_{i})\frac{1+z_{i+1} m(z_{i+1})}{\sigma^2}\A \P \A^\H \\
        &\left(\I_K+\frac{1+z_{i+1} m(z_{i+1})}{\sigma^2}\A \P \A^\H  \right)^{-1} \A\P \bm{\Gamma}_3(z_i,z_{i+1},z_{i+2}) dz_i dz_{i+1}\\
        &= -\oint_{\Gamma_{k_{i}}}\oint_{\Gamma_{k_{i+1}}} m(z_i)\frac{z_i m(z_i)+1}{\sigma^2} \A^\H \U\mathbf{L}\left(\I_K+\frac{z_i m(z_i)+1}{\sigma^2}\mathbf{L}  \right)^{-1} \gamma(z_{i-1},z_{i})\frac{1+z_{i+1} m(z_{i+1})}{\sigma^2} \mathbf{L}\\
        &\times \left(\I_K+\frac{1+z_{i+1} m(z_{i+1})}{\sigma^2}\mathbf{L} \right)^{-1} \U^\H \A\P \bm{\Gamma}_3(z_i,z_{i+1},z_{i+2}) dz_i dz_{i+1}\\
        &=- \oint_{\Gamma_{k_{i}}}\oint_{\Gamma_{k_{i+1}}}m(z_i)\frac{(z_i m(z_i) + 1)(z_{i+1} m(z_{i+1}) + 1)}{\sigma^4} \gamma(z_{i-1},z_{i})\A^\H \\
        &\times \sum_{n=1}^K \frac{\sigma^4 \lambda_n^2 \uu_n \uu_n^\H}{(\sigma^2+(z_{i+1} m(z_{i+1}) + 1)\lambda_n)(\sigma^2+(z_{i} m(z_{i}) + 1)\lambda_n)}  \A\P \bm{\Gamma}_3(z_i,z_{i+1},z_{i+2}) dz_i dz_{i+1} = o_{\| \cdot \|}(1).
    \end{align*}
    Similarly, other terms containing $ \mathbf{H}_3(z_{i}) \bTheta_{k_{i},2}(z_{i-1},z_{i})\mathbf{H}_2^\H(z_{i+1}) $, including 
    \begin{align*}
        &\mathbf{H}_3(z_{i}) \bTheta_{k_{i},2}(z_{i-1},z_{i})\mathbf{H}_2^\H(z_{i+1}) \bTheta_{k_{i+1},1}(z_{i},z_{i+1})\mathbf{H}_2(z_{i+2}) \bTheta_{k_{i+2},2}(z_{i+1},z_{i+2}), \\
        &\mathbf{H}_3(z_{i}) \bTheta_{k_{i},2}(z_{i-1},z_{i})\mathbf{H}_3(z_{i+1}) \bTheta_{k_{i+1},2}(z_{i},z_{i+1})\mathbf{H}_2^\H(z_{i+2}) \bTheta_{k_{i+2},1}(z_{i+1},z_{i+2}),
    \end{align*}
    lead to matrices of vanishing spectral norm after contour integration.
    
    
    It remains to evaluate
    \begin{align*}
        & \oint_{\Gamma_{k_{i}}}\oint_{\Gamma_{k_{i+1}}}  \mathbf{H}_3(z_{i}) \bTheta_{k_{i},2}(z_{i-1},z_{i})\mathbf{H}_3(z_{i+1}) \bTheta_{k_{i+1},2}(z_{i},z_{i+1})\mathbf{H}_3(z_{i+2}) \bTheta_{k_{i+2},2}(z_{i+1},z_{i+2}) dz_i dz_{i+1} \\
        &= - \oint_{\Gamma_{k_{i}}}\oint_{\Gamma_{k_{i+1}}} m(z_i)\A^\H \A \left(\I_K+\frac{1+z_i m(z_i)}{\sigma^2}\P \A^\H \A \right)^{-1}  \gamma(z_{i-1},z_{i})\P m(z_{i+1})\A^\H \A \left(\I_K+\frac{1+z_{i+1} m(z_{i+1})}{\sigma^2}\P \A^\H \A \right)^{-1} \\
        & \times \gamma(z_{i},z_{i+1})\P m(z_{i+2})\A^\H \A \left(\I_K+\frac{1+z_{i+2} m(z_{i+2})}{\sigma^2}\P \A^\H \A \right)^{-1}  \gamma(z_{i+1},z_{i+2})\P dz_i dz_{i+1} \\
        &= - \oint_{\Gamma_{k_{i}}}\oint_{\Gamma_{k_{i+1}}} m(z_i)\gamma(z_{i-1},z_{i}) m(z_{i+1})\gamma(z_{i},z_{i+1}) m(z_{i+2}) \gamma(z_{i+1},z_{i+2})\A^\H \frac{1+z_i m(z_i)}{\sigma^2} \A\P\A^\H \\
        &\left(\I_K+\frac{z_i m(z_i)+1}{\sigma^2}\A \P \A^\H  \right)^{-1} \frac{1+z_{i+1} m(z_{i+1})}{\sigma^2}\A \P \A^\H \left(\I_K+\frac{1+z_{i+1} m(z_{i+1})}{\sigma^2}\A \P \A^\H  \right)^{-1} \A\P \A^\H \\
        & \times \frac{1+z_{i+2} m(z_{i+2})}{\sigma^2}\A \P \A^\H \left(\I_K+\frac{1+z_{i+2} m(z_{i+2})}{\sigma^2}\A \P \A^\H  \right)^{-1}\A \P  dz_i dz_{i+1} \\
        &= - \oint_{\Gamma_{k_{i}}}\oint_{\Gamma_{k_{i+1}}} m(z_i)\gamma(z_{i-1},z_{i}) m(z_{i+1})\gamma(z_{i},z_{i+1}) m(z_{i+2}) \gamma(z_{i+1},z_{i+2})\frac{1+z_i m(z_i)}{\sigma^2}  \frac{1+z_{i+1} m(z_{i+1})}{\sigma^2} \frac{1+z_{i+2} m(z_{i+2})}{\sigma^2}\\
        & \times \A^\H \U\mathbf{L}\left(\I_K+\frac{z_i m(z_i)+1}{\sigma^2}\mathbf{L}  \right)^{-1} \mathbf{L}\left(\I_K+\frac{1+z_{i+1} m(z_{i+1})}{\sigma^2} \mathbf{L} \right)^{-1}\mathbf{L}^2 \left(\I_K+\frac{1+z_{i+2} m(z_{i+2})}{\sigma^2}\mathbf{L} \right)^{-1} \U^\H \A \P  dz_i dz_{i+1} \\
        &= - \oint_{\Gamma_{k_{i}}}\oint_{\Gamma_{k_{i+1}}} m(z_i)\gamma(z_{i-1},z_{i}) m(z_{i+1})\gamma(z_{i},z_{i+1}) m(z_{i+2}) \gamma(z_{i+1},z_{i+2})\frac{1+z_i m(z_i)}{\sigma^2} \frac{(z_{i+1} m(z_{i+1}) + 1)(z_{i+2} m(z_{i+2}) + 1)}{\sigma^4}\\
        & \times \A^\H \sum_{n=1}^K \frac{\sigma^6 \lambda_n^4 \uu_n \uu_n^\H}{(\sigma^2+(z_{i+2} m(z_{i+2}) + 1)\lambda_n)(\sigma^2+(z_{i+1} m(z_{i+1}) + 1)\lambda_n)(\sigma^2+(z_{i} m(z_{i}) + 1)\lambda_n)}  \A \P  dz_i dz_{i+1} = o_{\| \cdot \|}(1).
    \end{align*}

Ignoring items of vanishing spectral norms in the resulting two-by-two block matrix, we conclude the proof of \Cref{lem:matrix_prod}.
 \end{proof}
    


In the following, we ignore, for the sake of notational convenience, the arguments of $\bm{X}_1, \bm{X}_2, \bm{X}_3, \bm{X}_4$.

In \Cref{lem:matrix_prod} we treat the product of three matrices and the integral over $z_i, z_{i+1}$.
Consider now the product of four matrices and its integral over $z_i, z_{i+1}$ and $z_{i+2}$.
It thus follows from \Cref{lem:matrix_prod} that
\begin{align*}
    &\oint_{\Gamma_{k_{i}}}\oint_{\Gamma_{k_{i+1}}}\oint_{\Gamma_{k_{i+2}}}
    \prod_{j=i-1}^{i+2} \begin{bmatrix}
    \mathbf{H}_1(z_{j+1}) \bTheta_{k_{j+1},1}(z_j,z_{j+1}) & \mathbf{H}_2(z_{j+1}) \bTheta_{k_{j+1},2}(z_j,z_{j+1}) \\ 
    \mathbf{H}_2^\H(z_{j+1}) \bTheta_{k_{j+1},1}(z_j,z_{j+1}) & \mathbf{H}_3(z_{j+1}) \bTheta_{k_{j+1},2}(z_j,z_{j+1})
\end{bmatrix} dz_i dz_{i+1} dz_{i+2} \\
    &= \oint_{\Gamma_{k_{i}}}\oint_{\Gamma_{k_{i+1}}}\oint_{\Gamma_{k_{i+2}}}\begin{bmatrix}
        \bm{X}_1 & \bm{X}_2 \\ 
        \bm{X}_3 & \bm{X}_4
    \end{bmatrix} \begin{bmatrix}
        \mathbf{H}_1(z_{i+3}) \bTheta_{k_{i+3},1}(z_{i+2},z_{i+3}) & \mathbf{H}_2(z_{i+3}) \bTheta_{k_{i+3},2}(z_{i+2},z_{i+3}) \\ 
        \mathbf{H}_2^\H(z_{i+3}) \bTheta_{k_{i+3},1}(z_{i+2},z_{i+3}) & \mathbf{H}_3(z_{i+3}) \bTheta_{k_{i+3},2}(z_{i+2},z_{i+3})
    \end{bmatrix} dz_i dz_{i+1} dz_{i+2} + o_{\| \cdot \|}(1) \\
    &= \oint_{\Gamma_{k_{i}}}\oint_{\Gamma_{k_{i+1}}}\oint_{\Gamma_{k_{i+2}}}\begin{bmatrix}
        \bm{X}_1 \mathbf{H}_1(z_{i+3}) \bTheta_{k_{i+3},1}(z_{i+2},z_{i+3}) & \bm{X}_1 \mathbf{H}_2(z_{i+3}) \bTheta_{k_{i+3},2}(z_{i+2},z_{i+3}) \\ 
        \bm{X}_3 \mathbf{H}_1(z_{i+3}) \bTheta_{k_{i+3},1}(z_{i+2},z_{i+3}) & \bm{X}_3 \mathbf{H}_2(z_{i+3}) \bTheta_{k_{i+3},2}(z_{i+2},z_{i+3})
    \end{bmatrix} dz_i dz_{i+1} dz_{i+2} + o_{\| \cdot \|}(1),
\end{align*}
where, similar to the proof of \Cref{lem:matrix_prod}, we ignore 
all block matrices of vanishing spectral norm.

Repeating the above approximating procedure on \eqref{eq:hat_psi_loop_1}, we conclude that 
\begin{align*}
    \hat{\psi}_{k_1, \ldots, k_m}&= \left(\frac{1}{2 \pi \imath} \right)^{m}\oint_{\Gamma_{k_1}}\ldots \oint_{\Gamma_{k_m}}  \tr \left( \begin{bmatrix}
        \mathbf{H}_1(z_1) \bTheta_{k_1,1}(z_m,z_1) & \mathbf{H}_2(z_1) \bTheta_{k_1,2}(z_m,z_1) \\ 
        \mathbf{H}_2^\H(z_1) \bTheta_{k_1,1}(z_m,z_1) & \mathbf{H}_3(z_1) \bTheta_{k_1,2}(z_m,z_1)
    \end{bmatrix} \right.\\
    & \left. \times \prod_{i=1}^{m-1} \begin{bmatrix}
        \mathbf{H}_1(z_{i+1}) \bTheta_{k_{i+1},1}(z_i,z_{i+1}) & \mathbf{H}_2(z_{i+1}) \bTheta_{k_{i+1},2}(z_i,z_{i+1}) \\ 
        \mathbf{H}_2^\H(z_{i+1}) \bTheta_{k_{i+1},1}(z_i,z_{i+1}) & \mathbf{H}_3(z_{i+1}) \bTheta_{k_{i+1},2}(z_i,z_{i+1})
    \end{bmatrix} 
        \right) d z_1 \ldots d z_m + o(1) \\
        &=\left(\frac{1}{2 \pi \imath} \right)^{m}\oint_{\Gamma_{k_1}}\ldots \oint_{\Gamma_{k_m}}  \tr \left( \begin{bmatrix}
            \bm{\Xi}_1 &  \bm{\Xi}_2\\ 
            \bm{\Xi}_3 &  \bm{\Xi}_4
            \end{bmatrix} \right) d z_1 \ldots d z_m + o(1),
    \end{align*}
where
\begin{align*}
    &\bm{\Xi}_1 = \mathbf{H}_1(z_1) \bTheta_{k_1,1}(z_m,z_1) \left( \prod_{i=1}^{m-1} \mathbf{H}_1(z_{i+1}) \bTheta_{k_{i+1},1}(z_i,z_{i+1}) \right), \\
    &\bm{\Xi}_2 = \mathbf{H}_1(z_1) \bTheta_{k_1,1}(z_m,z_1) \left(\prod_{i=1}^{m-2} \mathbf{H}_1(z_{i+1}) \bTheta_{k_{i+1},1}(z_i,z_{i+1}) \right) \mathbf{H}_2(z_{m}) \bTheta_{k_m,2}(z_{m-1},z_m), \\
    &\bm{\Xi}_3= \mathbf{H}_2^\H(z_1) \bTheta_{k_1,1}(z_m,z_1) \left(\prod_{i=1}^{m-1} \mathbf{H}_1(z_{i+1}) \bTheta_{k_{i+1},1}(z_i,z_{i+1}) \right),  \\
    &\bm{\Xi}_4 = \mathbf{H}_2^\H(z_1) \bTheta_{k_1,1}(z_m,z_1) \left( \prod_{i=1}^{m-2} \mathbf{H}_1(z_{i+1}) \bTheta_{k_{i+1},1}(z_i,z_{i+1}) \right)  \mathbf{H}_2(z_{m}) \bTheta_{k_m,2}(z_{m-1},z_m).
\end{align*}

Then, we have
\begin{align*}
    &\hat{\psi}_{k_1, \ldots, k_m} = \left(\frac{1}{2 \pi \imath} \right)^{m}\oint_{\Gamma_{k_1}}\ldots \oint_{\Gamma_{k_m}}  \tr \left( \bm{\Xi}_1 +  \bm{\Xi}_4 \right) d z_1 \ldots d z_m\\
    &=\left(\frac{1}{2 \pi \imath} \right)^{m}\oint_{\Gamma_{k_1}}\ldots \oint_{\Gamma_{k_m}}  \tr \left(\mathbf{H}_1(z_1) \bTheta_{k_1,1}(z_m,z_1)\prod_{i=1}^{m-1} \mathbf{H}_1(z_{i+1}) \bTheta_{k_{i+1},1}(z_i,z_{i+1}) \right)d z_1 \ldots d z_m\\
    &+\left(\frac{1}{2 \pi \imath} \right)^{m}\oint_{\Gamma_{k_1}}\ldots \oint_{\Gamma_{k_m}} \tr \left( \mathbf{H}_2^\H(z_1) \bTheta_{k_1,1}(z_m,z_1) \left(\prod_{i=1}^{m-2} \mathbf{H}_1(z_{i+1}) \bTheta_{k_{i+1},1}(z_i,z_{i+1}) \right) \mathbf{H}_2(z_{m}) \bTheta_{k_m,2}(z_{m-1},z_m)\right) d z_1 \ldots d z_m \\
    &=\left(\frac{1}{2 \pi \imath} \right)^{m}\oint_{\Gamma_{k_1}}\ldots \oint_{\Gamma_{k_m}}  \tr \left(\mathbf{H}_1(z_1) \bTheta_{k_1,1}(z_m,z_1)\prod_{i=1}^{m-1} \mathbf{H}_1(z_{i+1}) \bTheta_{k_{i+1},1}(z_i,z_{i+1}) \right)d z_1 \ldots d z_m\\
    &+\left(\frac{1}{2 \pi \imath} \right)^{m}\oint_{\Gamma_{k_1}}\ldots \oint_{\Gamma_{k_m}} \tr \left(  \bTheta_{k_1,1}(z_m,z_1) \left(\prod_{i=1}^{m-2} \mathbf{H}_1(z_{i+1}) \bTheta_{k_{i+1},1}(z_i,z_{i+1}) \right) \mathbf{H}_2(z_{m}) \bTheta_{k_m,2}(z_{m-1},z_m)\mathbf{H}_2^\H(z_1)\right) d z_1 \ldots d z_m\\
    &=\left(\frac{1}{2 \pi \imath} \right)^{m}\oint_{\Gamma_{k_1}}\ldots \oint_{\Gamma_{k_m}}  \tr \left(\mathbf{H}_1(z_1) \bTheta_{k_1,1}(z_m,z_1)\prod_{i=1}^{m-1} \mathbf{H}_1(z_{i+1}) \bTheta_{k_{i+1},1}(z_i,z_{i+1}) \right)d z_1 \ldots d z_m +o(1).
\end{align*}
    where we used in the last line the similar approximation as in the proof of \Cref{lem:matrix_prod}.

    Expanding this product, we get
\allowdisplaybreaks
\begin{align*}
    &\hat{\psi}_{k_1, \ldots, k_m}=\left(\frac{1}{2 \pi \imath} \right)^{m}\oint_{\Gamma_{k_1}}\ldots \oint_{\Gamma_{k_m}}  \tr \Big(\frac{z_{1} m(z_{1}) + 1}{\sigma^2 m(z_{1})}  \left(\I_K+\frac{z_1 m(z_1)+1}{\sigma^2 }\P \A^\H \A \right)^{-1} \P \A^\H  \Big( m(z_{1}) m(z_{1}) \mathbf{M}_{k_1} +  \\
    & \eta(z_{m}, z_{1}) \frac1T \tr \mathbf{M}_{k_1} \cdot \I_N \Big) \A  \prod_{i=1}^{m-1} \frac{z_{i+1} m(z_{i+1}) + 1}{\sigma^2 m(z_{i+1})} \left(\I_K+\frac{1+z_{i+1} m(z_{i+1})}{\sigma^2}\P \A^\H \A \right)^{-1} \P \A^\H  \Big( m(z_{i}) m(z_{i+1}) \mathbf{M}_{k_{i+1}} \\
    &+ \eta(z_{i}, z_{i+1}) \frac1T \tr \mathbf{M}_{k_{i+1}} \cdot \I_N \Big) \A \Big)d z_1 \ldots d z_m +o(1)\\
    &= \left(\frac{1}{2 \pi \imath} \right)^{m}\oint_{\Gamma_{k_1}}\ldots \oint_{\Gamma_{k_m}}  \tr  \Big(\frac{z_{1} m(z_{1}) + 1}{\sigma^2 m(z_{1})}\P \A^\H \left(\I_K+\frac{z_1 m(z_1)+1}{\sigma^2 } \A\P \A^\H \right)^{-1}  \Big( m(z_{m}) m(z_{1}) \mathbf{M}_{k_1}\\
    &+ \eta(z_{m}, z_{1}) \frac1T \tr \mathbf{M}_{k_i} \cdot \I_N \Big)\times  \A \prod_{i=1}^{m-1} \frac{z_{i+1} m(z_{i+1}) + 1}{\sigma^2 m(z_{i+1})} \P \A^\H \left(\I_K+\frac{z_{i+1} m(z_{i+1}) + 1}{\sigma^2 } \A\P \A^\H \right)^{-1} \Big( m(z_{i}) m(z_{i+1}) \mathbf{M}_{k_{i+1}} \\
    & + \eta(z_{i}, z_{i+1}) \frac1T \tr \mathbf{M}_{k_{i+1}} \cdot \I_N \Big) \A \Big)d z_1 \ldots d z_m +o(1)\\
    &= \left(\frac{1}{2 \pi \imath} \right)^{m}\oint_{\Gamma_{k_1}}\ldots \oint_{\Gamma_{k_m}}  \tr \left(\frac{z_{1} m(z_{1}) + 1}{\sigma^2 m(z_{1})}  \U \mathbf{L}\left(\I_K+\frac{z_1 m(z_1)+1}{\sigma^2 } \mathbf{L} \right)^{-1} \left( m(z_{m}) m(z_{1}) \mathbf{M}_{k_1} + \eta(z_{m}, z_{1}) \frac1T \tr \mathbf{M}_{k_1} \cdot \I_N \right)\right. \\
    & \times \left. \prod_{i=1}^{m-1} \frac{z_{i+1} m(z_{i+1}) + 1}{\sigma^2 m(z_{i+1})}\U \mathbf{L}\left(\I_K+\frac{z_{i+1} m(z_{i+1}) + 1}{\sigma^2 }  \mathbf{L} \right)^{-1} \left( m(z_{i}) m(z_{i+1}) \mathbf{M}_{k_{i+1}} + \eta(z_{i}, z_{i+1}) \frac1T \tr \mathbf{M}_{k_{i+1}} \cdot \I_N \right)  \right)\\
    &d z_1 \ldots d z_m +o(1)\\
    &=\left(\frac{1}{2 \pi \imath} \right)^{m}\oint_{\Gamma_{k_1}}\ldots \oint_{\Gamma_{k_m}}  \tr \left( \frac{z_{1} m(z_{1}) + 1}{\sigma^2 m(z_{1})}\sum_{n=1}^K \frac{\sigma^2 \lambda_n \uu_n \uu_n^\H}{\sigma^2 +(z_{1} m(z_{1}) + 1)\lambda_n} \left( m(z_{m}) m(z_{1}) \mathbf{M}_{k_1} + \eta(z_{m}, z_{i}) \frac1T \tr \mathbf{M}_{k_1} \cdot \I_N \right) \right. \\
    &\times  \left. \prod_{i=1}^{m-1} \frac{z_{i+1} m(z_{i+1}) + 1}{\sigma^2 m(z_{i+1})}\sum_{n=1}^K \frac{\sigma^2 \lambda_n \uu_n \uu_n^\H}{\sigma^2 +(z_{i+1} m(z_{i+1}) + 1)\lambda_n} \left( m(z_{i}) m(z_{i+1}) \mathbf{M}_{k_{i+1}} + \eta(z_{i}, z_{i+1}) \frac1T \tr \mathbf{M}_{k_{i+1}} \cdot \I_N \right) \right) d z_1 \ldots d z_m +o(1)\\
    &=\left(\frac{1}{2 \pi \imath} \right)^{m}\oint_{\Gamma_{k_1}}\ldots \oint_{\Gamma_{k_m}}  \tr \left(\frac{z_1 m(z_1) + 1}{\sigma^2 m(z_1)} \frac{\lambda_1 \uu_1 \uu_1^\H}{\sigma^2 +(z_{1} m(z_{1}) + 1)\lambda_1} \left( m(z_{m}) m(z_{1}) \mathbf{M}_{k_1} + \eta(z_{m}, z_{1}) \frac1T \tr \mathbf{M}_{k_1} \cdot \I_N \right) \right. \\
    & \times  \left. \prod_{i=1}^{m-1} \frac{z_{i+1} m(z_{i+1}) + 1}{\sigma^2 m(z_{i+1})} \frac{\lambda_i \uu_i \uu_i^\H}{\sigma^2 +(z_{i+1} m(z_{i+1}) + 1)\lambda_i} \left( m(z_{i}) m(z_{i+1}) \mathbf{M}_{k_{i+1}} + \eta(z_{i}, z_{i+1}) \frac1T \tr \mathbf{M}_{k_{i+1}} \cdot \I_N \right) \right)d z_1 \ldots d z_m +o(1)\\\\
    &=\left(\frac{1}{2 \pi \imath} \right)^{m}\oint_{\Gamma_{k_1}}\ldots \oint_{\Gamma_{k_m}}  \tr \left( \frac{ m(z_{m})(z_1 m(z_1) + 1) \lambda_1 \uu_1 \uu_1^\H}{\sigma^2 +(z_{1} m(z_{1}) + 1)\lambda_1}  \mathbf{M}_{k_1}\prod_{i=1}^{m-1}  \frac{m(z_{i})(z_{i+1} m(z_{i+1}) + 1) \lambda_i \uu_i \uu_i^\H}{\sigma^2 +(z_{i+1} m(z_{i+1}) + 1)\lambda_i}   \mathbf{M}_{k_{i+1}} \right)d z_1 \ldots d z_m +o(1)\\ \\
    &=\left(\frac{1}{2 \pi \imath} \right)^{m}\oint_{\Gamma_{k_1}}\ldots \oint_{\Gamma_{k_m}} \tr \left( \prod_{i=1}^{m} \frac{m(z_i)(z_i m(z_i) + 1) \lambda_i}{\sigma^2 +(z_{i} m(z_{i}) + 1)\lambda_i}  \uu_i \uu_i^\H \mathbf{M}_{k_i} \right)d z_1 \ldots d z_m +o(1)\\ \\
    &= \prod_{i=1}^{m} \lim_{z \to \bar{\lambda}_i}\frac{(z_i-\bar{\lambda}_i)m(z_i)(z_i m(z_i) + 1) }{\sigma^2 \lambda_i^{-1}+z_{i} m(z_{i}) + 1} \tr (\uu_i \uu_i^\H \mathbf{M}_{k_i} )  + o(1) \\
    &= (g_1 \ldots g_m)  \uu_{k_m}^\H \mathbf{M}_{k_1} \uu_{k_1} \times \uu_{k_1}^\H \mathbf{M}_{k_2}  \uu_{k_2} \times \ldots \times \uu_{k_{m-1}}^\H \mathbf{M}_{k_m}  \uu_{k_m} = \bar{\psi}_{k_1, \ldots, k_m} +o(1),
\end{align*}
by residue calculus.
This conclude the proof of \Cref{theo:CDT}.




\subsection{Proof of \Cref{lem:entrywise_bphi}}
\label{subsec:proof_of_lem:entrywise_bphi}
Before proceeding, we note that $\det(\hat{\bm{\Phi}}_1)$ is eventually bounded away from zero almost surely. 
Indeed, $\bm{\Phi}_1$ is assumed to be invertible in the ESPRIT formulation in (9), and hence $\bar{\bm{\Phi}}_1$ defined in (19) is also nonsingular for $g_k \in (0,1)$ and $\tau > 0$. 
Since $\hat{\bm{\Phi}}_1$ converges to $\bar{\bm{\Phi}}_1$, the continuity of the determinant yields $\det(\hat{\bm{\Phi}}_1) \to \det(\bar{\bm{\Phi}}_1) \neq 0$ almost surely. Therefore, there exists a constant $d>0$ such that $|\det(\hat{\bm{\Phi}}_1)| \ge d$ almost surely for all sufficiently large $N,T$.
Define the set of indices $ \mathcal{I}_m: 1\leq i_1 < \ldots < i_m \leq K$, and the permutations of $ \mathcal{I}_m$ as $\sigma \colon \{ i_1, \ldots, i_m \} \to \{ i_1, \ldots, i_m \}$.
Then, \emph{any} off-diagonal entry $[\hat \bPhi]_{i_1 i_2}$ of $\hat \bPhi$ can be rewritten as the following combination involving the product of the entries of \(\hat \bPhi_1, \hat \bPhi_2 \) as
    \begin{align}\label{eq:hatphi-i1i2}
        &[\hat \bPhi]_{i_1 i_2}  = \sum_{k=1}^{K} [\hat \bPhi_1^{-1}]_{i_1 k} [\hat \bPhi_2]_{k i_2}\\
        &= \sum_{k=1}^{K} \frac{(-1)^{i_1+k}}{\det(\hat \bPhi_1)}\sum_{\tilde{\sigma}}\sign(\tilde{\sigma}) \prod_{p \in \mathcal I_k^{'}, q \in \mathcal{I}^{'}_{i_1}} [\hat \bPhi_1]_{p q} [\hat \bPhi_2]_{k i_2},
    \end{align}
where we use the adjugate matrix to represent the inverse \( \hat \bPhi_1^{-1} \) in the second line, with $\mathcal I_k^{'} \equiv \mathcal I_m \backslash \{k\}, \mathcal I_{i_1}^{'} \equiv \mathcal I_m \backslash \{i_1\}$.

Note that the map $\tilde{\sigma}\colon\mathcal I_k^{'} \to \mathcal I_{i_1}^{'}$ can be obtained from $\sigma$ by deleting the vertex from node $k$ to $i_1$, and can thus be decomposed into the product of several cycles \emph{and} the open path from $i_1$ to $k$.
See \Cref{example:circle_decom} in \Cref{sec:techLemmas} for a concrete example of such decomposition.
Under the same notations of \Cref{theo:CDT}, Equation~\eqref{eq:hatphi-i1i2} can be written as
\begin{align*}
    [\hat \bPhi]_{i_1 i_2} &= \sum_{k=1}^{K} \frac{(-1)^{i_1+k}}{\det(\hat \bPhi_1)}\sum_{\tilde{\sigma}}\sign(\tilde{\sigma}) \prod_{p \in \mathcal I_k^{'}, q \in \mathcal{I}^{'}_{i_1}} [\hat \bPhi_1]_{p q} [\hat \bPhi_2]_{k i_2}\\
    &= \sum_{k=1}^{K} \frac{(-1)^{i_1+k}}{\det(\hat \bPhi_1)}\sum_{\tilde{\sigma}}\sign(\tilde{\sigma}) \hat{\psi}^{\text{cl}} \cdot \hat{\psi}^{\text{op}}_{(i_1,i_2)} ,
\end{align*}
where $\hat{\psi}^{\text{cl}}$ represents the product of off-diagonal entries of $\hat \bPhi_1,\hat \bPhi_2$ whose indices can form several disjoint circles, and $\hat{\psi}^{\text{op}}_{(i_1,i_2)}$ represents the product of entries whose indices start from $i_1$ and end at $i_2$.
Then, we have 
\begin{align*}
    &[\hat \bPhi]_{i_1 i_2}[\hat \bPhi]_{i_2 i_3}\ldots[\hat \bPhi]_{i_m i_1}\\
    & = \sum_{k=1}^{K} \frac{(-1)^{i_1+k}}{\det(\hat \bPhi_1)}\sum_{\tilde{\sigma}_1}\sign(\tilde{\sigma}_1) \ldots \sum_{k=1}^{K} \frac{(-1)^{i_m+k}}{\det(\hat \bPhi_1)}\sum_{\tilde{\sigma}_m}\sign(\tilde{\sigma}_m) \hat{\psi}^{\text{cl}} \cdot \hat{\psi}^{\text{op}}_{(i_1,i_2)}  \hat{\psi}^{\text{op}}_{(i_2,i_3)} \ldots  \hat{\psi}^{\text{op}}_{(i_m,i_1)} \\
    & =  \sum_{k=1}^{K} \frac{(-1)^{i_1+k}}{\det(\hat \bPhi_1)}\sum_{\tilde{\sigma}_1}\sign(\tilde{\sigma}_1) \ldots \sum_{k=1}^{K} \frac{(-1)^{i_m+k}}{\det(\hat \bPhi_1)}\sum_{\tilde{\sigma}_m}\sign(\tilde{\sigma}_m) \prod  \hat{\psi}^{\text{cl}}_i,
\end{align*}
where the indices of the elements in these open paths are connected end to end, forming again circles.
At this point, we conclude that $[\hat \bPhi]_{i_1 i_2}[\hat \bPhi]_{i_2 i_3}\ldots[\hat \bPhi]_{i_m i_1}$
can be expressed as a sum of products of a finite number of pairwise disjoint cycles composed of elements of $\hat \bPhi_1$ and $\hat \bPhi_2$.
This, combined with \Cref{theo:CDT}, concludes the proof of \Cref{lem:entrywise_bphi}.

\subsection{Proof of \Cref{rem:N-con_for_two}}
\label{subsec:proof_of_rem:N-con_for_two}

Note that in the case of $K = 2$ DoAs, the two complex eigenvalues of $\hat{\bPhi}^G$ can be explicitly and compactly given by its trace and determinant as $\lambda_\pm (\hat{\bPhi}^G) =\frac12 \left(\tr(\hat{\bPhi}^G)  \pm \sqrt{\tr^2(\hat{\bPhi}^G)-4 \det(\hat{\bPhi}^G)} \right)$.
It is known (see, for example \cite[Section~2.7]{couillet2022RMT4ML} and \Cref{lem:DE} in \Cref{sec:DE}) that for random matrix $\Z \in \CC^{N \times T}$ having i.i.d.\@ $\mathcal{CN}(0,1)$ entries, its expected resolvent $\EE[\Q(z)] = \EE[(\Z \Z^\H/T - z \I_N)^{-1}]$ can be well approximated by the deterministic equivalent $\bar{\mathbf{Q}}$ in a spectral norm sense $\lVert \EE[\mathbf{Q}] -\bar{\mathbf{Q}} \rVert = O (N^{-1/2})$.
This, together with \Cref{theo:consiseigen}, yields that $\tr (\hat{\bPhi}^G) = \tr (\bPhi) + O(N^{-1/2}), \det (\hat{\bPhi}^G) = \det (\bPhi) + O(N^{-1/2})$, so that
\begin{equation}
    \lambda_\pm(\hat{\bPhi}^G) =  \lambda_\pm(\bPhi)+O(N^{-1/2}).
\end{equation}
Recall that $\Delta \hat{\theta}_k^G = \arctan(\Im[\lambda_k(\hat{\bPhi}^G)]/ \Re[\lambda_(\hat{\bPhi}^G)])$, a Taylor expansion allows us to conclude that $\Delta \theta_k^G = \Delta \theta_k + O(N^{-1/2})$, so that
\begin{equation}
    \hat{\theta}_k^G - \theta_k = O(N^{-3/2}),
\end{equation}
for a distance $\Delta$ of order $N$ and $K=2$.
\end{appendices}

\end{document}